\newtheorem{thm}{Theorem}
\newtheorem{duplicate}{Theorem}
\newtheorem{rmk}{Remark}
\newenvironment{proof}[1][Proof]
              {\par \normalfont
              \trivlist
             \item[
               \hspace{12pt}               \itshape #1{.}]\ignorespaces
        }{\hfill$\Box$ \endtrivlist}
\newcommand{\indic}{\mathbbm{1}}
\DeclareMathOperator{\bbE}{\mathbb{E}}
\DeclareMathOperator{\sign}{\mathrm{sign}}
\newcommand{\e}{{\mathrm e}}
\newcommand{\ds}{\displaystyle}
\newcommand{\gtpdf}{{w_G}}
\newcommand{\gtrv}{{W_G}}
\newcommand{\gtrvt}{{\widetilde{W}_G}}
\newcommand{\gtrvz}[1]{{\Psi_{#1}}}
\newcommand{\gtrvzt}[1]{{\Phi_{#1}}}
\newcommand{\gtrvnt}[1]{{\widetilde{W}_G^{(#1)}}}
\newcommand{\gtrvznt}[2]{{\Phi_{#1}^{(#2)}}}
\newcommand{\gtpdfij}{{w_G^{(ii')}}}
\newcommand{\ev}{\mathcal{D}}
\newcommand{\pp}[1]{\left({#1}\right)} %brackets
\renewcommand{\b}[1]{\left[{#1}\right]} %brackets
\newcommand{\thp}[1]{\textit{(#1)}} % To refer to part of a Theorem
\newcommand{\fpc}[1]{#1)} % To refer to a subfigure of a figure in its caption
\newcommand{\fpt}[1]{#1} % To refer to a subfigure in a figure in main text
\begin{document}

%\begin{frontmatter}

\title{Reproduction numbers for epidemic models with households and other social structures II: comparisons and implications for vaccination}

\author{Frank Ball$^{1}$, Lorenzo Pellis$^{2}$ and Pieter Trapman$^{3}$}
\footnotetext[1]{School of Mathematical Sciences, University of Nottingham,
University Park, Nottingham, NG7 2RD, UK,}
\footnotetext[2]{Warwick Infectious Disease Epidemiology Research centre (WIDER) and Warwick
Mathematics Institute, University of Warwick, Coventry, CV4 7AL, UK,}
\footnotetext[3]{Department of Mathematics, Stockholm University, Stockholm 106 91, Sweden.}
\date{\today}
\maketitle

\begin{abstract}
In this paper we consider epidemic models of directly transmissible SIR (susceptible $\to$ infective $\to$ recovered) and SEIR (with an additional latent class) infections in fully-susceptible populations with a social structure, consisting either of households or of households and workplaces. We review most reproduction numbers defined in the literature for these models, including the basic reproduction number $R_0$ introduced in the companion paper of this, for which we provide a simpler, more elegant derivation. Extending previous work, we provide a complete overview of the inequalities among these reproduction numbers and resolve some open questions.
Special focus is put on the exponential-growth-associated reproduction number $R_r$, which is loosely defined as the estimate of $R_0$ based on the observed exponential growth of an emerging epidemic obtained when the social structure is ignored. We show that for the vast majority of the models considered in the literature $R_r \geq R_0$ when $R_0 \ge 1$ and $R_r \leq R_0$ when $R_0 \le 1$.
We show that, in contrast to models without social structure, vaccination of a fraction $1-1/R_0$ of the population, chosen uniformly at random, with a perfect vaccine is usually insufficient to prevent large epidemics. In addition, we provide significantly sharper bounds than the existing ones for bracketing the critical vaccination coverage between two analytically tractable quantities, which we illustrate by means of extensive numerical examples.
\end{abstract}

%\end{frontmatter}

%\linenumbers

%%%%%%%%%%%%%%%%%%%%%%%%%%%%%%%%%%%%%%%%%%%%%%%%%%%%%%%%%%%%%%%
\section{Introduction}
\label{sec:Introduction}

The basic reproduction number $R_0$ is arguably the most important epidemiological parameter because of its clear biological interpretation and its properties: in the simplest epidemic models, where individuals are all identical, mix homogeneously, the population is large and the initial number of infectives is small, (i) a large epidemic is possible if and only if $R_0>1$ (threshold property), (ii) when $R_0 > 1$, vaccinating  a fraction $1-1/R_0$ of individuals chosen uniformly at random -- or, equivalently, isolating the same fraction of infected individuals before they have the chance to transmit further -- is sufficient to prevent a large outbreak (critical vaccination coverage) and (iii) the fraction of the population infected by a large epidemic depends only on $R_0$. The definition of $R_0$ is straightforward in single-type homogeneously mixing models and has been successfully extended to multitype models (see Diekmann et al.~\cite{DiekEtal2013}, Chapter 7).

In our earlier paper, we showed how to extend the definition of $R_0$ to many models with a social structure, namely the households models and certain types of network-households and households-workplaces models (Pellis et al.~\cite{PelBalTra2012}). The extension proposed there aims at preserving both the biological interpretation of $R_0$ as the average number of cases a typical individual generates early on in the epidemic and its threshold property. However, already in the case of multitype populations the simple relationship between  $R_0$ and the epidemic final size no longer holds. In this paper we show that, for models involving mixing in small groups, also the simple relationship between $R_0$ and the critical vaccination coverage breaks down. In particular, we find that vaccinating a fraction $1-1/R_0$ of the population is generally insufficient to prevent a major outbreak. This result stems from a series of inequalities which extend the work done by
Goldstein et al.~\cite{GoldsteinEtal2009}, and leads to sharper bounds for the critical vaccination coverage than previously available.

The definition of  $R_0$ given in \cite{PelBalTra2012} may be described briefly for an SIR (susceptible $\to$ infective $\to$ recovered) epidemic in a closed population as follows.  Consider the epidemic graph (see \cite{PelBalTra2012}, Section 1, and Section~\ref{Hmodel} of this paper), in which vertices correspond to individuals in the population and for any ordered pair of distinct individuals, $(i,i')$ say, there is a directed edge from $i$ to $i'$ if and only if $i$, if infected, makes at least one infectious contact with $i'$ (see Figure \ref{epigraphfig}).  Suppose that initially there is one infective and the remainder of the population is susceptible.  The initial infective is said to belong to generation $0$ (say, individual 0 in Figure \ref{epigraphfig}).  Any other individual, $i$ say, becomes infected if and only if in the epidemic graph there is a chain of directed edges from the initial infective to individual $i$, and in that case the generation of $i$ is defined to be the number of edges in the shortest such chain.  Thus, generation 1 consists of those individuals with whom the initial infective has at least one infectious contact (individuals 1 and 2 in Figure \ref{epigraphfig}), generation 2 consists of those individuals that are contacted by at least one generation-$1$ infective but not by the initial infective (individuals 4 and 5 in Figure \ref{epigraphfig})
%\note{PT added individual 4 also changes in following.}
and so on.  For $k=0,1,\cdots$, let $X_k^{(N)}$ denote the the number of generation-$k$ infectives, where $N$ denotes the population size. Thus, in  Figure \ref{epigraphfig}, $X_0^{(6)} =1$, $X_1^{(6)} =2$, $X_2^{(6)}=2$, $X_3^{(6)}=1$ and $X_k^{(6)}=0$ for $k \ge 4$. Then $R_0$ is defined by
\begin{equation}
\label{R0defn}
R_0=\lim_{k \to \infty}\lim_{N \to \infty} \left(\bbE\left[X_k^{(N)}\right]\right)^{1/k},
\end{equation}
i.e.~by the limit, as the population size tends to infinity, of the asymptotic geometric growth rate of the mean generation size \cite{PelBalTra2012}.

\begin{figure}[!ht] \begin{center}
\includegraphics[width=.3\textwidth]{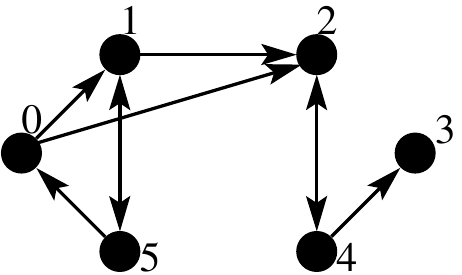}
 \caption{Example of epidemic graph in a population of size $N=6$.}
\label{epigraphfig}\end{center}
\end{figure}

For single- and multi-type unstructured populations the value of $R_0$ obtained using~\eqref{R0defn} coincides with that obtained using the usual definition as ``the expected number of secondary cases produced by a typical infected individual during its entire infectious period in a population consisting of susceptibles only'' (see Heesterbeek and Dietz~\cite{HeestDietz1996}).  (Note that, for fixed $k$, as $N\to\infty$ the epidemic process converges to a Galton-Watson branching process, i.e.\ we consider a linear approximation of the early phase of the epidemic.)
However, unlike the usual definition of $R_0$, definition~\eqref{R0defn} extends naturally to models with small mixing groups, such as the households and households-workplaces models.  In Pellis et al.~\cite{PelBalTra2012}, $R_0$ for these two models was obtained by exploiting difference equations describing variables related to the mean generation sizes.  In the present paper, we show that $R_0$ for these models may be obtained more easily from the discrete-time Lotka-Euler equation (cf.\ Equation \eqref{g0H}) that describes the asymptotic (Malthusian) geometric growth rate of the mean population size of an associated branching process, which approximates the early phase of the epidemic.

%For single- and multi-type unstructured populations the value of $R_0$ obtained using~\eqref{R0defn} coincides with that obtained using the usual definition as ``the expected number of secondary cases produced by a typical infected individual during its entire infectious period in a population consisting of susceptibles only'' (see Heesterbeek and Dietz~\cite{HeestDietz1996}).  However, unlike the usual definition of $R_0$, definition~\eqref{R0defn} extends naturally to models with small mixing groups, such as the households and households-workplaces models.  In Pellis et al.~\cite{PelBalTra2012}, $R_0$ for these two models was obtained by exploiting difference equations describing variables related to the mean generation sizes.  In the present paper, we show that $R_0$ for these models may be obtained more easily from the discrete-time Lotka-Euler equation (cf. Equation \eqref{g0H}) that describes the asymptotic (Malthusian) geometric growth rate of the mean population size of 
%an associated branching process. 
%{\color{red}the branching process that provides a linear approximation to the early phase of the epidemic (i.e.\ for fixed $k$, $X_k^{(N)}$ converges to the $k$th generation of this branching process as $N\to\infty$).
%Note that, for fixed $k$, as $N\to\infty$ the epidemic process converges to a Galton-Watson branching process, i.e.\ we consider a linear approximation of the early phase of the epidemic.
%}

Note that the construction of the epidemic graph, and therefore most of the work of \cite{PelBalTra2012} and of this paper is based on the assumption that the behaviour of any infected individual can be decided before the epidemic starts. This is a common assumption in epidemic modelling, but it is quite a restrictive one. As noted by Pellis et al.~\cite{PelFerFra2008}, this condition is violated when the infectious behaviour of an individual depends on the time when he/she is infected (for example, if the number of other infectives at the time of infection matters or if a control policy is implemented at a certain time) and, in multi-type populations, on the type of the infector.
Theoretically, \eqref{R0defn} and all results in this paper require only that the epidemic admits a description in terms of generations of infection, which seems biologically plausible for most epidemic models. However, analytical progress is limited without invoking the assumption above.

In Section \ref{sec:house} we study reproduction numbers for the households model in great detail: in Sections \ref{Hmodel} and \ref{subsec:R0}, we introduce the households model and provide a simpler, more elegant derivation of the basic reproduction number $R_0$ than that presented in Pellis et al.~\cite{PelBalTra2012}; we then review the vast majority of the reproduction numbers defined in the literature for the households model in the remainder of Section \ref{sec:house} and we formulate our main results in Theorems \ref{Hcomp} and \ref{propos} in Section \ref{sec:HComparisons}, where virtually all comparisons are carefully examined and new, sharper bounds on the critical vaccination coverage are obtained.
For ease of reference, Table \ref{tab:RsH} collects all the households reproduction numbers with a reference to where they are discussed, and Table \ref{tab:Inequalities} %and \ref{tab:Declining} 
summarises known and novel results, again with appropriate references. 
In Sections \ref{sec:housework} and \ref{sec:HWComparisons} we define and compare reproduction numbers for models with households and workspaces. Here we again provide a new and simpler derivation of $R_0$ than in~\cite{PelBalTra2012}.
Reproduction numbers are collected in Table \ref{tab:RsHW} and the inequalities among them are reported in Theorem \ref{Theorem 5.2} and in the extension of Theorem \ref{propos} to the households-workplaces model.
Extensive numerical illustrations are presented in Section \ref{sec:numerical}, while in Section  \ref{sec:proofs} we provide the proofs of the comparisons presented in Sections \ref{sec:HComparisons} and \ref{sec:HWComparisons}.  Section \ref{sec:conclusions} is devoted to comments and conclusions. We summarise the main notation used in the paper in Table \ref{tab:symbf}.

%\note{PT made some changes in tables, I also prefer to have $\hat{R}_{HI}$ and $\bar{R}_{HI}$ on one line and replace ''Alternative definition`` b ''variant of``.}

\LTcapwidth=\textwidth
\begin{longtable}{@{}cp{.7\textwidth}c@{}} 
\caption{Reproduction numbers for the \textbf{households} model (analogues of $R_0$, $R_\ast$, $R_I$, $R_V$, $R_{H\!I}$, $R_2$, $\hat{R}_{H\!I}$ and $\hat{R}_2$ are considered also for the network-households model).} \\
\toprule
%\multicolumn{3}{c}{\textbf{Reproduction numbers for the households model}\textsuperscript{\textdagger}} \\ \cmidrule{1-3}
Symbol & \centering Meaning & Section \\ \midrule
$R_0$ & Basic reproduction number (by default, based on rank generation numbers) & \ref{subsec:R0} \\
$R_\ast$ & Household reproduction number & \ref{subsec:R*} \\  
$R_I$ & Individual reproduction number & \ref{subsec:RI} \\ 
$R_{H\!I}$ & %Another 
Individual reproduction number & \ref{subsec:RHI} \\ 
$\hat{R}_{H\!I}$, $\bar{R}_{H\!I}$ & Variants of $R_{H\!I}$ from \cite{GoldsteinEtal2009} ($\bar{R}_{H\!I}$ is \emph{not} a threshold parameter)%(coinciding with $R_{H\!I}$ for equal household sizes but with unsatisfactory ordering properties for unequal household sizes)
& \ref{subsec:RHI} \\ 
%$\hat{R}_{H\!I}$ & Alternative definition of $R_{H\!I}$ from \cite{GoldsteinEtal2009} %(coinciding with $R_{H\!I}$ for equal household sizes but with unsatisfactory ordering properties for unequal household sizes) 
%& \ref{subsec:RHI} \\ 
%$\bar{R}_{H\!I}$ & Alternative definition of $R_{H\!I}$ from \cite{GoldsteinEtal2009} %(coinciding with $R_{H\!I}$ for equal household sizes but not a threshold parameter for unequal household sizes: could be $\le 1$ in a growing epidemic) 
%& \ref{subsec:RHI} \\ 
$R_2$ & %Another 
Individual reproduction number &  \ref{subsec:R2} \\ 
$\hat{R}_2$ & Variant of $R_2$ from \cite{BalSirTra2010}%(coinciding with $R_2$ for equal household sizes but with unsatisfactory ordering properties for unequal household sizes) 
&  \ref{subsec:R2} \\ 
$R_V$ & Perfect vaccine-associated reproduction number &  \ref{subsec:RV} \\ 
$R_{V\!L}$ & Leaky vaccine-associated reproduction number &  \ref{subsec:RV} \\ 
%$R_{V\!A}$ & Approximation of $R_V$ considered in \cite{GoldsteinEtal2009}, identical to $R_I$ &  \\ 
$R_r$ & Exponential-growth-associated reproduction number &  \ref{subsec:Rr} \\ 
$\widetilde{R}_r$ & Variant of $R_r$
%Approximation of $R_r$ 
%(exact for $n_H\le 2$ or, for the Reed-Frost model, for $n_H\le 3$) 
&  \ref{subsec:Rr} \\ 
$R_0^\mathrm{r}$ & Basic reproduction number based on rank generation numbers &  \ref{Hmodel}, \ref{subsubsec:genviewcomp} \\ 
$R_0^\mathrm{g}$ & Basic reproduction number based on true generation numbers &  \ref{Hmodel}, \ref{subsubsec:genviewcomp} \\ 
$R_A, R_B$ & Generic reproduction numbers & \ref{subsubsec:genviewcomp}  \\ 
\bottomrule
%\midrule
%\addlinespace
%\multicolumn{3}{p{\textwidth}}{\textsuperscript{\textdagger}\footnotesize{Analogues of $R_0$, $R_\ast$, $R_I$, $R_V$, $R_{H\!I}$, $R_2$, $\hat{R}_{H\!I}$ and $\hat{R}_2$ are considered also for the network-households model.}} \\
%\addlinespace
\label{tab:RsH}

\end{longtable}

\newpage

\LTcapwidth=\textwidth
\begin{longtable}{@{}p{.75\textwidth}l@{}} 
\caption{Existing, newly proved and conjectured inequalities.} 
%holding for  respectively \textbf{growing} epidemic.} 
\\
\toprule
\centering Result for growing epidemics &  \parbox{.18\textwidth}{\centering Reference} \\ \midrule
%$R_{\ast} > R_I \ge R_V \ge R_0 \;\ge\text{(conjecture)\, R_2} > R_{H\!I} > 1$ & Thm \ref{Hcomp} \& App~\ref{app:R0R2comp} \\
$R_{\ast} > R_I \ge R_V \ge R_0 \;(\ge R_2,\:\text{conjecture}) > R_{H\!I} > 1$ & Thm \ref{Hcomp} \& App~\ref{app:R0R2comp} \\
%$R_{\ast} > R_I \ge R_V \ge R_0 \;(\ge R_2) > R_{H\!I} > 1$ & Thm \ref{Hcomp} \\
%$R_0 \ge R_2$ (conjecture)  & App~\ref{app:R0R2comp} \\
%\multicolumn{2}{l}{\quad if conjecture $R_0\ge R_2$ holds $\quad\Rightarrow\quad R_{\ast} > R_I \ge R_V \ge R_0 \ge R_2 > R_{H\!I} > 1$ }\\
$R_I \ge R_2 > R_{H\!I} > 1$ & Thm \ref{Hcomp} \\
%$\hat{R}_{H\!I} \ge R_{H\!I} > 1$ & App~\ref{app:R0RHIcomp} \\
$\hat{R}_{H\!I} \ge \bar{R}_{H\!I}$ %(always)
(%always; 
$\bar{R}_{H\!I}$ \emph{not} a threshold parameter) 
& Eq A4.1 of \cite{GoldsteinEtal2009}\\
$R_{H\!I} \ge \bar{R}_{H\!I}$ 
%always, i.e.\ also in a declining epidemic, but note that & \\
%\quad $\bar{R}_{H\!I}$ is \emph{not} a reproduction number ($\bar{R}_{H\!I}$ and 1 cannot be ordered) 
& App~\ref{app:R0RHIcomp} \\
$\hat{R}_{H\!I} \ge R_{H\!I}$, but $\hat{R}_{H\!I}$ and $R_0$ cannot be ordered 
& App~\ref{app:R0RHIcomp} \\ 
$R_0^\mathrm{r} \ge R_0^\mathrm{g}$ & Sec \ \ref{subsubsec:genviewcomp} \\
%$R_{\ast} > R_{V\!L} \ge R_V \ge \bar{R}_{H\!I}$ & Thm 1 of \cite{GoldsteinEtal2009} \\
%$R_{\ast} > R_{V\!L} \ge R_V$, but $R_{V\!L}$ and $R_I$ cannot be ordered 
%& App~\ref{app:RIRVLcomp} \\ 
$R_{\ast} > R_{V\!L} \ge R_V \ge \bar{R}_{H\!I}$, but $R_{V\!L}$ and $R_I$ cannot be ordered
& Thm 1 of \cite{GoldsteinEtal2009} \& App~\ref{app:RIRVLcomp} \\ 
$R_\ast \ge R_r$ & Thm 1 of \cite{GoldsteinEtal2009} \\
%$R_r$ and $R_0$ cannot be ordered, but in the following special cases: & \\
%\quad Thm \ref{propos}\thp{b}: $R_r \ge R_0$; and Thm \ref{propos}\thp{c}: $R_r \ge \tilde{R}_r \ge R_0$ & Theorem \ref{propos}\\
%$R_r \ge R_0$ in important special cases, but not in general & Theorem \ref{propos}\\
$R_r \ge R_0$ in most commonly used models, but not in general & Thm \ref{propos}\\
%$\widetilde{R}_r \ge R_0$ in most commonly used models, but not in general & Thm \ref{propos}\\
%$R_r \ge \widetilde{R}_r$ in important special cases, but not in general & Thm \ref{propos}\\
$R_r \ge \widetilde{R}_r \ge R_0 $ in important special cases, but not in general & Thm \ref{propos}\\
$R_r$ and $R_{V\!L}$ cannot be ordered & App~\ref{app:RrRVLcomp} \\
$R_r$ and $R_V$ cannot be ordered & Sec~\ref{nonrandHM} \\
$R_r$ (or $\widetilde{R}_r$) and $R_I$ cannot be ordered & Sec~\ref{MarkovSIRHmod} \\
%$\bar{R}_{H\!I}$ and 1 cannot be ordered & Sec.\ 4.2 of \cite{GoldsteinEtal2009} \\
\midrule
\centering Result for declining epidemics &  \parbox{.18\textwidth}{\centering Reference} \\ \midrule
$R_{\ast} < R_I \le R_0 \;\le\text{(conjecture)}\, R_2 < R_{H\!I} < 1$ & Thm \ref{Hcomp} \& App~\ref{app:R0R2comp} \\
%$R_{\ast} < R_I \le R_0 < R_{H\!I} < 1$ & Thm \ref{Hcomp} \\
%$R_I \le R_2 < R_{H\!I} < 1$ & Thm \ref{Hcomp} \\
%$\hat{R}_{H\!I} \le R_{H\!I} < 1$ & App~\ref{app:R0RHIcomp} \\
$\hat{R}_{H\!I} \ge \bar{R}_{H\!I}$ %(always) 
& Eq A4.1 of \cite{GoldsteinEtal2009}\\
$R_{H\!I} \ge \bar{R}_{H\!I}$ %(always)%; $\bar{R}_{H\!I}$ \emph{not} a reproduction number) 
& App~\ref{app:R0RHIcomp} \\
$\bar{R}_{H\!I} \le \hat{R}_{H\!I} \le R_{H\!I} < 1$, but $\bar{R}_{H\!I}$ and $\hat{R}_{H\!I}$ cannot be ordered with $R_0$ 
& App~\ref{app:R0RHIcomp} \\ 
%$\bar{R}_{H\!I} \le \hat{R}_{H\!I} \le R_{H\!I}$, but $\bar{R}_{H\!I}$ and $\hat{R}_{H\!I}$ cannot be ordered with $R_0$ 
%& App~\ref{app:R0RHIcomp} \\ 
$R_\ast \le R_r$ & Sec~\ref{subsubsec:Rrcomp} \\
$R_r \le R_0$ in most commonly used models, but not in general & Thm \ref{propos}\\
%$R_r \le R_0$ in important special cases, but not in general & Theorem \ref{propos}\\
%{\color{black}$\widetilde{R}_r \le R_0$ in most commonly used models, but not in general} & Thm \ref{propos}\\
%{\color{black}$R_r \le \widetilde{R}_r$ in important special cases, but not in general} & Thm \ref{propos}\\
$R_r \le \widetilde{R}_r \le R_0 $ in important special cases, but not in general & Thm \ref{propos}\\
%{\color{blue}$R_r$ (or $\widetilde{R}_r$) and $R_I$ cannot be ordered} & Sec \ref{MarkovSIRHmod} \\
%$R_r$ and $R_0$ cannot be ordered, but in the following special cases: & \\
%\quad Thm \ref{propos}\thp{b}: $R_r \le R_0$; and Thm \ref{propos}\thp{c}: $R_r \le \tilde{R}_r \le R_0$ & Theorem \ref{propos}\\
\bottomrule
%%\midrule
%\addlinespace
%\multicolumn{3}{p{\textwidth}}{\textsuperscript{\textdagger}\footnotesize{Analogues of $R_0$, $R_\ast$, $R_I$, $R_V$, $R_{H\!I}$, $R_2$, $\hat{R}_{H\!I}$ and $\hat{R}_2$ are considered also for the network-households model.}} \\
%\addlinespace
\label{tab:Inequalities}
\end{longtable}

%\newpage

\LTcapwidth=\textwidth
\begin{longtable}{@{}cp{.7\textwidth}c@{}} 
\caption{Reproduction numbers for the \textbf{households-workplaces} model.} \\
\toprule
%\multicolumn{3}{c}{\textbf{Reproduction numbers for the households-workplaces model}} \\ \cmidrule{1-3}
Symbol & \centering Meaning & Section \\ \midrule
$R_0$ & Basic reproduction number (by default, based on rank generation numbers) &  \ref{subsec:HWR0}  \\
$R_\ast$ & Clump reproduction number &  \ref{subsec:HWR*} \\  
$R_H$ & Household-household reproduction number &   \ref{subsec:RHRW} \\ 
$R_W$ & Workplace-workplace reproduction number &   \ref{subsec:RHRW} \\ 
$R_I$ & Individual reproduction number &   \ref{subsec:HWRI} \\ 
$R_V$ & Perfect vaccine-associated reproduction number &   \ref{subsec:HWRV} \\ 
$R_{V\!L}$  & Leaky vaccine-associated reproduction number &   \ref{subsec:HWRV} \\ 
$R_r$ & Exponential-growth-associated reproduction number &   \ref{subsec:HWRr} \\ 
$\widetilde{R}_r$ & Approximation of $R_r$ %(exact for $n_H\le 2$ or, for the Reed-Frost model, for $n_H\le 3$) 
&   \ref{subsec:HWRr} \\ 
\bottomrule
\label{tab:RsHW}
\end{longtable}

\begin{longtable}{@{}cp{.65\textwidth}c@{}}

\caption{Main symbols used in the paper, with reference to their first occurrence.} \\

\toprule

Symbol & \centering Meaning & Section \\ \midrule

$a^{(n)}$ & $= \mu_H^{(n)} / \pp{1 + \mu_H^{(n)}}$ (construction of $R_{H\!I}$) & \ref{subsec:RHI} \\

$a$ & $= \max\pp{a^{(n)}:n=1,2,\cdots,n_H}$ (construction of $R_{H\!I}$) & \ref{subsec:RHI} \\

$b$ & Mean number of secondary cases attributed to other secondary cases (construction of $R_2$) & \ref{subsec:R2} \\

$\mathcal{D}\,^{\mathcal{C}}$ & Complement of event $\mathcal{D}$& \ref{subsec:Hcompproof}\\

$\bbE_{X}$ & Expectation with respect to random variable $X$ & \ref{sec:Introduction} \\

$\mathcal{E}$ & Vaccine efficacy & \ref{subsec:RV} \\

$\mathcal{E}_C$ & Critical vaccine efficacy & \ref{subsec:RV} \\

$g_A(\lambda)$ & Characteristic equation (discrete Lotka-Euler equation) derived from $M_A$ defining reproduction number $R_A$ & Throughout \\

$H, W$ & Subscripts/superscripts referring to household or workplace & Throughout \\

$i,i'$ & Individuals' indices & \ref{sec:Introduction} \\

$\{ \mathcal{I}(t), t \ge 0 \}$ & Random infectivity profile & \ref{subsec:Rr} \\

$k$ & Generation index & \ref{sec:Introduction} \\

$\mathcal{L}_f$ & Laplace transform of non-negative function $f$, i.e.~$\mathcal{L}_f(\theta) = \int_{-\infty}^{\infty}{\e^{-\theta f(x)}{\rm d}x}$ & \ref{subsec:Rr} \\

$\mathcal{M}_X$ & Moment-generating function of random variable $X$, i.e.~$\mathcal{M}_X(\theta) = \bbE\b{\e^{-\theta X}}$ & \ref{subsec:Rr} \\

$M_A$ & Mean matrix associated with reproduction number $R_A$ & Throughout \\

$n$ & Household size index & \ref{Hmodel} \\

$n_H$ & Maximum household size & \ref{Hmodel} \\

$N$ & Total population size & \ref{sec:Introduction} \\

$N_{ii'}$ & Number of infectious contacts from $i$ to $i'$ between the infection and the recovery of $i$ (Perhaps delete this one)& \ref{subsec:HRrproof} \\

$p_C$ & Critical vaccination coverage & \ref{subsec:RV} \\

$r$ & Real-time growth rate, i.e.~Malthusian parameter for the epidemic growth & \ref{subsec:Rr} \\

$R_A$ & Reproduction number associated with construction process $A$ & Throughout \\

$T_i$ & Time of infection of $i$ & \ref{subsec:HRrproof} \\

$T_E$ & Duration of latency period in SEIR model & \ref{MarkovSIRHmod} \\

$T_I$ & Duration of infectious period in SIR and SEIR models & \ref{subsec:Rr} \\

$\gtrv$ & Random variable describing the time of an infectious contact between two individuals (since infection of the infector) & \ref{subsec:Rr} \\

$\gtpdf$ & Probability density function of $\gtrv$, also called generation-time distribution & \ref{subsec:Rr} \\

$\gtrvt$ & Random variable describing the time of the first infectious contact between two individuals (since infection of the infector), assuming at least one occurs & \ref{subsec:Rr} \\

$Y_k$ & Number of infected cases in generation $k$ of a (randomised) Reed-Frost model & \ref{subsec:Hcompproof} \\

$\alpha$ & Shape parameter of the gamma distribution & \ref{nonrandHM} \\

$\beta_H(t)$ & Mean rate at which global infections emanate from a household & \ref{subsec:Rr} \\

$\delta$ & Rate of progressing from latent to infectious state in SEIR model & \ref{MarkovSIRHmod} \\

$\gamma$ & Recovery rate for SIR and SEIR models when $T_I$ is exponentially distributed; also, scale parameter of the gamma distribution & \ref{subsec:Rr} \\

$\lambda_G$ & Multiplicative coefficient affecting rate at which each infective makes infectious contacts in the population at large & \ref{subsec:Rr} \\

$\lambda_H$ & Multiplicative coefficient affecting the rate at which each infective makes infectious contacts to any specified susceptible within household & \ref{subsec:Rr} \\

$\mu_G$ & Mean number of global contacts made by a typical infective & \ref{Hmodel} \\

$\mu_H$ & Mean size of a within-household epidemic & \ref{Hmodel} \\

$\mu_k$ & Mean number of cases in generation $k$ of a within-household epidemic ($\mu_0 = 1$ always) & \ref{Hmodel} \\

$\mu_H^{(n)}, \mu_k^{(n)}$ & Mean size of a within-household epidemic, or of generation $k$ in such epidemic, in a household of size $n$ & \ref{Hmodel} \\

$\pi_n$ & Probability that the household of an individual selected uniformly at random has size $n$ (size-biased distribution) & \ref{Hmodel} \\

$\indic_\mathcal{D}$ & Indicator function, with value 1 if $\mathcal{D}$ occurs and 0 otherwise & \ref{subsubsec:Rrcomp} \\

$\overset{st}{\leq}$ & Stochastically smaller & \ref{subsubsec:genviewcomp} \\

$\overset{n}{\leq}$ & Inequality, which is strict only if at least one household or workplace has size larger than $n$ and is an equality if all households and workplaces have size $\leq n$ & \ref{subsec:Hcompproof} \\

$\overset{D}{=}$ & Equal in distribution& App~\ref{app:inflonglat}\\

%$\Upsilon_k$ & Cumulative number of infected cases in generations $k+1, k+2, \cdots, n-1$ & \\

%$\bbP\!\pp{\mathcal{D}}$ & Probability of event $\mathcal{D}$ & \ref{subsec:Rr} \\

%$\bbE\!\b{X}$ & Expectation of random variable $X$ & \ref{sec:Introduction} \\

\bottomrule

\label{tab:symbf}

\end{longtable}

\newpage
%%%%%%%%%%%%%%%%%%%%%%%%%%%%%%%%%%%%%%%%%%%%%%%%%%%%%%%%%%%%%%%
\section{Households model and reproduction numbers}
\label{sec:house}

\subsection{Model and generations of infections}

\label{Hmodel}
In this section we outline the definition of the households model, giving sufficient detail so that $R_0$ can be calculated.
%The full detail of the model is not necessary for this purpose.
The salient features for this purpose are that the population is partitioned into households and that infectives make two types of infectious contacts, {\it local} contacts with individuals in the same households and {\it global} contacts with individuals chosen uniformly at random from the entire population.
The expected number of global contacts made by a typical infective
during his/her infectious period %PT
is assumed to be $\mu_G$ and is the same for all infectives.
The precise detail of local transmission is not required in order to define $R_0$, as long as we can compute the generations of infection in the {\it local} epidemic
(i.e.\ in the within-household epidemic obtained if all {\it global} contacts are ignored). %PT
We show now how this may be done.

Consider a local epidemic in a household of size $n$, with $1$ initial infective, labelled 0, and $n-1$ initial susceptibles, labelled $1,2,\cdots,n-1$ (See Figure \ref{epigraphfig}). %PT
For $i=0,1,\cdots,n-1$, construct a list of whom individual $i$ would attempt to infect in the household if $i$ were to become infected.  Then construct a directed graph, $\mathcal{G}^{(n)}$ say, with vertices labelled $0,1,\cdots,n-1$, in which for any ordered pair of distinct vertices $(i,i')$, there is
a directed edge from $i$ to $i'$ if and only if individual $i'$ is in individual $i$'s list
of attempted infections. %The directed graph $\mathcal{G}^{(n)}$ is called an epidemic graph.
The initial infective, i.e.~individual $0$, is said to have (household) generation $0$.
Those individuals who are in individual $0$'s list (i.e.\ individuals 1 and 2 in Figure \ref{epigraphfig}) are said to have generation $1$.  %PT
Those individuals who are not in generations $0$ or $1$ but who are in a generation-$1$ infective's list (i.e.\ individuals 4 and 5 in Figure \ref{epigraphfig}) have generation $2$,
%PT
and so on. The set of people ultimately infected by the epidemic comprises those individuals in $\mathcal{G}^{(n)}$ that  have a chain of directed edges leading to them from individual $0$, and the generation number of such an infected individual, $i$ say, is the length of the shortest chain joining $0$ to $i$, where the length of a chain is the number of edges in it. Following Ludwig~\cite{Ludwig1975}, we call these generation numbers
{\em rank} generation numbers.

The rank generations of infectives may not correspond to real-time generations of infectives.  The latter may be obtained by augmenting the graph $\mathcal{G}^{(n)}$, so that for each directed edge, $i \to i'$ say, in $\mathcal{G}^{(n)}$ there is a number $t_{ii'}$ giving the time elapsing between $i$'s infection and 
time at which $i$ first attempts to infect $i'$. 
%$i$ attempting to infect $i'$. 
Then the generation number of an individual, $i$ say, that is infected in the single-household epidemic is the number of directed edges in the shortest chain joining $0$ to $i$, where now the length of a chain is the sum of the $t_{ii'}$ of its directed edges.  We call these generation numbers {\em true} generation numbers. 
As an example, suppose for the epidemic graph of Figure \ref{epigraphfig} that $t_{01} + t_{12} < t_{02}$, then the true generation of individual 2 is 2, instead of 1, which is his/her rank generation. %PT
For ease of exposition, unless stated explicitly otherwise, we assume rank generation numbers throughout this paper. This is in line with the choice of the definition of $R_0$ made in \cite{PelBalTra2012}. For further clarification, when both generation constructions are considered, as in Section \ref{subsubsec:genviewcomp},  we refer to the rank-generation basic reproduction number by using $R_0^{\mbox{r}}$, as opposed to the basic reproduction number $R_0^{\mbox{g}}$ which is obtained using the true generations. As explained in \cite{PelBalTra2012}, the reasons for the above choice are both analytical tractability and the fact that $R_0^{\mbox{r}}$ depends (in addition to the household structure) only on the distribution of the total infectivity of an individual, and not on the particular shape of his/her infectivity profile
(i.e.\ the distribution of the random development of the infectivity of an individual after he/she gets infected).%PT

Consider a household of size $n$.  For $k=0,1,\cdots,n-1$, let $\mu_k^{(n)}$ be the mean
size of generation $k$ in the above single-household epidemic.  Thus $\mu_0^{(n)}=1$ and
$\mu_H^{(n)}=\mu_1^{(n)}+\mu_2^{(n)}+\cdots+\mu_{n-1}^{(n)}$ is the mean size of the epidemic, not including the initial case. (Note that $\mu_H^{(1)}=0$.)  If the population contains households of different sizes then we need to take appropriate averages of these quantities.  Let $n_H$ denote the size of the largest household in the population and, for $n=1,2,\cdots,n_H$, let $p_n$ denote the proportion of households in the population that have size $n$.  Then the probability that an individual chosen uniformly at random from the population resides in a household of size $n$
is given by
\begin{equation}
\label{sizebias}
\pi_n=\frac{n p_n}{\sum_{j=1}^{n_H} j p_j} \qquad (n=1,2,\cdots,n_H).
\end{equation}
Global contacts are made with individuals chosen uniformly at random from the population, so the mean generation sizes of a typical single-household epidemic are given by
\begin{equation}
\label{sizebiasgen}
\mu_k=\sum_{n=k+1}^{n_H} \pi_n \mu_k^{(n)} \qquad (k=0,1,\cdots,n_H-1).
\end{equation}
The mean size of a typical single-household epidemic, not including the initial infective, is then given by
\begin{equation}
\label{muLunequal}
\mu_H=\sum_{n=1}^{n_H} \pi_n \mu_H^{(n)}=\sum_{k=1}^{n_H-1} \mu_k.
\end{equation}
In what follows we assume that $\mu_G>0$, otherwise the infection does not spread between households, and that $\mu_H>0$ and $n_H\ge 2$, otherwise the model is homogeneously mixing.

\subsection{The basic reproduction number $R_0$}
\label{subsec:R0}

Consider the branching process that approximates the early spread of the epidemic, in which each individual in the branching process represents an infected household and the time of its birth is given by the global generation of the corresponding household primary case in the epidemic process.  (The global generation of an infective is its generation in the epidemic in the population at large.  A household primary case is the first infected individual in the household and all other cases are called secondary.) See Figure \ref{fig:epigraphbp} for a graphical representation. A typical, non-initial individual in this branching process (i.e.~a household) reproduces only at ages $1,2,\cdots$ and its mean number of offspring at age $k+1$ is $\nu_k$, where $\nu_k=\mu_G \mu_k$ $(k=0,1,\cdots,n_H-1)$ and $\nu_k=0$ otherwise.  The asymptotic (Malthusian) geometric growth rate of this branching process is given by the unique positive solution of the discrete-time Lotka-Euler equation $\sum_{k=0}^{\infty} \nu_k/\lambda^{k+1}=1$; see, for example, Haccou et al.~\cite{Haccou2005}, Section 3.3.1, adapted to the discrete-time setting.  The above branching process may be augmented to include the local spread within each household, i.e.\ considering all individuals in Figure~\ref{fig:epigraphbp}. Assume, as in Figure~\ref{fig:epigraphbp}, that all households live up to age $n_H$, even if local epidemics finish earlier.  (Note that this assumption does not alter the asymptotic geometric growth rate of the
branching process.)  Then, for $k \ge n_H$, the expected number of households in global generation $k$ of the branching process is
\[ \mu_G \left( x_{k-1} + x_{k-2} + \cdots + x_{k-n_H} \right), \]
where $x_k$ denotes the the expected number of individuals in global generation $k$ of the augmented process.  Therefore, the asymptotic geometric growth rate of the total number of infectives in the augmented process is the same as that of the branching process\footnote{A formal proof of this can easily be obtained using arguments similar to those in the proof of Lemma 3 of [2] (though note that the left-hand side of the second display after (3.15) should read $A_{n_H}^n$).}, so the basic reproduction number $R_0$ for the above households model is given by the unique positive root of the function
\begin{equation}
\label{g0H}
g_0 (\lambda )=1-\sum _{k=0}^{n_{H} -1}\frac{\nu_k}{\lambda ^{k+1} } = 1-\mu_G \sum _{k=0}^{n_{H} -1} \frac{\mu_k}{\lambda ^{k+1} } ,
\end{equation}
yielding a simpler proof of Corollary 1 in~\cite{PelBalTra2012}.
For future reference, we note that $$g_0 (\lambda )=\sum_{n=1}^{n_H}\pi_n g_0^{(n)}(\lambda),$$ where
\begin{equation}
\label{gnH}
g_0^{(n)} (\lambda )=1-\mu_G\sum_{k=0}^{n-1}\frac{\mu_k^{(n)}}{\lambda^{k+1}}.
\end{equation}

%\begin{figure}[t]
%\begin{center}
%\includegraphics[width=.35\textwidth]{epigraphbr1.pdf} \hspace{3cm}
%\includegraphics[width=.2\textwidth]{epigraphbr2.pdf}
% \caption{Graphical representation of the branching process construction in derivation of $R_0$. On the left, the epidemic graph of an epidemic among a population of households of size 4, with initial infective in the left top corner, is shown. On the right, the infection tree representation of the epidemic is shown. Note that only initial infectives in the household are represented and that all lifespans are 4, which in households of size 4 is the latest time, measured in generations, at which global contacts can be made.}\label{fig:epigraphbp}
% \end{center}
%\end{figure}
\begin{figure}[!ht]
\centering
\includegraphics[width=.8\textwidth]{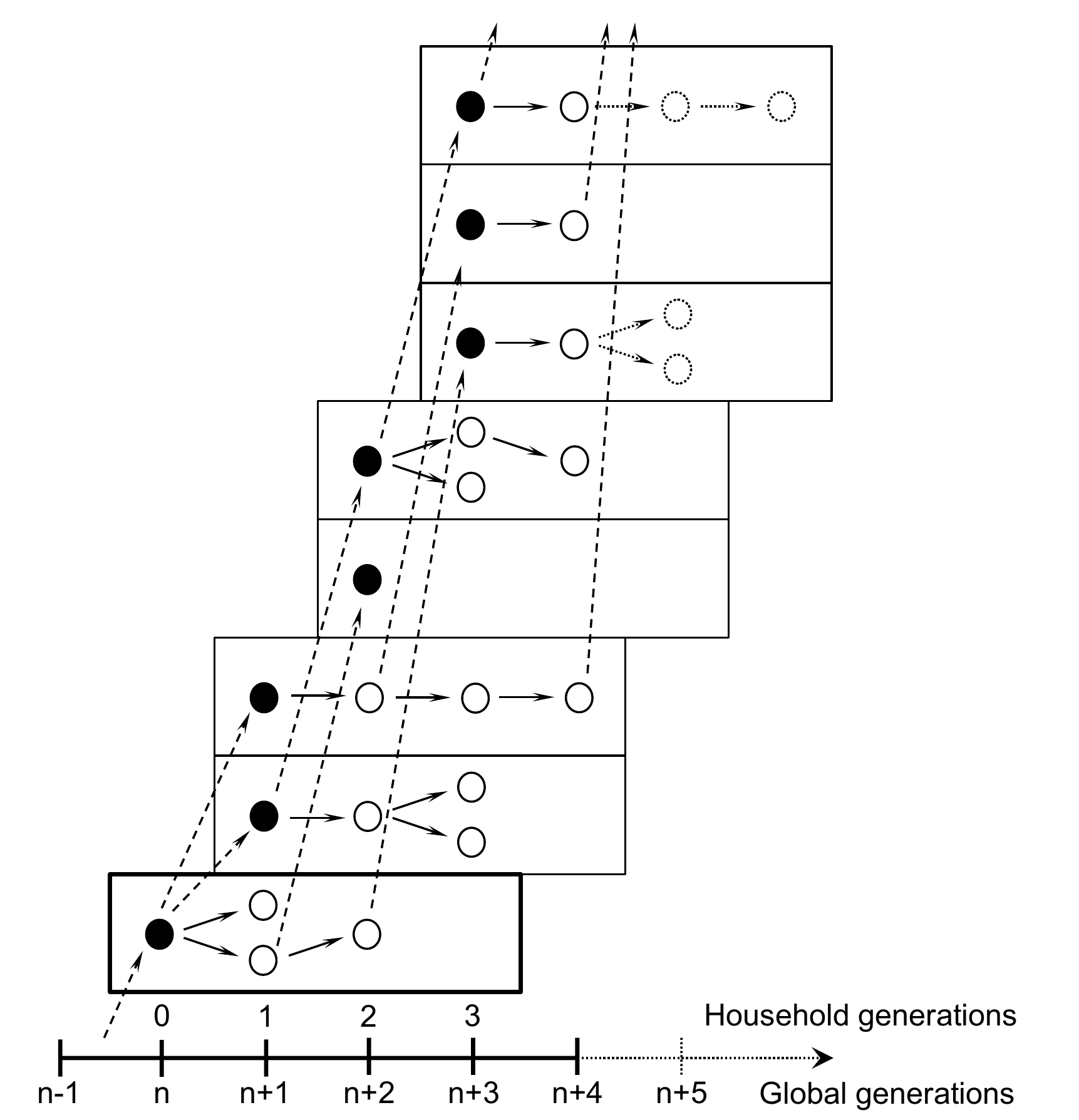}
\caption{Graphical representation of the branching process construction used in the derivation of $R_0$, for a population of households of size 4. At global generation $n$, a specific household (thick borders) is infected, i.e.\ a global infection generates its primary case (black dot), who then starts a within-household epidemic driven by household infections (normal arrows). Household and global generations then proceed at the same pace, with both primary and secondary cases generating only new household primary cases through global contacts (dashed arrows). %with initial infective in the left top corner, is shown. 
Note that all households have lifespans of 4 generations (the latest time, measured in generations, at which global contacts can be made given their size), even if the within-household epidemics are shorter.}\label{fig:epigraphbp}
\end{figure}

In the above we assume that all infected individuals make the same expected number of global contacts $\mu_G$.  This is the case for most households models that have appeared in the literature.  One exception is the network-households model of Ball et al.~\cite{BalSirTra2009,BalSirTra2010}, in which the mean number of global contacts made by primary and secondary household infectives are $\tilde{\mu}_G$ and $\mu_G$, respectively,
where $\tilde{\mu}_G$ and $\mu_G$ may be unequal. Pellis et al.~\cite{PelBalTra2012} show that
$R_0$ for the network-households model is given by the unique positive root of $g_0$ but with $\nu_0=\tilde{\mu}_G \mu_0$ (all other $\nu_k$ remain unchanged).

\subsection{The household reproduction number $R_*$}
\label{subsec:R*}

The most commonly used reproduction number for the households model is given by the mean number of households infected by a typical infected household in an otherwise susceptible population.  It is usually denoted by $R_*$ and in our notation is given by
\begin{equation}
R_* = \mu_G (1+\mu_H ) = \sum_{k=0}^{n_H-1} \nu_k .\label{rstar}
\end{equation}
The popularity of $R_*$ stems largely from its ease of calculation and from the fact that, if $R_*>1$, selecting a fraction $1-1\!/\!R_*$ of households uniformly at random and vaccinating all their members is enough to prevent an epidemic.

\subsection{The individual reproduction number $R_I$}
\label{subsec:RI}
Several authors have proposed individual-based reproduction numbers for the households model.  One approach (see, for example, Becker and Dietz~\cite{BecDie1995} and Ball et al.~\cite{BalMolSca1997}) is to attribute all secondary cases in a household to the primary case, leading to the reproduction number $R_I$ given by the dominant eigenvalue of the next-generation matrix
\[
M_I = \left[ \begin{array}{cc}\mu_G&\mu_H\\\mu_G&0\end{array}\right] .
\]
It is easily verified that $R_I$ is given by the unique solution in $(0,\infty)$ of
$g_I (\lambda)=0$, where
\begin{equation}
g_I (\lambda)=1- \frac{\mu_G}{\lambda} - \frac{\mu_H \mu_G}{\lambda^2}.\label{gind}
\end{equation}

\subsection{The individual reproduction number $R_{H\!I}$}
\label{subsec:RHI}

Goldstein et al.~\cite{GoldsteinEtal2009} consider an individual reproduction number, which they denote by $R_{H\!I}$, and which represents ``the expected number of secondary cases caused by an average individual from an average infected household, including those outside and inside the household'' (see also Trapman \cite{Trapman2007}).  Suppose first that all households have the same size.
Then, in an ``average'' household epidemic, there are $\mu_H$ secondary cases caused by $\mu_H +1$ infectives, leading to
\begin{equation}
R_{H\!I} = \mu_G + \frac{\mu_H}{1+\mu_H}.\label{rhiequal}
\end{equation}

Goldstein et al.~\cite{GoldsteinEtal2009} also consider an extension of \eqref{rhiequal} to variable household sizes\footnote{In \cite{GoldsteinEtal2009}, this extension is also denoted by $R_{H\!I}$.}, defined by
\begin{equation}
\bar{R}_{H\!I}=\mu_G + \sum_{n=1}^{n_H} \pi_n \left( \frac{\mu_H^{(n)}}{1+\mu_H^{(n)}} \right).\label{rhiunequal}
\end{equation}
However, $\bar{R}_{H\!I}$ given by \eqref{rhiunequal} is not necessarily a threshold parameter.  For this reason, Goldstein et al.~\cite{GoldsteinEtal2009} proposed another extension of \eqref{rhiequal}, defined by\footnote{In \cite{GoldsteinEtal2009}, this is denoted by $R_{H\!I}'$.}
\begin{equation}
\label{rhihatnew}
\hat{R}_{H\!I} = \mu_G + \frac{\mu_H}{1+\mu_H},
\end{equation}
with $\mu_H$ as in \eqref{muLunequal}, which is a threshold parameter. The advantages and disadvantages of $\bar{R}_{H\!I}$ and $\hat{R}_{H\!I}$ are discussed in \cite{GoldsteinEtal2009}. The problem with $\bar{R}_{H\!I}$ is that it is not generally a threshold parameter. The problem with $\hat{R}_{H\!I}$ is that (unlike $\bar{R}_{H\!I}$)  there exist household structures for which $\hat{R}_{H\!I}$ does not satisfy the general orderings of reproduction numbers proved in~\cite{GoldsteinEtal2009}.
We renamed the original definitions because we now introduce a new definition of
$R_{H\!I}$ for populations of unequally sized households, which overcomes both these shortcomings and coincides with both $\bar{R}_{H\!I}$ and $\hat{R}_{H\!I}$ when all households have the same size.

Returning to the setting where all households have the same size, note that~\eqref{rhiequal} assumes that each household member produces on average $a=\mu_H/(1+\mu_H)$ secondary cases within the household, so the mean generation sizes are given by $\eta_k=a^k$ $(k=0,1,\cdots)$ and, cf.~(\ref{g0H}),
$R_{H\!I}$ ($=\mu_G+a$) is given by the unique root in $(a,\infty)$ of the function
\begin{align}
\label{gnHI}
g_{H\!I}(\lambda ) &= 1-\mu_G \sum_{k=0}^{\infty}\frac{\eta_k}{\lambda ^{k+1} } =1-\frac{\mu_G}{\lambda -a}.
\end{align}
Using this approach, if the households are not all the same size, then the mean generation sizes are given by $\eta_k=\sum_{n=1}^{n_H} \pi_n \left(a^{(n)}\right)^k$ ($i=0,1,\cdots$), where $a^{(n)}=\mu_H^{(n)}/(1+\mu_H^{(n)})$ for $(n=1,2,\cdots,n_H)$, which leads to the reproduction number $R_{H\!I}$ given by the unique root in $(a, \infty)$, where now $a = \max(a^{(n)}:n=1,2,\cdots,n_H)$, of the function
\begin{align}
\label{gHI}
g_{H\!I}(\lambda) &= 1-\mu_G \sum_{k=0}^{\infty}\frac{\eta_k}{\lambda ^{k+1} } = 1-\mu_G \sum_{n=1}^{n_H} \frac{\pi_n}{\lambda-a^{(n)}} \qquad (\lambda > a).
\end{align}

\subsection{The individual reproduction number $R_2$}
\label{subsec:R2}

A disadvantage of $R_I$ is that every secondary case in a household is attributed to the primary case whereas in practice some should normally be attributed to other secondary cases.  Suppose that all households have the same size, which is at least two.  Ball et al.~\cite{BalSirTra2010} consider a modification of $R_I$ in which $M_I$ is replaced by
\[
M_2 = \left[ \begin{array}{cc}\mu_G&\mu_1\\\mu_G&b\end{array}\right],
\]
where $b=1-\mu_1 / \mu_H$.  Thus every secondary case produces on average $b$ further secondary cases, with the value of $b$ being chosen so that the within-household spread yields the correct expected final size, i.e.~so that $\mu_H = \mu_1 (1+b+b^2+ \cdots )= \mu_1 / (1-b)$. %, whence the approximate model for within-household spread yields the correct expected final size.
Note that $R_2$ satisfies $\hat{g}_2 (R_2)=0$, where
\begin{equation}
\hat{g}_2 (\lambda) = 1 - \frac{\mu_G +b}{\lambda} + \frac{\mu_G (b-\mu_1)}{\lambda^2}. \label{g2}
\end{equation}
At the end of the proof of Theorem~\ref{Hcomp} (see Section~\ref{subsec:HWcompproof}) we show that $b < \mu_1$.  It then follows that $R_2$ is given by the unique root of $\hat{g}_2$ in $(0,\infty)$.

Observe that the above assumes that the mean generation sizes are given by $\upsilon_0=1$ and  $\upsilon_k=\mu_1 b^{k-1}$ $(k=1,2,\cdots)$.
It follows that $R_2$ is given by the unique root in $(b,\infty)$ of the function
\begin{align}
\label{gn2}
g_2(\lambda )&=1-\mu_G \sum_{k=0}^{\infty}\frac{\upsilon_k}{\lambda ^{k+1} }=1-\frac{\mu_G}{\lambda}\left(1+\frac{\mu_1}{\lambda-b}\right).
\end{align}
(It is easily verified that $\hat{g}_2(\lambda) = \left(1-\frac{b}{\lambda}\right) g_2(\lambda)$.)

If the households are not all the same size, we can define the mean generation sizes as $\upsilon_0 = 1$ and $\upsilon_k = \sum_{n=2}^{n_H}{\pi_n \mu_1^{(n)} \left(b^{(n)}\right)^{k-1}}$ ($k=1,2,\cdots$),  where $b^{(n)}=1-\mu_1^{(n)}/\mu_H^{(n)}$ for $(n=2,3,\cdots,n_H)$.
%(Note that $\mu_1^{(1)}=\mu_H^{(1)}=0$, so the value of $b^{(1)}=0$ is arbitrary.)  The
%but convenient for the definition of $b$  below.)
The reproduction number $R_2$ is then given,
to be the unique root in $(b, \infty)$ of the function
\begin{equation}
\label{g2unequal}
g_{2}(\lambda )= 1 - \mu_G \sum_{k=0}^{\infty}{\frac{\upsilon_k}{\lambda^{k+1}}}
= 1-\frac{\mu_G}{\lambda}\left(1+\sum_{n=2}^{n_H} \pi_n \frac{ \mu_1^{(n)}}{\lambda-b^{(n)}}\right),
\end{equation}
where %$b = 0$ if $n_H=1$ and
$b$ is now given by $b = \max(b^{(n)}:n=2,3,\cdots,n_H)$.  Note that~\eqref{g2unequal} reduces to~\eqref{gn2} when all households
have the same size.

This extension of $R_2$ to unequal household sizes differs from that in \cite{BalSirTra2010}, where $R_2$ is defined to be the
dominant eigenvalue of $M_2$ above, with $\mu_1$ and $\mu_H$ as in \eqref{sizebiasgen} and \eqref{muLunequal}. %The new definition of
%$R_2$ coincides with that in \cite{BalSirTra2010} when all households have the same size but has more desirable properties in the case of unequal household sizes.
We denote the latter by $\hat{R}_2$, as it is similar in spirit to $\hat{R}_{H\!I}$. We do not consider it further.

\subsection{The perfect and leaky vaccine-associated reproduction numbers $R_V$ and $R_{V\!L}$}
\label{subsec:RV}

Goldstein et al.~\cite{GoldsteinEtal2009} consider two vaccine-associated reproduction numbers, $R_V$ and $R_{V\!L}$, corresponding to perfect and leaky vaccines, respectively.  Suppose that the epidemic is above threshold, i.e.~$R_* > 1$, and individuals are selected uniformly at random and vaccinated with a perfect (i.e.~100\% effective) vaccine.  Let $p_C$ be the proportion of the population that has to be vaccinated to reduce $R_*$ to 1.  Then
\begin{equation}
R_V = 1/(1-p_C).\label{rv}
\end{equation}
Thus $R_V$ is defined in such a way that the critical vaccination coverage is given by $1-1/R_V$, paralleling the usual formula for a homogeneously mixing epidemic, where, if $R_0>1$, the critical vaccination coverage is $1-1/R_0$. Goldstein et al.~\cite{GoldsteinEtal2009} also introduce in Section 7.2 of their paper a reproduction number $R_{V\!\!A}$, which approximates $R_V$.  In our notation, $R_{V\!\!A}$ is obtained by multiplying both $\mu_H$ and  $\mu_G$ by $(1-p)$ in \eqref{rstar}, finding the critical vaccination coverage $p_C$ that reduces $R_*$ to 1, and then using \eqref{rv} to obtain an approximation $R_{V\!\!A}$ to $R_V$. It is easily checked that $R_{V\!\!A}=R_I$ (see the proof in Section~\ref{subsec:Hcompproof} of $R_I \ge R_V$ in Theorem~\ref{Hcomp}\thp{b}\footnote{Note though that there is a small misprint in the formula for $R_{V\!\!A}$
at the foot of page 19 of \cite{GoldsteinEtal2009} ($\sqrt{4(f-1)/R_G}$
should be replaced by $\sqrt{1+4(f-1)/R_G}$).}).

A leaky vaccine with efficacy $\mathcal{E}$, is one which multiplies a vaccinee's susceptibility to a disease by a factor $1-\mathcal{E}$ but has no effect on a vaccinee's infectivity if he/she becomes infected. More specifically, each time any infective attempts to infect a vaccinated susceptible individual that individual is infected independently with probability $1-\mathcal{E}$. Suppose that $R_* > 1$ and the entire population is vaccinated with a leaky vaccine.  Then
\begin{equation}
R_{V\!L} = 1/(1-\mathcal{E}_C),\label{rvl}
\end{equation}
where $\mathcal{E}_C$ is the efficacy required to reduce $R_*$ to 1.

The above definitions of $R_V$ and $R_{V\!L}$ assume that $R_* > 1$.  Goldstein et al.~\cite{GoldsteinEtal2009} did not define $R_V$ and $R_{V\!L}$ when $R_*\leq 1$. In that case we define $R_V=R_{V\!L}=1$, since a major outbreak cannot occur even if nobody is vaccinated.

\subsection{The exponential-growth-associated reproduction number $R_r$ }
\label{subsec:Rr}

A final reproductive number considered in~\cite{GoldsteinEtal2009} is the exponential-growth-associated reproduction number $R_r$, whose definition requires a more detailed description of the transmission model.  Goldstein et al.~\cite{GoldsteinEtal2009} consider a households models in which infectives have independent and identically distributed infectivity profiles. A typical infectivity profile, $\mathcal{I}(t)\ (t\ge 0)$, is the realisation of a stochastic process; conditional upon its infectivity profile, an infectious individual, $t$ time units after being infected, makes global contacts at overall rate $\mu_G \mathcal{I}(t)$ and contacts any given susceptible in his/her household at rate $\lambda_H^{(n)}\mathcal{I}(t)$, where $n$ is the size of his/her household\footnote{The notation has been changed to fit more closely that of our paper.}. All infectious contacts, whether of the same or different type (i.e.~local or global) are independent of each other. For $t \ge 0$, let $\gtpdf(t)=\mathbb{E}[\mathcal{I}(t)]$ and note that, since $\mu_G$ is the mean number of global contacts made by a typical infective, $\int_0^{\infty} \gtpdf(t){\rm d}t=1$.  Thus $\gtpdf$ may be interpreted as the probability density function of a random variable, $\gtrv$ say, describing an infectious contact interval (see e.g.~\cite{GoldsteinEtal2009} and~\cite{PelHouseSpe2015}).

Suppose first that $\lambda_H^{(n)}=0$ for all $n$, so the epidemic is homogeneously mixing, with basic reproduction number $R_0=\mu_G$ and real-time growth rate $r$ given by the implicit solution of the Lotka-Euler equation
\begin{equation}
\label{LotkaEuler}
\int_0^{\infty} \mu_G \gtpdf(t){\rm e}^{-rt} {\rm d}t=1.
\end{equation}
Thus, $R_0=\left(\mathcal{M}_{\gtrv}(r)\right)^{-1},$
%\begin{equation}
%\label{R0fromr}
%R_0=\frac{1}{\mathcal{M}_{\gtrv}(r)},
%\end{equation}
where $\mathcal{M}_{\gtrv}(\theta)=\int_0^{\infty}{\rm e}^{-\theta t}\gtpdf(t){\rm d}t$ is the moment-generating function of $\gtrv$.  (Throughout the paper,
for a random variable $X$ we denote its moment-generating function by $\mathcal{M}_{X}(\theta)=\bbE\b{\e ^{- \theta X}}$.)
This provides a method of estimating $R_0$ from data on an emerging epidemic, when information on $\gtrv$ and the exponential growth rate $r$ are available, assuming a homogeneous mixing model (see Nowak et al.~\cite{Nowaketal1997}, Lloyd~\cite{Lloyd2001}, Wallinga and Lipsitch~\cite{WalLip2007} and Roberts and Heesterbeek~\cite{RobHee2007}).

The exponential-growth-associated reproduction number $R_r$ in \cite{GoldsteinEtal2009} is given by
\begin{equation}
\label{Rrfromr}
R_r=\frac{1}{\mathcal{M}_{\gtrv}(r)},
\end{equation}
where $r$ is the real-time growth rate of the households model.
Thus, in the above inferential setting, $R_r$ is the estimate one obtains of $R_0$ if
the household structure of the population is ignored.

To calculate $R_r$, it is necessary to calculate first the real-time growth rate $r$ of the households model, which
generally is far from straightforward.  For $t>0$, let $\beta_H(t)$ denote the mean rate at which global contacts emanate from a typical single-household epidemic $t$ time units after the household was infected. Similarly to~\eqref{LotkaEuler}, the real-time growth rate $r$ is now given by the unique real solution of the Lotka-Euler equation
\begin{equation}
\label{rtgrowth1}
\mathcal{L}_{\beta_H}(r)=1,
\end{equation}
where $\mathcal{L}_{\beta_H}(r)=\int_0^{\infty} \beta_H(t){\rm e}^{-r t} {\rm d}t$.
Note that $\mathcal{L}_{\beta_H}(r)$ is the Laplace transform of the household infectivity profile; hereafter, we denote by $\mathcal{L}_{f}(\theta)$ the Laplace transform of a function $f$ calculated in $\theta\in (-\infty,\infty)$\footnote{For ease of notation we give the domain as $(-\infty,\infty)$ but, as all the functions we consider are non-negative, we note that the domain
usually takes the form $(\theta_f, \infty)$, where $\theta_f$ depends on the function $f$, as the integral is infinite for $\theta \le \theta_f$.}.
The difficulty in calculating $r$ from \eqref{rtgrowth1} is that $\mathcal{L}_{\beta_H}(r)$ is generally not mathematically tractable unless the disease dynamics are Markovian. Consequently, Fraser~\cite{Fra2007} introduced an approximation, further explored in Pellis et al.~\cite{PelFerFra2011}, which essentially assumes that cases are attributed to generations according to the rank-based process and real infection intervals (not only infectious contact intervals) are independent realisations of the random variable $\gtrv$ (see Pellis et al.~\cite{PelHouseSpe2015} for an extensive discussion).  With this approximation, the time elapsing from the initial infection of the household to the infection of a typical household generation-$k$ infective is given by the sum of $k$ independent copies of $\gtrv$, so $\mathcal{L}_{\beta_H}(r) \approx \mathcal{L}_{\beta_H}^{(0)}(r)$, where
\begin{equation}
\label{rtgrowth2}
 \mathcal{L}_{\beta_H}^{(0)}(r)=\mu_G \mathcal{M}_{\gtrv}(r)\left\{1+ \sum_{k=1}^{n_H-1} \mu_k \left(\mathcal{M}_{\gtrv}(r)\right)^k\right\}.
\end{equation}
Substituting this approximation into~\eqref{rtgrowth1}, using~\eqref{Rrfromr} and recalling that $\mu_0=1$, yields  
\[
\mu_G \sum_{k=0}^{n_H-1}\frac{\mu_k}{R_r^{k+1}}=1,
\]
so, recalling~\eqref{g0H}, $g_0(R_r)=0$.  Thus, using Fraser's approximation leads to $R_r$ being given by $R_0$.

A second approximation to $R_r$ is perhaps most easily introduced by considering the infectivity profile
given by
\begin{equation}
\label{IGSE}
\mathcal{I}(t)=\left\{
\begin{array}{ll}
1\qquad&\text{if } 0 \le t \le T_I,\\
0 &\text{if } t>T_I,
\end{array} \right.
\end{equation}
where $T_I \sim \mathrm{Exp}(1)$.  (For $\gamma>0$, $\mathrm{Exp}(\gamma)$ denotes an exponential distribution having rate $\gamma$, and hence mean $\gamma^{-1}$.)  Thus, for $t\ge 0$, we have $\gtpdf(t)=\mathbb{E}[\mathcal{I}(t)]=\mathbb{P}(T_I \ge t)={\rm e}^{-t}$, so $\gtrv \sim \mathrm{Exp}(1)$. Suppose that $\lambda_H^{(n)}=\lambda_H$ for all $n$.  Let the random variable $\gtrvt$ describe the time of the first local infectious contact from a given infective to a given susceptible in the same household, conditional upon there being at least one such contact.  The infective contacts the susceptible at rate $\lambda_H$ and recovers independently at rate $1$, so the time until the first event (contact of the susceptible or recovery of the infective) has an $\mathrm{Exp}(1+\lambda_H)$ distribution.  Moreover, whether or not this event is a recovery is independent of its time.  Thus $\gtrvt \sim \mathrm{Exp}(1+\lambda_H)$.  Note that $\gtrvt$ has a different distribution from $\gtrv$, so another (and usually improved) approximation is
$\mathcal{L}_{\beta_H}(r) \approx \widetilde{\mathcal{L}}_{\beta_H}(r)$, where
\begin{equation}
\label{rtgrowth3}
 \widetilde{\mathcal{L}}_{\beta_H}(r)=\mu_G \mathcal{M}_{\gtrv}(r)\left\{1+ \sum_{k=1}^{n_H-1} \mu_k \left(\mathcal{M}_{\gtrvt}(r)\right)^k\right\}.
\end{equation}
Note that the first $\mathcal{M}_{\gtrv}(r)$ in~\eqref{rtgrowth2} is not replaced by $\mathcal{M}_{\gtrvt}(r)$ since it corresponds to a global contact (recall that, asymptotically, an infective makes at most one global contact to any given susceptible).  The random variable ${\gtrvt}$ can be defined in a similar fashion for any arbitrary but specified infectivity profile (see~\cite{PelHouseSpe2015} for numerous explicit examples).  Using this approximation, the real-time growth rate $r$ is approximated by $\tilde{r}$, where $\tilde{r}$ is
the unique real solution of
\begin{equation}
\label{rtgrowth4}
\widetilde{\mathcal{L}}_{\beta_H}(r)=1,
\end{equation}
which leads to $R_r$ being approximated by the reproduction number
\begin{equation}
\label{Rtilderdef}
\widetilde{R}_r=\frac{1}{\mathcal{M}_{\gtrv}(\tilde{r})}.
\end{equation}

The above example shows that if $\lambda_H$ varies with household size $n$ then so does the distribution
of ${\gtrvt}$.  In that case,~\eqref{rtgrowth3} becomes
\begin{equation}
\label{rtgrowth5}
\widetilde{\mathcal{L}}_{\beta_H}(r)=\mu_G \mathcal{M}_{\gtrv}(r)\left\{1+\sum_{n=2}^{n_H} \pi_n \sum_{k=1}^{n-1} \mu_k^{(n)} \left(\mathcal{M}_{{\gtrvt}^{(n)}}(r)\right)^k\right\},
\end{equation}
where ${\gtrvt}^{(n)}$ is a random variable distributed as ${\gtrvt}$ when the household size is $n$, and $\widetilde{R}$ is then obtained as before.

Observe that the approximation $\mathcal{L}_{\beta_H}(r) \approx \widetilde{\mathcal{L}}_{\beta_H}(r)$ is exact if $n_H \le 2$, so in that case $R_r=\widetilde{R}_r$.

\section{Comparisons of households model reproduction numbers}
\label{sec:HComparisons}

%we consider comparisons between the various reproduction numbers for the households model. W
We distinguish between an epidemic in which $R_*> 1$ and one in which $R_*<1$; we call the former \emph{growing} (following Goldstein et al.~\cite{GoldsteinEtal2009}) and the latter \emph{declining}.  As stated %at the end of Section~\ref{Hmodel}
before, we assume implicitly that $n_H \ge 2$, $\mu_H > 0$ and $\mu_G>0$.  We also assume that if $n_H \ge 3$, then $\mu_1 \neq \mu_H$.  Thus we exclude the highly locally infectious case studied by Becker and Dietz~\cite{BecDie1995}, in which the initial infective in a household necessarily infects all other susceptible household members. We comment on this case after Theorems~\ref{Hcomp} and~\ref{propos}.

\subsection{Comparisons not involving $R_r$}
\label{subsec:HcompnotRr}

\subsubsection{Main theorem}
\label{subsubsec:theorem}

The following theorem is proved in Section~\ref{subsec:Hcompproof}.

\begin{thm}\label{Hcomp}
\
\begin{enumerate}
\item[(a)]
$R_*=1 \iff R_I =1 \iff R_0 = 1 \iff R_2=0 \iff R_{H\!I}=1 \implies R_V = 1$.

\item[(b)]
In a growing epidemic,
\[
R_* > R_I \ge R_V \ge R_0 > R_{H\!I} >1\quad \mbox{ and }\quad R_I \ge R_2 > R_{H\!I} > 1,
\]
and in a declining epidemic
\[
R_* < R_I \le R_0 < R_{H\!I}  <1  \quad\mbox{ and }\quad R_I \le R_2 < R_{H\!I} < 1.
\]
The inequalities $R_I \ge R_V$, $R_I \ge R_2$, $R_I \le R_0$ and $R_I \le R_2$ are strict if and only if $n_H>2$.  The inequality $R_V \ge R_0$ is strict if and only if $n_H>3$.
\end{enumerate}
\end{thm}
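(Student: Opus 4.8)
The unifying observation I would start from is that each of $R_0$, $R_I$, $R_2$ and $R_{H\!I}$ is the unique root, in the relevant domain, of a function of the common form $g_A(\lambda)=1-\mu_G\sum_{k\ge 0}c_k^{A}\lambda^{-(k+1)}$ with $c_k^{A}\ge 0$, and that in every case the coefficients have the \emph{same} total $\sum_{k\ge 0}c_k^{A}=1+\mu_H$: for $R_0$ take $c_k=\mu_k$; for $R_I$ take $c_0=1$, $c_1=\mu_H$; for $R_{H\!I}$ take $c_k=\eta_k$; for $R_2$ take $c_0=1$, $c_k=\upsilon_k$. Each $g_A$ is strictly increasing on its domain from $-\infty$ to $1$, so has a unique root, and the equal-mass property forces $g_A(1)=1-\mu_G(1+\mu_H)=1-R_*$ for every $A$. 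Part (a) is then immediate: since $R_A$ is the root of the increasing $g_A$, one has $R_A=1\iff g_A(1)=0\iff R_*=1$; and when $R_*=1$ no vaccination is needed, so $p_C=0$ and $R_V=1/(1-p_C)=1$.

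The same equal-mass normalisation drives part (b). Because all the $g_A$ pass through the single point $(1,\,1-R_*)$, comparing two roots reduces to comparing how the common mass $1+\mu_H$ is spread over generations. Writing $d_k=c_k^{A}-c_k^{B}$ with $\sum_k d_k=0$ and applying Abel summation to $\mu_G\sum_k d_k\lambda^{-(k+1)}$, whose weights $\lambda^{-(k+1)}$ are decreasing in $k$ for $\lambda>1$ and increasing for $\lambda<1$, I obtain the comparison principle: if $A$ dominates $B$ in the sense that the partial sums satisfy $\sum_{j\le k}c_j^{A}\ge\sum_{j\le k}c_j^{B}$ for all $k$ (so $A$ is more concentrated in early generations), then $R_A\ge R_B$ in a growing epidemic and $R_A\le R_B$ in a declining one. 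This yields the easy links at once: $R_*$ (all mass at generation $0$) dominates $R_I$ (mass $1$ at $0$, $\mu_H$ at $1$), which in turn dominates both $R_0$ (the mass $\mu_H$ spread over generations $1,\dots,n_H-1$) and $R_2$; hence $R_*>R_I\ge R_0$ and $R_I\ge R_2$ for growing epidemics, with the reverse inequalities for declining ones. The inequality $R_*>R_I$ is strict whenever $\mu_H>0$ (directly, $g_I(R_*)=\tfrac{\mu_H}{1+\mu_H}\pp{1-1/R_*}$ has the sign of $R_*-1$), while the inequalities relating $R_I$ to $R_0$ and to $R_2$ are strict exactly when mass genuinely sits beyond generation $1$, i.e.\ when $n_H>2$ (using the standing exclusion $\mu_1\neq\mu_H$). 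Finally, the strict threshold ($R_{H\!I}>1$ for growing, $R_{H\!I}<1$ for declining) follows from $g_{H\!I}(1)=1-R_*$.

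The two genuinely delicate points are the comparisons $R_0>R_{H\!I}$, $R_2>R_{H\!I}$ and the placement of $R_V$. For the former I must show that the geometric sequence $\eta_k=\sum_n\pi_n(a^{(n)})^k$ underlying $R_{H\!I}$ has heavier late-generation tails than the true sequence $(\mu_k)$ and than $(\upsilon_k)$; equivalently, via the comparison principle, that $(\mu_k)$ and $(\upsilon_k)$ dominate $(\eta_k)$ in partial sums. The pivotal elementary fact is $a^{(n)}=\mu_H^{(n)}/(1+\mu_H^{(n)})\le\mu_1^{(n)}$ (equivalently $b^{(n)}\le a^{(n)}$), which makes the true (and the $R_2$) generation counts outrun the constant-per-capita $a^{(n)}$ branching early on while carrying the same total, so their tails must be lighter. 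I expect the \textbf{main obstacle} to be upgrading this first-generation inequality to the full partial-sum (stochastic) ordering for \emph{all} $k$: the natural route is a branching-process coupling that dominates the actual within-household generation sizes by the Galton–Watson sizes with per-capita mean $a^{(n)}$, exploiting that the real epidemic is confined to at most $n_H-1$ generations whereas the approximating process has an unbounded geometric tail.

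For $R_V$ I would read perfect vaccination of a fraction $p$ as independent deletion of each non-primary vertex of the within-household epidemic graph $\mathcal{G}^{(n)}$, so that the vaccinated household reproduction number is $R_*(p)=\mu_G(1-p)\pp{1+\mu_H(p)}$, where $\mu_H(p)$ is the mean number of secondary cases still reachable from the primary. Two bounds then bracket $\mu_H(p)$: since a secondary must at least be unvaccinated, $\mu_H(p)\le(1-p)\mu_H$, with equality iff every secondary lies at graph-distance $1$ (a star, i.e.\ $n_H\le 2$); and since survival of one shortest path suffices, $\mu_H(p)\ge\sum_{k}\mu_k(1-p)^k$, with equality iff blocked shortest paths are never circumvented by longer ones ($n_H\le 3$). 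The upper bound reproduces exactly the star-graph (that is, $R_I$) critical coverage $1-p_C^{I}=1/R_I$, giving $R_I\ge R_V$; the lower bound reproduces the $R_0$ Lotka–Euler relation at $1-p=1/R_0$, giving $R_V\ge R_0$; and the two equality cases deliver the stated strictness thresholds $n_H>2$ and $n_H>3$. Since $R_V\equiv 1$ when $R_*\le1$, no $R_V$ comparison is needed for declining epidemics, and the entire declining chain follows from the growing one through the sign reversal already built into the comparison principle at $\lambda<1$.
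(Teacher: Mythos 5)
Your framework (all of $R_*, R_I, R_0, R_2, R_{H\!I}$ as roots of $g_A(\lambda)=1-\mu_G\sum_k c_k^A\lambda^{-(k+1)}$ with common mass $\sum_k c_k^A=1+\mu_H$, hence common value $g_A(1)=1-R_*$) is sound and gives part (a) exactly as the paper does. The partial-sum/Abel-summation comparison principle is the stochastic-ordering route that the paper itself describes in Section~\ref{subsubsec:genviewcomp} as an equivalent alternative to its direct computations, and it does dispose of $R_*>R_I\ge R_0$ and $R_I\ge R_2$ cleanly, with the correct strictness conditions. Your $R_V$ bracketing is essentially the paper's argument: the upper bound $\mu_H(p)\le(1-p)\mu_H$ (strict iff $n_H>2$) evaluated at $1-p=1/R_I$ gives $R_V\le R_I$, and the lower bound $\mu_k(p)\ge(1-p)^k\mu_k$ (equality iff $n_H\le3$, because for $n\le 3$ there is at most one chain of any given length to an individual) evaluated at $1-p=1/R_0$ gives $R_V\ge R_0$. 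That part is complete.

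The genuine gap is the pair of comparisons involving $R_{H\!I}$, which you correctly identify as the crux but do not prove. Two things are missing. First, your ``pivotal elementary fact'' $a^{(n)}\le\mu_1^{(n)}$ is not elementary: it is equivalent to $\mu_1^{(n)}\mu_H^{(n)}\ge\sum_{k\ge2}\mu_k^{(n)}$, which is precisely the $k=0$ case of the inequality the paper proves at \eqref{muinequ} by a nontrivial probabilistic argument, so asserting it without proof already leaves $R_2$ vs.\ $R_{H\!I}$ (and the claim $b<\mu_1$ needed for $R_2$ to be well defined as stated) unestablished. Second, for $R_0$ vs.\ $R_{H\!I}$ your comparison principle requires the full tail ordering $\sum_{j>k}\mu_j^{(n)}\le\sum_{j>k}\pp{a^{(n)}}^{\,j}$ for every $k$, and the ``branching-process coupling'' you gesture at does not exist in any straightforward form: the within-household epidemic is \emph{not} dominated generation-by-generation by the Galton--Watson process with per-capita mean $a^{(n)}$ (indeed $\mu_1^{(n)}>a^{(n)}$, so domination fails already at $k=1$ in the direction you would need for the tails). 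What the paper actually proves is the ratio bound $\sum_{j\ge k+2}\mu_j^{(n)}<\mu_{k+1}^{(n)}\mu_H^{(n)}$, obtained by conditioning on generations $0,\dots,k+1$ of the epidemic graph, over-counting each later infective through one of its generation-$(k+1)$ ancestors, and stochastically dominating the progeny of each such ancestor by a fresh single-household epidemic with $n-1$ susceptibles; this ratio bound then telescopes to give the tail ordering. Without an argument of this type the hardest inequalities in part (b), $R_0>R_{H\!I}>1$ and $R_2>R_{H\!I}$ (and their reversals for declining epidemics), remain unproven.
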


\begin{rmk}
We conjecture that, in addition to Theorem~\ref{Hcomp}\thp{b}, $R_0 \ge R_2$ in a growing epidemic and $R_0 \le R_2$ in a declining epidemic, with strict inequalities if and only if $n_H>2$, so that Theorem \ref{Hcomp}\thp{b} should take the form $R_* > R_I \ge R_V \ge R_0 \ge R_2 > R_{H\!I} >1$ and $R_* < R_I \le R_0 \le R_2 < R_{H\!I}  <1$ in the two cases, respectively. Although we have yet to find a complete proof, the conjecture is supported by extensive numerical results. We discuss it further in Appendix~\ref{app:R0R2comp}, where it is proved for $n_H \le 3$.
\end{rmk}

\begin{rmk}
If the epidemic is highly locally infectious then $\mu_1=\mu_H$ and it is readily seen that part \thp{a} of Theorem \ref{Hcomp} still holds, $R_*>R_I=R_V=R_0=R_2>R_{H\!I}>1$ in a growing epidemic and $R_*<R_I=R_0=R_2<R_{H\!I}<1$ in a declining epidemic.
\end{rmk}

A key finding of Goldstein et al.~\cite{GoldsteinEtal2009} is that, for a growing epidemic, $R_* \ge R_V \ge \bar{R}_{H\!I}$, thus enabling upper and lower bounds to be obtained for the critical vaccination coverage when individuals are vaccinated uniformly at random with a perfect vaccine (note, though, that $\bar{R}_{H\!I}$ is not a threshold parameter and can be smaller than 1 even in a growing epidemic).  Note that Theorem~\ref{Hcomp} implies that
% for a growing epidemic $R_* \ge R_I \ge R_V \ge R_0 \ge R_{H\!I}$.  Thus,
$R_I$ is a sharper upper bound than $R_*$ for $R_V$ and $R_0$ is a sharper lower bound than $R_{H\!I}$ (which coincides with $\bar{R}_{H\!I}$ when all households have the same size and, as shown below, is greater than or equal to $\bar{R}_{H\!I}$ in a growing epidemic). Goldstein et al.~\cite{GoldsteinEtal2009} show that $R_\ast\ge R_{V\!L} \ge R_V$ for a growing epidemic. We show in Appendix~\ref{app:RIRVLcomp} that $R_{V\!L}$ and $R_{I}$ cannot in general be ordered.

In Appendix~\ref{app:R0RHIcomp} we investigate the possible ordering of variants of $R_{H\!I}$.
Concerning the  reproduction numbers $\bar{R}_{H\!I}$ and $\hat{R}_{H\!I}$, Goldstein et al.~\cite{GoldsteinEtal2009} prove that\footnote{In our notation.} $\bar{R}_{H\!I} \le \hat{R}_{H\!I}$ always holds (see their Proposition A4.1) and we show that $\bar{R}_{H\!I}\le R_{H\!I}$. So we conclude that, by virtue of Theorem \ref{Hcomp}\thp{b}, in a growing epidemic,
\begin{equation*}
R_0  > R_{H\!I} \geq \bar{R}_{H\!I}.
\end{equation*}
(but note that even in a growing epidemic $\bar{R}_{H\!I}$ might or might not be greater than 1). However, in a declining epidemic, $R_0$ and $\bar{R}_{H\!I}$ cannot be ordered in general. Finally, we also construct an example to show that  no general order exists between $R_0$ and $\hat{R}_{H\!I}$ (either in a growing or a declining epidemic).

%Concerning the  reproduction numbers $\bar{R}_{H\!I}$ and $\hat{R}_{H\!I}$, Goldstein et al.~\cite{GoldsteinEtal2009} prove that\footnote{In our notation.} $\bar{R}_{H\!I} \le \hat{R}_{H\!I}$ always holds (see their Proposition A4.1) and we show in Appendix~\ref{app:R0RHIcomp} that $\bar{R}_{H\!I}\le R_{H\!I}$ always. So we conclude that, by virtue of Theorem \ref{Hcomp}\thp{b}, in a growing epidemic,
%\begin{equation*}
%R_0  > R_{H\!I} \geq \bar{R}_{H\!I}.
%\end{equation*}
%(but note that even in a growing epidemic $\bar{R}_{H\!I}$ might or might not be greater than 1). However (see Appendix~\ref{app:R0RHIcomp}), in a declining epidemic, $R_0$ and $\bar{R}_{H\!I}$ cannot be ordered in general. Finally, in Appendix~\ref{app:R0RHIcomp} we also construct an example to show that  no general order exists between $R_0$ and $\hat{R}_{H\!I}$ (either in a growing or a declining epidemic).

One may argue that our generalisation of $R_{H\!I}$ to populations with unequal household sizes is more natural than $\bar{R}_{H\!I}$.  Unlike $\bar{R}_{H\!I}$, it is a threshold parameter and, unlike
$\hat{R}_{H\!I}$, it can always be ordered with $R_0$.  Moreover, for a growing epidemic, $R_{H\!I}$ is a sharper lower bound than $\bar{R}_{H\!I}$ for $R_0$, and hence also for
$R_V$ (see Theorem~\ref{Hcomp}). In a similar vein, our generalisation of $R_2$ to populations with unequal household sizes seems more natural than that in \cite{BalSirTra2010}.
%\note{PT: perhaps in Remark}

\subsubsection{Network-households model}
\label{subsubsec:NHcomp}
We now consider briefly relations among reproduction numbers for the network-households model, for which the calculation of $R_0$ is outlined at the end of Section~\ref{subsec:R0}.  Analogues of $R_*,  R_I, R_V, R_{H\!I}$ and $R_2$ are easily obtained.  Omitting the details, $R_*=\tilde{\mu}_G+\mu_G\mu_H$, $R_I$ is the unique root in $(0,\infty)$ of $g_I^{N\!H}(\lambda)$,
where
\begin{equation*}
g_I^{N\!H}(\lambda)=1- \frac{\tilde{\mu}_G}{\lambda} - \frac{\mu_H \mu_G}{\lambda^2},
\end{equation*}
$R_V$ is defined in the usual way via the (perfect vaccine) critical vaccination coverage,
$R_{H\!I}$ is the  unique root in $(a,\infty)$ of $g_{H\!I}^{N\!H}(\lambda)$, where (cf.~\eqref{gHI})
\begin{equation*}
g_{H\!I}^{N\!H}(\lambda)=1- \frac{\tilde{\mu}_G}{\lambda} - \frac{\mu_G}{\lambda} \sum_{n=2}^{n_H} \pi_n\frac{a^{(n)}}{\lambda-a^{(n)}} \qquad (\lambda > a),
\end{equation*}
and $R_2$ is the  unique root in $(b,\infty)$ of $g_2^{N\!H}(\lambda)$, where (cf.~\eqref{g2unequal})
\begin{equation*}
g_2^{N\!H}(\lambda)=1- \frac{\tilde{\mu}_G}{\lambda} - \frac{\mu_G}{\lambda} \sum_{n=2}^{n_H} \pi_n\frac{\mu_1^{(n)}}{\lambda-b^{(n)}} \qquad (\lambda > b).
\end{equation*}
With the above definitions, Theorem~\ref{Hcomp} holds also for the network-households model; the proof is essentially the same as for the households model and hence omitted.

Analogues of $\hat{R}_{H\!I}$ and $\hat{R}_2$ can also be defined.  On average, a fraction $1/(1+\mu_H)$ of infectives are household primary cases, who each make a mean of $\tilde{\mu}_G$ global contacts, and a  fraction $\mu_H/(1+\mu_H)$ of infectives are household secondary cases, who each make a mean of $\mu_G$ global contacts.  Arguing as in the derivation of~\eqref{rhihatnew} then leads to
\begin{equation*}
\hat{R}_{H\!I}=\frac{\tilde{\mu}_G+\mu_H(1+\mu_G)}{1+\mu_H},
\end{equation*}
but note that $\hat{R}_{H\!I}$ does not necessarily equal $R_{H\!I}$ when all households have the same size.  Arguing as in the derivation of~\eqref{g2} yields that $\hat{R}_2$ is the largest positive root of $\hat{g}_2^{N\!H}(\lambda)$, where
\begin{equation*}
\hat{g}_2^{N\!H}(\lambda)=1- \frac{b+\tilde{\mu}_G}{\lambda} +\frac{b\tilde{\mu}_G-\mu_1\mu_G}{\lambda^2}.
\end{equation*}
The reproductions numbers $\hat{R}_2$ and $R_2$ do coincide when all households have the same size.
%\note{no new line}
Comparisons involving $R_0, \hat{R}_2$ and $\hat{R}_{H\!I}$ are more involved and are not considered here.

%Note that, even when the household sizes are all equal,
%it is not necessarily the case that $b\tilde{\mu}_G-\mu_1\mu_G \ge 0$, so $\hat{g}^{N\!H}_2(\lambda)=0$ may have two strictly positive solutions.  Consequently, comparisons involving $R_0, \hat{R}_2$ and $\hat{R}_{H\!I}$ are more involved and are not considered here.

\subsubsection{Generational view of comparisons}
\label{subsubsec:genviewcomp}

For the households model, the reproduction numbers $R_0, R_*, R_I, R_{H\!I}$ and $R_2$ can all be obtained by viewing local epidemics on an appropriate generation basis, with any
such reproduction number, $R_A$ say, being given by the unique positive root of the function $g_A$ defined by
\begin{equation}
\label{gA}
g_A(\lambda)=1-\mu_G\sum_{k=0}^{\infty}\frac{\mu_k^A}{\lambda^{k+1}},
\end{equation}
where $\mu_0^A,\mu_1^A,\cdots$ are the mean generation sizes associated with $R_A$,
averaged with respect to the size-biased household size distribution $(\pi_n) = (\pi_n: n=1,2,\cdots,n_H)$.  The mean generations sizes associated with $R_0,R_{H\!I}$ and $R_2$ have been described previously and lead to~\eqref{g0H}, \eqref{gHI} and \eqref{g2unequal}, respectively.  For $R_*$, they are given by $\mu_0^*=1+\mu_H$
and $\mu^*_k=0$ $(k=1,2,\cdots)$, so $g_*(\lambda)=1-\mu_G\frac{1+\mu_H}{\lambda}$, whence $R_*$ is given by \eqref{rstar}. For $R_I$ they are given by $\mu_0^I=1, \mu_1^I=\mu_H$ and $\mu^I_k=0$ $(k=2,3,\cdots)$, leading to \eqref{gind}.

Observe that, for each $A$, $\sum _{k=0}^{\infty}\mu_k^A=1+\mu_H$, so we can define a random variable $X^A$ having probability mass function $\mathbb{P}(X^A=k)=\mu_k^A/(1+\mu_H)$ $(k=0,1,\cdots)$, whose interpretation is the household-generation (associated with $R_A$) of an an infective chosen uniformly at random from all infectives in a household with size chosen according to the size-biased distribution $(\pi_n)$.  Moreover, $R_A$ is then given by the unique solution in $(0,\infty)$ of the equation
\begin{equation}
\label{RAXlambda}
\mathbb{E}\left[\lambda^{-(X^A+1)}\right]=\frac{1}{\mu_G(1+\mu_H)}.
\end{equation}
Now, for $x\ge0$, $\lambda^{-x}$ is increasing in $x$ if $\lambda<1$ and decreasing if
$\lambda>1$.  Thus, if for two reproduction numbers, $R_A$ and $R_B$ say, $X^A \overset{st}{\le} X^B$ ($X^A$ stochastically smaller than $X^B$, i.e.~$\mathbb{P}(X^A\le x ) \ge \mathbb{P}(X^B \le x)$ for all $x\in \mathbb{R}$)
then it follows that $R_A \ge R_B$ in a growing epidemic and $R_A \le R_B$ in a declining epidemic.

The above observation provides an intuitive explanation for all of the comparisons in Theorem~\ref{Hcomp} (except those involving $R_V$) and also for the conjecture concerning $R_0$ and $R_2$.  Indeed, the comparisons in Theorem~\ref{Hcomp} can be proved by showing stochastic ordering of the associated $X^A$s,
though this approach is generally no easier and sometimes harder than the proofs in Section~\ref{subsec:Hcompproof}.

The above approach provides a simple proof of comparisons of $R_0^{\mbox{r}}$ and $R_0^{\mbox{g}}$, where $R_0^{\mbox{r}}$ and $R_0^{\mbox{g}}$ denote the values of $R_0$ obtained using rank and true generations, respectively (see Section \ref{Hmodel}). Suppose first that all households have size $n$.  Let $\mu_0^{\mbox{r}}, \mu_1^{\mbox{r}},\cdots, \mu_{n-1}^{\mbox{r}}$ and $\mu_0^{\mbox{g}}, \mu_1^{\mbox{g}},\cdots, \mu_{n-1}^{\mbox{g}}$ denote the mean rank and mean true generation sizes, respectively, and let $X^{\mbox{r}}$ and
$X^{\mbox{g}}$ denote the corresponding induced generation random variables.  Consider a realisation of the augmented version of the random graph $\mathcal{G}^{(n)}$ defined in Section~\ref{Hmodel}. If $n\le2$ then the rank and true generation numbers coincide for all infectives.  Suppose that $n \ge 3$ and for any infective, $i$ say, let $r_i$ and $g_i$ denote its rank and true generation numbers, respectively.  Then $r_i \le g_i$, since $r_i$ is the number of edges in the \emph{shortest} chain joining the initial infective $0$ to $i$.  However, if there is a chain joining $0$ to $i$ having strictly more edges than $r_i$ but strictly less total time than any such chain of length $r_i$ then $g_i > r_i$. It follows that, for $n \ge 3$, $\sum_{j=0}^k \mu_j^{\mbox{r}} \ge \sum_{j=0}^k \mu_j^{\mbox{g}}$ $(k=0,1,\cdots,n-1)$,
which implies that $X^{\mbox{r}} \overset{st}{\le} X^{\mbox{g}}$.  Taking expectations with respect to the size-biased household size distribution $(\pi_n)$ shows that the same result holds for populations
with unequal household sizes, provided $n_H \ge 3$. Hence in a growing epidemic
$R_0^{\mbox{r}}  \geq R_0^{\mbox{g}}$, whilst in a declining epidemic
$R_0^{\mbox{r}}  \leq R_0^{\mbox{g}}$.

\subsection {Comparisons involving $R_r$}
\label{subsubsec:Rrcomp}

Although $R_r$ and $R_0$ cannot in general be ordered (see Appendix~\ref{app:RandomTVI}), Theorem~\ref{propos} below (proved in Section~\ref{subsec:HRrproof}) shows that, for the most commonly-studied models in the literature, in a growing epidemic $R_r \ge R_0$ and in a declining epidemic $R_r \le R_0$. For this purpose, it is convenient to consider two broad classes of models. The first class contains those models for which $\mathcal{I}(t)=J {\gtpdf}(t)$  for all $t\ge 0$, for which the shape of the infectivity profile is not random, but the magnitude $J$ is.  (Recalling that $\int_0^{\infty} {\gtpdf}(t){\rm d}t=1$, we have that $J=\int_0^{\infty} \mathcal{I}(t){\rm d}t$ and $\mathbb{E}[J]=1$.)
Another class assumes that the duration of the infectious period is random but, conditioned on an individual being still infectious $t$ time units after being infected, the infectivity is non-random, i.e., $\mathcal{I}(t) =f(t) \indic(T_I>t)$ for $t\geq 0$, where $f(t)$ is a deterministic function and $T_I$ is a random variable denoting the infectious period, which satisfy $\int_0^{\infty} f(t)\mathbb{P}(T_I>t) {\rm d}t=1$. (Throughout the paper, for an event, $\ev$ say, $\indic(\ev)$ denotes its indicator function; i.e.~$\indic(\ev)=1$ if the event $\ev$ occurs and $\indic(\ev)=0$ if $\ev$ does not occur.  Thus, in the present setting, $\mathcal{I}(t) =f(t)$ if $t<T_I$ and $\mathcal{I}(t) = 0$ if $t\ge T_I$. ) Note that the standard stochastic SIR model (Andersson and Britton \cite{AndBri2000}, Chapter 2) is in this class ($f(t)$ is constant).
A non-random time-varying infectivity profile, i.e.~$\mathcal{I}(t) = {\gtpdf}(t)$ for all $t\ge 0$ is a special case of both classes.

\begin{thm}
\label{propos}
\begin{itemize}
\item[(a)] For all choices of infectivity profile $\mathcal{I}(t)$  $(t \geq 0)$,
\[
R_r=1 \iff \widetilde{R}_r=1 \iff R_0 = 1.
\]
\item[(b)] If $\mathcal{I}(t) = J w(t)$ ($t \geq 0$), where $J$ is a non-negative random variable,
then in a growing epidemic,
\[
R_r \ge R_0 >1,
\]
and in a declining epidemic,
\[
R_r \le R_0 <1.
\]
\item[(c)] If $\mathcal{I}(t) =f(t) \indic(T_I>t)$ ($t \geq 0$), where $f(t)$ is a deterministic function and $T_I$ a non-negative random variable, then in a growing epidemic,
\[
R_r \ge \widetilde{R}_r \ge R_0 >1,
\]
and in a declining epidemic,
\[
R_r \le \widetilde{R}_r \le R_0 <1.
\]
\end{itemize}
The above results still hold if a latent period independent of the remainder of the infectivity profile is added.
\end{thm}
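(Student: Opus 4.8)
The plan is to reduce every comparison to a comparison of the underlying real-time growth rates, and then to a comparison of three ``generation-weighted'' Laplace transforms evaluated at a common point. The common structure is this: writing $T_i$ for the (real-time) infection time of member $i$ of a single household started by its primary case (with $T_0=0$), and using that $i$'s own forward infectivity is independent of its ancestors' profiles and hence of $(T_i,\indic(i\text{ infected}))$, the Laplace transform of the household infectivity profile factorises as
\[
\mathcal{L}_{\beta_H}(r)=\mu_G\,\mathcal{M}_{\gtrv}(r)\,\Lambda(r),\qquad \Lambda(r)=\bbE\Big[\sum_{i}\e^{-rT_i}\indic(i\text{ infected})\Big].
\]
The approximations $\mathcal{L}_{\beta_H}^{(0)}$ and $\widetilde{\mathcal{L}}_{\beta_H}$ carry the same prefactor $\mu_G\mathcal{M}_{\gtrv}(r)$, with $\Lambda(r)$ replaced by $\Lambda^{(0)}(r)=\sum_{k}\mu_k(\mathcal{M}_{\gtrv}(r))^{k}$ and by $\widetilde\Lambda(r)=\sum_{k}\mu_k(\mathcal{M}_{\gtrvt}(r))^{k}$. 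By the computation following \eqref{rtgrowth2} one has $R_0=1/\mathcal{M}_{\gtrv}(r_0)$ where $r_0$ solves $\mathcal{L}_{\beta_H}^{(0)}(r_0)=1$, while $R_r=1/\mathcal{M}_{\gtrv}(r)$ and $\widetilde R_r=1/\mathcal{M}_{\gtrv}(\tilde r)$ by \eqref{Rrfromr} and \eqref{Rtilderdef}. Since $\mathcal{M}_{\gtrv}$ is strictly decreasing, $\lambda\mapsto1/\mathcal{M}_{\gtrv}(\lambda)$ is strictly increasing, so any inequality between two of $R_r,\widetilde R_r,R_0$ is equivalent to the same inequality between the corresponding growth rates; and because each Laplace transform is strictly decreasing, that reduces to comparing the two relevant weighting factors at the crossing point of one of them. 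A growing epidemic ($R_*>1$) forces all crossing points positive and a declining one forces them negative, fixing the sign of $r$ in $\mathcal{M}_{\gtrv}(r)$.

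Part~(a) is then immediate: at $r=0$ all three moment generating functions equal $1$, so $\Lambda(0)=\Lambda^{(0)}(0)=\widetilde\Lambda(0)=1+\mu_H$ and $\mathcal{L}_{\beta_H}(0)=\widetilde{\mathcal{L}}_{\beta_H}(0)=\mathcal{L}_{\beta_H}^{(0)}(0)=\mu_G(1+\mu_H)=R_*$. Hence $r=\tilde r=r_0=0\iff R_*=1$, which by Theorem~\ref{Hcomp}(a) is equivalent to $R_0=1$, whereupon $R_r=\widetilde R_r=1/\mathcal{M}_{\gtrv}(0)=1$. The comparison $\widetilde R_r\ge R_0$ in~(c) needs only the \emph{marginal} ordering $\gtrvt\overset{st}{\le}\gtrv$, which I would establish by coupling (it holds, for instance, for the Markov profile, where $\gtrv\sim\mathrm{Exp}(1)$ and $\gtrvt\sim\mathrm{Exp}(1+\lambda_H)$): this gives $\mathcal{M}_{\gtrvt}(r_0)\ge\mathcal{M}_{\gtrv}(r_0)$ for $r_0>0$, hence $\widetilde\Lambda(r_0)\ge\Lambda^{(0)}(r_0)$, the declining case reversing because then $r_0<0$.

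For the remaining inequalities I would use the first-passage representation $T_i=\min_{P:0\to i}\sum_{e\in P}t_e$, where $t_e=t_{ii'}$ is the first local-contact time along the edge $e=i\to i'$ (infinite if absent); for $r>0$ this yields $\e^{-rT_i}\ge\e^{-r\sum_{e\in P_i}t_e}$ for any rank-minimal path $P_i$, and along a path the infectors are distinct, so the edge-times are conditionally independent given the household graph. For part~(b), the key structural fact is that when $\mathcal{I}=J\gtpdf$ the \emph{shape} of every contact time is $\gtpdf$ irrespective of $J$, so $t_{ii'}$ given $J_i$ (and given the edge present) is the minimum of a positive number of i.i.d.\ copies of $\gtrv$ and therefore satisfies $\bbE[\e^{-r t_{ii'}}\mid J_i]\ge\mathcal{M}_{\gtrv}(r)$ for \emph{every} value of $J_i$ and every $r>0$. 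Being pointwise in $J_i$, this bound survives conditioning on the whole graph; multiplying over the independent edges of $P_i$ and summing over rank generations gives $\Lambda(r_0)\ge\Lambda^{(0)}(r_0)$, i.e.\ $R_r\ge R_0$. (Note this argument reaches only $\mathcal{M}_{\gtrv}$, not $\mathcal{M}_{\gtrvt}$, which is exactly why~(b) asserts $R_r\ge R_0$ but not $R_r\ge\widetilde R_r$.)

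The hard part is $R_r\ge\widetilde R_r$ in~(c), equivalently $\Lambda(\tilde r)\ge\widetilde\Lambda(\tilde r)$. Here the pointwise trick fails: with $\mathcal{I}(t)=f(t)\indic(T_I>t)$ the contact-time shape is confined to $[0,T_I]$, so the graph is informative about $T_I$, and an infector with a long infectious period tends to contact late, which can push $\bbE[\e^{-r t_{ii'}}\mid T_I,\text{graph}]$ below $\mathcal{M}_{\gtrvt}(r)$. The inequality must instead be recovered from the two features that make the true $T_i$ small and that first-contact accounting ignores: a member is infected by the \emph{first} of possibly several potential infectors, and $T_i$ is a minimum over \emph{all} paths, not merely the rank-minimal one. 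I therefore expect the main obstacle to be a coupling of the true first-passage process with an idealised process whose edge-times are i.i.d.\ $\gtrvt$, showing that these favourable effects dominate the unfavourable conditioning bias; constructing that coupling (or an equivalent conditional stochastic-domination argument) is where the real work lies. The declining case follows by reversing all inequalities (the crossing points being negative), and the closing addendum holds because an independent latent period convolves each of $t_{ii'}$, $\gtrv$ and $\gtrvt$ with the same independent shift and so preserves every ordering used above.
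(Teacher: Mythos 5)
Your framework coincides with the paper's: the factorisation $\mathcal{L}_{\beta_H}(\theta)=\mu_G\mathcal{M}_{\gtrv}(\theta)\bigl\{1+\sum_k\mu_k\bbE[\e^{-\theta T_1}\mid\chi_k(1)]\bigr\}$, the reduction of every comparison to a comparison of decreasing Laplace transforms at a common crossing point whose sign is fixed by $\mathcal{L}_{\beta_H}(0)=R_*$, part \thp{a} read off at $\theta=0$, and part \thp{b} via $T_k\le\hat T_k$ together with the observation that each edge-time is the minimum of i.i.d.\ copies of $\gtrv$ uniformly in $J_i$. Those pieces are correct and match the paper's proof essentially step for step.

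The gap is in part \thp{c}, and it is the heart of the theorem. First, for $\widetilde R_r$ versus $R_0$ you assert $\gtrvt\overset{st}{\le}\gtrv$ ``by coupling'' and verify it only for the Markov profile; for general $f$ and $T_I$ you supply no argument. The paper proves the (weaker but sufficient) inequality $\mathcal{M}_{\gtrvt}(\theta)\ge\mathcal{M}_{\gtrv}(\theta)$ for $\theta>0$ by writing both sides as expectations over $\gtrv$ and applying Chebychev's `other' (Harris') inequality to the decreasing functions $\e^{-\theta x}$ and $\e^{-\lambda_H\int_0^{x}f(s)\,\mathrm{d}s}$; something of this kind is needed. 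Second, and more seriously, for $R_r\ge\widetilde R_r$ you correctly note that conditioning on the household graph biases the edge-times along the rank-minimal path, but you then leave the resolution as ``where the real work lies''. Your diagnosis is also aimed in the wrong direction: the proof does not need the min-over-paths or multiple-infector effects to overcome an unfavourable bias, because the conditioning bias is itself favourable. Conditioning on a fixed path of length $k$ and on the configuration $\Xi$ of all other edges, the time of the edge from $i'$ to $i'+1$ is distributed as $\gtrvzt{m(i')}$ -- the first-contact time given at least one contact \emph{and} given that $m(i')$ ``forbidden'' contacts (those that would create a shorter path) are not made. Absence of forbidden contacts is evidence for a smaller $\int_0^{T_I}f$, hence an earlier first contact, and Harris' inequality converts this into $\gtrvzt{m}\overset{st}{\le}\gtrvt$; multiplying along the path (the edge-times being conditionally independent given $\Xi$) and invoking $T_k\le\hat T_k$ then yields $\mathcal{L}_{\beta_H}(r)\ge\widetilde{\mathcal{L}}_{\beta_H}(r)$ for $r>0$. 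Without this forbidden-contact decomposition and the accompanying correlation inequality, part \thp{c} remains unproved.
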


\begin{rmk}
Note that for Reed-Frost type models (i.e.~models in which the latent period is constant and the infectious period is reduced to a single point in time, cf. Bailey \cite{Bailey1975}, Section 14.2, and Diekmann et al.~\cite{DiekEtal2013}, Section 3.2.1), the approximation $\mathcal{L}_{\beta_H}(r) \approx \mathcal{L}_{\beta_H}^{(0)}(r)$ (see \eqref{rtgrowth2}) is exact, as all infectious intervals equal the constant latent period, so $R_0 =R_r$. Thus, it is not generally possible to obtain strict inequalities in Theorem~\ref{propos}\thp{b}. However, if ${\gtpdf}(t)$ is a proper density function, i.e.~${\gtpdf}(t)< \infty$ for all $t \geq 0$, then the inequalities are strict (recall that we have assumed that not all households have size 1).  As noted in Section~\ref{subsec:Rr}, $R_r=\widetilde{R}_r$ if $n_H \le 2$.  See Remark~\ref{remarkstrict} after the proof of Theorem~\ref{propos} in Section~\ref{subsec:HRrproof} for further details.
\end{rmk}
\begin{rmk}
The proof of Theorem \ref{propos} also suggests how to construct the counterexample presented in Appendix~\ref{app:RandomTVI}, which gives a model (not belonging to either of the classes considered in Theorem \protect\ref{propos}) for which $R_r < R_0$ in a growing epidemic.
\end{rmk}

\begin{rmk}
Suppose that all secondary infections take place as soon as the primary individual in a household is infected. Then $\beta_H(t) = (1+\mu_H) \mu_G \gtpdf(t)=R_* \gtpdf(t)$ $(t \geq 0)$. Hence, $\mathcal{L}_{\beta_H}(r) = R_* \mathcal{M}_{\gtrv}(r)$ and it follows from \eqref{Rrfromr} and \eqref{rtgrowth1} that $R_r = R_*$.  This happens in the highly locally infectious limit $\lambda_H^{(n)} \to \infty$ $(n=2,3,\cdots)$, provided $\gtrv$ has mass arbitrarily close to zero, i.e.~provided $\inf\{t > 0:\gtpdf(t)>0\}=0$.
\end{rmk}

For a growing epidemic, Goldstein et al.~\cite{GoldsteinEtal2009} prove that $R_* \ge R_r$.  They also note that in most numerical simulations $R_{V\!L} \ge R_r \ge R_V$, though they show that the second inequality can be violated if the latent period is very large and they do not have a proof for the first inequality.  The first inequality held in all of their numerical simulations but the question whether or not the result holds in general was left open. In Appendix~\ref{app:RrRVLcomp} we show that $R_r$ and $R_{V\!L}$ cannot in general be ordered.  In their numerical simulations for a households SEIR model with exponentially distributed infectious and latent periods, Goldstein et al.~\cite{GoldsteinEtal2009} noted that $R_r$ can be less than $R_V$ when the mean latent period is very long and, in Appendix B of their paper, they give a mathematical explanation of that observation.  However, their proof assumes a constant latent period and does not hold for the model with exponentially distributed latent periods.  This is discussed further in Appendix~\ref{app:inflonglat}; see also the numerical example in Section~\ref{MarkovSIRHmod}.

Finally, although Goldstein et al.~\cite{GoldsteinEtal2009} consider only the growing epidemic case, it is easy to see (from (6.2.2) and Lemma 6.2.1 of their paper) that the same argument they use to prove $R_\ast\ge R_r$ leads, in a declining epidemic, to $R_\ast\le R_r$.

%%%%%%%%%%%%%%%%%%%%%%%%%%%%%%%%%%%%%%%%%%%%%%%%%%%%%%%%%%%%%%%
\section{Households-workplaces model and reproduction numbers}
\label{sec:housework}

\subsection{Model and generations of infections}
\label{sec:HWmod}
In this model each individual belongs to a household and to a workplace, and infectives make three types of contacts: global contacts, with individuals chosen uniformly at random from the entire population; household contacts, with individuals in the infective's own household;
and workplace contacts, with individuals in the infective's own workplace. In order to make branching process approximations for the early stages of the epidemic, and thus define threshold parameters, it is necessary to assume that, as the population size tends to infinity, the only short cycles of local contacts (see below) that can occur with non-zero probability are either within the same household or within the same workplace, which implies that a household and a workplace cannot share more than one person; see Ball and Neal~\cite{BalNea2002} and Pellis et al.~\cite{PelFerFra2009,PelBalTra2012} for further detail.

The mean number of global contacts made by a typical infective is $\mu_G$.  Household and workplace contacts are called local contacts.  As with the households model, we do not specify the full detail of local infection transmission, but we do assume that the spread within a household and the spread within a workplace can each be described in terms of generations of infection.
Let $n_H$ and $n_W$ denote respectively the sizes of the largest household and the largest workplace in the population.  Then, for $\ell=0,1,\cdots,n_H-1$, let $\mu_\ell^H$ be the mean size of the $\ell$th generation in a typical single-household epidemic with $1$ primary case and, for $\ell'=0,1,\cdots,n_W-1$, define $\mu_{\ell'}^W$ similarly for a typical single-workplace epidemic.
By a typical single-household (workplace) epidemic we mean one in which the primary case is obtained by choosing an individual uniformly at random from the entire population, so $\mu_\ell^H$
is household size-biased, as at (\ref{sizebiasgen}), and $\mu_{\ell'}^W$ is size-biased using the workplace size-biased distribution corresponding to \eqref{sizebias}. We also assume that the sizes of any given individual's household and workplace are asymptotically independent as the population size tends to infinity.

Let $\mu_H=\mu_1^H+\mu_2^H+\cdots+\mu_{n_H-1}^H$ be the mean size of a typical single-household epidemic, not including the primary case, and define $\mu_W$ similarly for a typical single-workplace epidemic.  We assume that $\mu_H>0$ and $\mu_W>0$, and that the population contains households and workplaces of size at least two.  If any of these conditions fails to hold then the model effectively reduces to the households model. For simplicity we assume that $\mu_G>0$.  We comment on the case $\mu_G=0$ at the end of Section~\ref{sec:HWComparisons}.

\subsection{The basic reproduction number $R_0$}
\label{subsec:HWR0}

The basic reproduction number $R_0$ for the households-workplaces model may be obtained by considering the following $3$-type branching process, which approximates the process of infectives in the epidemic model.  The three types of individual in the branching process are double-primary cases (type $1$), household-primary cases (type $2$) and workplace-primary cases (type $3$), which correspond to cases who are infected by a global contact, a workplace contact and a household contact, respectively.  In the branching process, the mother of a double-primary case is the person who infected it in the epidemic process, the mother of a household-primary case is the primary case in the corresponding single-workplace epidemic and the mother of a workplace-primary case is the primary case in the corresponding single-household epidemic.  Time in the branching process corresponds to generation number in the epidemic at large.  Thus, in the branching process, a typical double-primary case spawns on average $\mu_G$ double-primary cases at age $1$, $\mu_{\ell'}^W$ household-primary cases at age $\ell'$ ($\ell'=1,2,\cdots,n_W-1$) and $\mu_\ell^H$ workplace-primary cases at age $\ell$ ($\ell=1,2,\cdots,n_H-1$);
a typical household-primary case spawns on average $\mu_G$ double-primary cases at age $1$ and $\mu_\ell^H$ workplace-primary cases at age $\ell$ ($\ell=1,2,\cdots,n_H-1$); and a typical workplace-primary case spawns on average $\mu_G$ double-primary cases at age $1$ and $\mu_{\ell'}^W$ household-primary cases at age $\ell'$ ($\ell'=1,2,\cdots,n_W-1$).  The total number of individuals at time $k$ in this branching process corresponds to the total number of infectives in global generation $k$ in the epidemic process, so $R_0$ is given by the asymptotic geometric growth rate of this branching process.

It is convenient to introduce the following notation for future reference. For $d,d'=1,2,3$ and $k=0,1,\cdots$, let $\nu^{(dd')}_k$ be the mean number of type-$d'$ individuals spawned by a typical type-$d$ individual at age $k+1$ and, for $\lambda \in (0,\infty)$, let $\nu_{dd'}(\lambda)=\sum_{k=0}^{\infty} \nu^{(dd')}_k/\lambda^{k+1}$.  By the theory of multi-type general branching processes (see, for example, Haccou et al.~\cite{Haccou2005}, Section 3.3.2, and Jagers~\cite{Jagers1989}), the asymptotic geometric growth rate of the branching process, and hence also $R_0$, is given by the value of $\lambda$ such that the dominant eigenvalue of the matrix
\begin{equation}
\label{AHWmatrix}
A^{(HW)}(\lambda) = \left[ \nu_{dd'} (\lambda) \right ] = \begin{bmatrix}
\ds \frac{\mu_G}{\lambda} && \ds\sum_{\ell'=1}^{n_W-1}\dfrac{\mu_{\ell'}^W}{\lambda^{\ell'}} && \ds\sum_{\ell=1}^{n_H-1}\dfrac{\mu_\ell^H}{\lambda^\ell} \\
\ds\frac{\mu_G}{\lambda} && 0 && \ds\sum_{\ell=1}^{n_H-1}\dfrac{\mu_\ell^H}{\lambda^\ell} \\
\ds\frac{\mu_G}{\lambda} && \ds\sum_{\ell'=1}^{n_W-1}\dfrac{\mu_{\ell'}^W}{\lambda^{\ell'}} && 0
\end{bmatrix}
\end{equation}
is 1. Letting $A=\mu_G/\lambda$, $B=\sum_{{\ell'}=1}^{n_W-1}\mu_{\ell'}^W/\lambda^{\ell'}$ and $C=\sum_{\ell=1}^{n_H-1}\mu_\ell^H/\lambda^\ell$, the characteristic polynomial of $A^{(HW)}(\lambda)$ is
\begin{equation}
\label{charpolyAnu}
f(x)=x^3-Ax^2-(AB+AC+BC)x-ABC,
\end{equation}
which has a unique positive root.  Thus, since the matrix $A^{(HW)}(\lambda)$ is non-negative, its dominant eigenvalue is 1 if and only if $f(1)=0$.

Now
\begin{align*}
f(1)=0 &\iff ABC+AB+AC+BC+A-1=0 \\
&\iff A(B+1)(C+1)+BC-1=0.
\end{align*}
%For $k=0,1,\cdots,n_H+n_W-2$, define
%\begin{eqnarray*}
%\mathcal{A}(k) & = &\{(\ell,\ell'):\ell+\ell'=k,0\le \ell\le n_H-1, 0 \le \ell' \le n_W-1\}\\
%\mathcal{B}(k)& = & \{(\ell,\ell'):\ell+\ell'=k+1,1\le \ell\le n_H-1, 1 \le \ell' \le n_W-1\}.
%\end{eqnarray*} %PT
%Note that $\mathcal{B}(0)$ is empty.  Then
%\[
%BC=\left(\sum_{\ell=1}^{n_H-1}\frac{\mu_\ell^H}{\lambda^\ell}\right)\left(\sum_{{\ell'}=1}^{n_W-1}\frac{\mu_{\ell'}^W}{\lambda^{\ell'}}\right)=\sum_{k=1}^{n_H+n_W-2} \sum_{(\ell,\ell')\in \mathcal{B}(k)}\frac{\mu_\ell^H\mu_{\ell'}^W}{\lambda^{k+1}}
%\]
Further,
\begin{equation*}
%\label{BC}
BC=\left(\sum_{\ell=1}^{n_H-1}\frac{\mu_\ell^H}{\lambda^\ell}\right)\left(\sum_{{\ell'}=1}^{n_W-1}\frac{\mu_{\ell'}^W}{\lambda^{\ell'}}\right)=\sum_{k=1}^{n_H+n_W-3} 
\left(\sum_{\ell=\max(1,k-n_W+2)}^{\min(k,n_H-1)}\frac{\mu_\ell^H\mu_{k+1-\ell}^W}{\lambda^{k+1}}\right)
\end{equation*}
and, recalling that $\mu_0^H=\mu_0^W=1$,
%\begin{align*}
%A(B+1)(C+1) &=\frac{\mu_G}{\lambda}\left(\sum_{\ell=0}^{n_H-1}\frac{\mu_{\ell}^H}{\lambda^\ell}\right)\left(\sum_{{\ell'}=0}^{n_W-1}\frac{\mu_{\ell'}^W}{\lambda^{\ell'}}\right) \\
%&= \mu_G \sum_{k=0}^{n_H+n_W-2} \sum_{(\ell,{\ell'})\in \mathcal{A}(k)}\frac{\mu_\ell^H\mu_{\ell'}^W}{\lambda^{k+1}}.
%\end{align*}
\begin{align*}
%\label{ABP+1C+1}
A(B+1)(C+1) &=\frac{\mu_G}{\lambda}\left(\sum_{\ell=0}^{n_H-1}\frac{\mu_{\ell}^H}{\lambda^\ell}\right)\left(\sum_{{\ell'}=0}^{n_W-1}\frac{\mu_{\ell'}^W}{\lambda^{\ell'}}\right) \\
&= \mu_G \sum_{k=0}^{n_H+n_W-2} 
\left(\sum_{\ell=\max(0,k-n_W+1)}^{\min(k,n_H-1)}\frac{\mu_\ell^H\mu_{k-\ell}^W}{\lambda^{k+1}}\right).
\end{align*}
Thus, the dominant eigenvalue of $A^{(HW)}(\lambda)$ is 1 if and only if $g^{(HW)}_{0} (\lambda )=0$, where
\begin{equation}
\label{gnHnW}
g^{(HW)}_{0} (\lambda )=1-\sum _{k=0}^{n_{H}+n_W -2}\frac{c_{k} }{\lambda ^{k+1} },
\end{equation}
with $c_0=\mu_G$ and, for $k=1,2,\cdots,n_H+n_W-2$,
%and, for $k=0,1,\cdots,n_H+n_W-2$,
\begin{equation}
c_k=\mu_G\sum_{\ell=\max(0,k-n_W+1)}^{\min(k,n_H-1)}\mu_\ell^H\mu_{k-\ell}^W
+\sum_{\ell=\max(1,k-n_W+2)}^{\min(k,n_H-1)} \mu_\ell^H\mu_{k+1-\ell}^W,
%c_k=\mu_G\sum_{(\ell,{\ell'})\in \mathcal{A}(k)}\mu_\ell^H\mu_{\ell'}^W +\sum_{(\ell,{\ell'})\in \mathcal{B}(k)}\mu_\ell^H\mu_{\ell'}^W.
 \label{eq:coefficientsHW}
\end{equation}
where the second sum in~\eqref{eq:coefficientsHW} is zero when $k=n_H+n_W-2$.

It follows that $R_0$ is given by the unique positive root of $g^{(HW)}_{0}$, giving a new (and simpler) proof of Pellis et al.~\cite{PelBalTra2012}, Corollary 2.

\subsection{The clump reproduction number $R_*$}
\label{subsec:HWR*}

The first reproduction number proposed for the households-workplaces model was the reproduction number for the proliferation of local infectious clumps, denoted by $R_*$; see Ball and Neal~\cite{BalNea2002}.  A local infectious clump is the set of individuals infected by chains of local infections (i.e.~through households and workplaces) from a typical single initial infective in an otherwise fully susceptible population.  In the early stages of an epidemic, initiated by few infectives in a large population, such clumps (which are initiated by global contacts) intersect with small probability, unless the local epidemic is itself supercritical.  The clump reproduction number $R_*$ is the expected number of clumps generated by a typical clump and is given by
\begin{equation}
\label{hwrstar}
R_*=\left\{
\begin{array}{ll}
\frac{\mu_G(1+\mu_H)(1+\mu_W)}{1-\mu_H \mu_W}\qquad&\text{if } \mu_H \mu_W < 1,\\
\infty &\text{otherwise.}
\end{array} \right.
\end{equation}
Note that setting $\mu_W=0$ in~\eqref{hwrstar} yields~\eqref{rstar}; when $\mu_W=0$, the model reduces to the households model and a typical local infectious clump becomes the set of people infected in a typical single-household epidemic.

\subsection{The household-household and workplace-workplace reproduction numbers $R_H$ and $R_W$}
\label{subsec:RHRW}

The clump reproduction number $R_*$ has a number of disadvantages, as pointed out by Pellis et al.~\cite{PelFerFra2009}.  In particular, it can be infinite and the time for a clump to form increases as $\mu_H \mu_W$ tends to 1 and becomes comparable with the time of the entire epidemic.  Thus
a household-to-household reproduction number $R_H$, defined as the expected number of households infected by a typical infected household in an otherwise totally susceptible population, was introduced in \cite{PelFerFra2009}.  A household may be infected either globally (i.e.~via a global contact) or locally (i.e.~via a contact within a workplace).  It follows
(see~\cite{PelFerFra2009} for details) that $R_H$ is the largest eigenvalue of the household next-generation matrix
\begin{equation}
\label{HWM_H}
M_H = \left[ \begin{array}{cc}\mu_G(1+\mu_H)&\mu_W(1+\mu_H)\\\mu_G(1+\mu_H)&\mu_H \mu_W\end{array}\right] ,
\end{equation}
whence $R_H$ is given by the unique solution in $(0, \infty)$ of $g_H(\lambda)=0$, where
\begin{equation}
\label{gh}
g_H(\lambda)=1-\frac{\mu_G+\mu_G \mu_H+\mu_H\mu_W}{\lambda}-\frac{\mu_G\mu_W(1+\mu_H)}{\lambda^2}.
\end{equation}
A workplace-to-workplace reproduction number $R_W$ can be defined in a similar fashion.  %By symmetry, $R_W$ and $R_H$ relate in the same way to other reproduction numbers, so we do not consider $R_W$ in the sequel.

\subsection{The individual reproduction number $R_I$}
\label{subsec:HWRI}

An individual-based reproduction number $R_I$ can also be defined (see Pellis et al.~\cite{PelFerFra2009}, supplementary material), as for the households model, by attributing all secondary cases in a household or workplace to the corresponding primary case, leading to the next-generation matrix
\[
M_I^{(HW)}=\left[ \begin{array}{ccc}\mu_G&\mu_H&\mu_W\\
\mu_G&0&\mu_W\\
\mu_G&\mu_H&0\end{array}\right] .
\]
Calculating the characteristic polynomial of $M_I^{(HW)}$ shows that $R_I$ is given by the unique solution in $(0,\infty)$ of $g_I^{(HW)}(\lambda)=0$, where
\begin{equation}
\label{ghwi}
g_I^{(HW)}(\lambda)=1-\frac{\mu_G}{\lambda}-\frac{\mu_G\mu_H+\mu_G\mu_W+\mu_H\mu_W}{\lambda^2}
-\frac{\mu_G\mu_H\mu_W}{\lambda^3}.
\end{equation}

\subsection{The perfect and leaky vaccine-associated reproduction numbers $R_V$ and $R_{V\!L}$}
\label{subsec:HWRV}

The perfect and leaky vaccine-associated reproduction numbers, $R_V$ and
$R_{V\!L}$,  can be defined in an analogous fashion as for the
households model at (\ref{rv}) and (\ref{rvl}), respectively, though we
consider only the former in the comparisons in Section~\ref{subsec:HcompnotRr}.

\subsection{The exponential-growth-associated reproduction number $R_r$ }
\label{subsec:HWRr}

An exponential-growth-associated reproduction number $R_r$ can be defined in a similar vein as for the households model as follows.  Consider the $3$-type branching process used in Section~\ref{subsec:HWR0} to derive $R_0$, but run in real time rather than in generations. Let $r$ be
the Malthusian parameter (real-time growth rate) of this branching process.  Then $R_r$ is defined as at~\eqref{Rrfromr} for the households model.

To determine $R_r$, a more-detailed description of the households-workplaces model is required, which we now give.
As in the households model, suppose that infectives have independent infectivity profiles, each distributed as $\mathcal{I}(t)$ $(t \ge 0)$. Again $\mathcal{I}(t)$ is normalised so that $\mathbb{E}\left[\int_0^{\infty}\mathcal{I}(t) {\rm d}t\right] =1$. If an infective has infectivity profile $\mathcal{I}(t)$ $(t \ge 0)$, then $t$ time units after infection he/she makes global infectious contacts at overall rate $\mu_G \mathcal{I}(t)$, infectious contacts to any given member of his/her household at rate $\lambda_H^{(n)}I(t)$ and to any given member of his/her workplace at rate $\lambda_W^{(n')}I(t)$, where $n$ and $n'$ are the sizes of the infective's household and workplace, respectively.  As previously, let $\gtpdf(t)=\mathbb{E}[\mathcal{I}(t)]$ $(t \ge 0)$ and recall that $\gtpdf$ is the probability density function of a random variable $\gtrv$ having moment-generating function $\mathcal{M}_{\gtrv}(\theta)$.

Consider a typical single-household epidemic with one initial infective, who becomes infected at time $t=0$.  For $t\ge0$, let $\xi_H(t)$ be the rate at which new infections occur in that single-household epidemic at time $t$.  Define $\xi_W(t)$ $(t \ge 0)$ similarly for a typical single-workplace epidemic.
%For $r \in(-\infty,\infty)$, let
%\begin{equation*}
%L_{\beta_H}(r)=\int_0^{\infty} \beta_H(t)\mathrm{e}^{-r t}\mathrm{d}t
%\qquad\mbox{and}\qquad
%L_{\beta_W}(r)=\int_0^{\infty} \beta_W(t)\mathrm{e}^{-r t}\mathrm{d}t
%\end{equation*}
%denote the Laplace transforms of $\beta_H$ and $\beta_W$.

Recall the 3-type real-time branching process introduced in Section~\ref{subsec:HWR0}.  For $t\ge0$, let $M(t)=[m_{dd'}(t)]$, where $m_{dd'}(t)$ is the mean rate at which a type-$d$ individual having age $t$ spawns type-$d'$ individuals ($d,d'=1,2,3$).
Then
\begin{equation*}
M(t)=\left[ \begin{array}{ccc}\mu_G \gtpdf(t)&\xi_W(t) &\xi_H(t)\\
\mu_G \gtpdf(t)&0&\xi_H(t)\\
\mu_G \gtpdf(t)&\xi_W(t) &0\end{array}\right] .
\end{equation*}

For $r \in(-\infty,\infty)$, let $\mathcal{L}_M(r)=\int_0^{\infty} M(t)\mathrm{e}^{-\theta t}\mathrm{d}t$, where the integration is elementwise.  Then the real-time growth rate $r$ is given by the unique real value of $r$ such that the dominant eigenvalue of $\mathcal{L}_M(r)$ is one.

Observe that the matrix $\mathcal{L}_M(r)$ has the same structure of non-zero elements as the matrix $A^{(HW)}(\lambda)$ defined at~\eqref{AHWmatrix}.  The same argument as used in Section~\ref{subsec:HWR0} shows that $r$ is the unique real solution of the equation
\begin{equation}
\label{HWrdefeq}
\mu_G \mathcal{M}_{\gtrv}(r) (\mathcal{L}_{\xi_H}(r)+1)(\mathcal{L}_{\xi_W}(r)+1)+\mathcal{L}_{\xi_H}(r)\mathcal{L}_{\xi_W}(r)=1.
\end{equation}

Pellis et al.~\cite{PelFerFra2011} determine the real-time growth rate of the households-workplaces model by using a two-type branching process having mean offspring matrix $M_H$ (used at~\eqref{HWM_H} to define $R_H$) but again run in real time, which of course gives the same result.  We use the above $3$-type branching process to facilitate comparison of $R_r$ with $R_0$.

Similar to the households model, the difficulty in using~\eqref{HWrdefeq} to calculate $r$ is that generally there is no tractable expression for $\mathcal{L}_{\xi_H}(r)$ or $\mathcal{L}_{\xi_W}(r)$.  However, we can use similar approximations to those used in Section~\ref{subsec:Rr} for the households model.  First, it is easily verified that using the approximations (cf.~\eqref{rtgrowth2})
$\mathcal{L}_{\xi_H}(r)\approx \mathcal{L}_{\xi_H}^{(0)}(r)$ and $\mathcal{L}_{\xi_W}(r)\approx \mathcal{L}_{\xi_W}^{(0)}(r)$, where
\begin{equation}
\label{LxiH0}
\mathcal{L}_{\xi_H}^{(0)}(r)= \sum_{\ell=1}^{n_H-1} \mu_\ell^H \left(\mathcal{M}_\gtrv(r)\right)^\ell
\end{equation}
and
\begin{equation}
\label{LxiW0}
\mathcal{L}_{\xi_W}^{(0)}(r)=\sum_{{\ell'}=1}^{n_W-1} \mu_{\ell'}^W \left(\mathcal{M}_\gtrv(r)\right)^{\ell'},
\end{equation}
leads to $R_r$ being given by $R_0$.

%Second, suppose that $\lambda_H^{(n)}$ and $\lambda_W^{(m)}$ are independent of household size $n$ and workplace size $m$, respectively, say $\lambda_H^{(n)}=\lambda_H$ $(n=2,3,\cdots,n_H)$ and $\lambda_W^{(m)}=\lambda_W$ $(m=2,3,\cdots,n_W)$. Let $\widetilde{W}^{(n,H)}$ be a random variable describing the time of the first within-household contact from one specified individual to another specified individal

Second, for $n=2,3,\cdots,n_H$, let $\gtrvt^{(n,H)}$ be a random variable describing the time of the first within-household
contact from one specified individual to another specified individual in a household of size $n$ and, for $n'=2,3,\cdots,n_W$, define
$\gtrvt^{(n',W)}$ analogously for a workplace contact.  Also, for $n=1,2,\cdots,n_H$, let $\pi_n^H$ be the (size-biased) probability an individual chosen uniformly at random from the population resides in a household of size $n$ and, for $n'=1,2,\cdots,n_W$, let $\pi_n^W$ be the corresponding workplace size-biased probability.
Then (cf.~\eqref{rtgrowth3} and~\eqref{rtgrowth5}), we have
$\mathcal{L}_{\xi_H}(r)\approx \widetilde{\mathcal{L}}_{\xi_H}(r)$ and $\mathcal{L}_{\xi_W}(r)\approx \widetilde{\mathcal{L}}_{\xi_W}(r)$, where
\begin{equation}
\label{wtildeHapprox}
\widetilde{\mathcal{L}}_{\xi_H}(r)=\sum_{n=2}^{n_H} \pi_n^H \sum_{\ell=1}^{n-1} \mu_\ell^{(n,H)} \left(\mathcal{M}_{\gtrvt^{(n,H)}}(r)\right)^\ell
\end{equation}
and
\begin{equation}
\label{wtildeWapprox}
\widetilde{\mathcal{L}}_{\xi_W}(r)= \sum_{n'=2}^{n_W} \pi_{n'}^W \sum_{{\ell'}=1}^{n-1} \mu_{\ell'}^{(n',W)} \left(\mathcal{M}_{\gtrvt^{(n',W)}}(r)\right)^{\ell'},
\end{equation}
and $\mu_\ell^{(n,H)}$ and $\mu_{\ell'}^{({n'},W)}$ denote the mean size of generation $\ell$ in a single size-$n$ household epidemic
and generation $\ell'$ in a single size-$n'$ workplace epidemic, respectively. Substituting the approximations~\eqref{wtildeHapprox} and~\eqref{wtildeWapprox}
into~\eqref{HWrdefeq} and solving for $r$ yields an approximation, $\tilde{r}$ say, to the growth rate of the households-workplaces model.  The reproduction number $\widetilde{R}_r$ is then defined as at~\eqref{Rtilderdef} for the households model.

Note that if $\lambda_H^{(n)}$ and $\lambda_W^{(n')}$ are independent of household size $n$ and workplace size $n'$, respectively,
then, in an obvious notation,~\eqref{wtildeHapprox} and~\eqref{wtildeWapprox} simplify to
%\begin{equation}
%\label{wtildeHapprox1}
%\widetilde{\mathcal{L}}_{\xi_H}(r)= \sum_{\ell=1}^{n_H-1} \mu_\ell^H \left(\mathcal{M}_{\gtrvt^{H}}(r)\right)^\ell
%\end{equation}
%and\note{Frank's suggestion and on one line}
%\begin{equation}
%\label{wtildeWapprox1}
%\widetilde{\mathcal{L}}_{\xi_W}(r)= \sum_{{\ell'}=1}^{n_W-1} \mu_{\ell'}^W \left(\mathcal{M}_{\gtrvt^{W}}(r)\right)^{\ell'}.
%\end{equation}
\begin{equation}
\label{wtildeHapprox1}
\widetilde{\mathcal{L}}_{\xi_H}(r)= \sum_{\ell=1}^{n_H-1} \mu_\ell^H \left(\mathcal{M}_{\gtrvt^{H}}(r)\right)^\ell
\mbox{ and }
\widetilde{\mathcal{L}}_{\xi_W}(r)= \sum_{{\ell'}=1}^{n_W-1} \mu_{\ell'}^W \left(\mathcal{M}_{\gtrvt^{W}}(r)\right)^{\ell'}.
\end{equation}
The approximations $\mathcal{L}_{\beta_H}(r)\approx \widetilde{\mathcal{L}}_{\beta_H}(r)$ and $\mathcal{L}_{\beta_W}(r)\approx \widetilde{\mathcal{L}}_{\beta_W}(r)$ are %both 
exact if $\max(n_H,n_W)\le 2$, so in that case  $R_r=\widetilde{R}_r$.

\section{Comparisons of household-workplaces model reproduction numbers}
\label{sec:HWComparisons}

As stated at the end of Section~\ref{sec:HWmod}, we assume that $\mu_G$, $\mu_H$ and $\mu_W$ are all strictly positive, and that
$\min(n_H,n_W) \ge 2$.  By interchanging households and workplaces, $R_W$ and $R_H$ relate in a similar fashion to the other reproduction numbers, so we do not consider $R_W$ in the comparisons.  As with the households model, an epidemic is called growing if $R_* > 1$ and declining if $R_* < 1$.

The following theorem is proved in Section~\ref{subsec:HWcompproof}.
\begin{thm}

\label{Theorem 5.2}
\begin{enumerate}
\item[(a)]
$R_*=1 \iff R_H =1 \iff R_I=1 \iff R_0 = 1 \Longrightarrow R_V = 1 $.

\item[(b)]
In a growing epidemic,
\[
R_* > R_H > R_I \ge R_V \ge R_0 > 1,
\]
and in a declining epidemic
\[
R_* < R_H < R_I \le R_0 < 1.
\]
The inequalities $R_I \ge R_V$ and $R_I \le R_0$ are strict if and only if $\max(n_H,n_W)>2$.  The inequality $R_V \ge R_0$ is
strict if and only if $\max(n_H,n_W)>3$.

\item[(c)] Theorem~\ref{propos} holds also for the households-workplaces model.
\end{enumerate}
\end{thm}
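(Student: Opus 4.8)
The plan is to mirror the proof of Theorem~\ref{Hcomp}, exploiting that $R_*$, $R_H$, $R_I$ and $R_0$ are each the unique positive root of a characteristic function of the form $g(\lambda)=1-\sum_k c_k/\lambda^{k+1}$ with non-negative coefficients and $c_0=\mu_G>0$: explicitly $g_H$, $g_I^{(HW)}$, $g_0^{(HW)}$ at \eqref{gh}, \eqref{ghwi}, \eqref{gnHnW}, together with $g_*(\lambda)=1-R_*/\lambda$ for the clump number. Each such $g$ is strictly increasing on $(0,\infty)$ with $g(0^+)=-\infty$ and $g(\infty)=1$, so its root is $>1$, $=1$ or $<1$ according as $g(1)<0$, $=0$ or $>0$. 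For part~\thp{a}, setting $T=\mu_G(1+\mu_H)(1+\mu_W)+\mu_H\mu_W$, a direct evaluation gives $g_H(1)=g_I^{(HW)}(1)=g_0^{(HW)}(1)=1-T$ (the last being exactly the reduction of $f(1)=0$ carried out in Section~\ref{subsec:HWR0}), while $R_*=1\iff T=1$ by \eqref{hwrstar}; hence all four thresholds coincide. The implication $R_0=1\Rightarrow R_V=1$ is then immediate from \eqref{rv}, since at threshold no vaccination is needed, so $p_C=0$; the converse fails only because $R_V$ is set to $1$ throughout the declining regime.

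For the ordering in part~\thp{b}, the organising principle is that $g_A(\lambda)\le g_B(\lambda)$ on the relevant range forces $R_A\ge R_B$, since $0=g_A(R_A)\le g_B(R_A)$ and $g_B$ is increasing. For each consecutive pair I would subtract characteristic functions and exhibit the difference as $(\lambda-1)$ times a manifestly positive rational function. For instance, a short computation gives $g_I^{(HW)}(\lambda)-g_H(\lambda)=\frac{\mu_H}{\lambda^3}(\lambda-1)\b{(\mu_G+\mu_W)\lambda+\mu_G\mu_W}$, which is positive for $\lambda>1$ and negative for $\lambda<1$, yielding $R_H>R_I$ (strict since $\mu_H>0$) in both regimes. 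The comparison $R_*>R_H$ follows by substituting $R_*$ from \eqref{hwrstar} into $g_H$ (with $R_*=\infty>R_H$ trivially when $\mu_H\mu_W\ge1$), and $R_I$ versus $R_0$ by the generational view of Section~\ref{subsubsec:genviewcomp} applied to the combined household--workplace local epidemic: $R_I$ places every locally infected individual in generation~$1$ whereas $R_0$ uses its true local generation, so the induced generation variables satisfy $X^I\overset{st}{\le}X^0$, giving $R_I\ge R_0$ in a growing epidemic and the reverse when declining, strictly precisely when some local epidemic can reach generation~$2$, i.e.\ when $\max(n_H,n_W)>2$.

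The crux, and where I expect the \emph{main obstacle} to lie, is the sandwich $R_I\ge R_V\ge R_0$. Following Section~\ref{subsec:Hcompproof}, I would write the clump number under uniform perfect vaccination at coverage $p$ as $R_*(p)=\mu_G(1-p)\,C_{\mathrm{ind}}(p)$, where $C_{\mathrm{ind}}(p)$ is the expected clump size in individuals when each potential susceptible is independently immunised with probability $p$, and define $p_C$ by $R_*(p_C)=1$ and $R_V=1/(1-p_C)$. Replacing the true within-group final sizes by their linearisations $(1-p)\mu_H$ and $(1-p)\mu_W$ gives an approximation whose critical coverage returns exactly the households--workplaces analogue of the relation $R_{V\!\!A}=R_I$ noted in Section~\ref{subsec:RV}. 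The two inequalities then reduce to convexity statements for the clump size under binomial thinning: that random immunisation depresses $C_{\mathrm{ind}}(p)$ at least linearly, so $R_*(p)\le$ its linearisation and hence $p_C\le p_C^{\text{lin}}$, giving $R_V\le R_I$; and that the final-size threshold is more stringent than the generation-growth threshold, i.e.\ $p_C\ge 1-1/R_0$, giving $R_V\ge R_0$. Pinning down the exact direction and strictness of these discrete convexity inequalities for the two-structure clump is the delicate point; as in the households case the strictness cut-offs $\max(n_H,n_W)>2$ for $R_I>R_V$ and $\max(n_H,n_W)>3$ for $R_V>R_0$ should emerge because the relevant convexity degenerates to an equality for groups of size $2$, respectively~$3$.

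Finally, for part~\thp{c} I would reduce to the single-structure case already treated in Theorem~\ref{propos}. The real-time growth rate $r$ solves \eqref{HWrdefeq}, whose left-hand side is increasing in each of $\mathcal{L}_{\xi_H}(r)$ and $\mathcal{L}_{\xi_W}(r)$ and shares the multiplicative form of the eigenvalue equation defining $R_0$; setting $\lambda=R_r$ turns the $(0)$-approximations \eqref{LxiH0}--\eqref{LxiW0} into the household and workplace generation sums evaluated at $R_r$, which is precisely why those approximations return $R_r=R_0$. It therefore suffices to establish, separately for the household and the workplace local epidemics and for each of the two infectivity-profile classes, the same convexity inequalities relating $\mathcal{L}_{\xi_H}(r)$, $\widetilde{\mathcal{L}}_{\xi_H}(r)$ and $\mathcal{L}_{\xi_H}^{(0)}(r)$ (and their workplace counterparts) that drive the households proof in Section~\ref{subsec:HRrproof}. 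By monotonicity of the left-hand side of \eqref{HWrdefeq} in its two Laplace-transform arguments, these combine to give $R_r\ge\widetilde{R}_r\ge R_0$ in a growing epidemic and the reverse when declining, with the latent-period extension following exactly as in the households case.
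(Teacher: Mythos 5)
Your outline tracks the paper's proof for part \thp{a} (evaluating each characteristic function at $\lambda=1$ and matching against $f_{HW}(\mu_G,\mu_H,\mu_W)$), for $R_*$ versus $R_H$ (substitution into $g_H$, with the $R_*=\infty$ case handled separately), for $R_H$ versus $R_I$ (your factorisation is exactly \eqref{signgI-gH}), and for part \thp{c} (pushing the per-group inequalities between $\mathcal{L}_{\xi_H}$, $\widetilde{\mathcal{L}}_{\xi_H}$ and $\mathcal{L}_{\xi_H}^{(0)}$ through the left-hand side of \eqref{HWrdefeq}, which is monotone in each Laplace-transform argument). But the crux you yourself flag --- the sandwich $R_I \ge R_V \ge R_0$ --- is left as an unproved ``convexity statement for the clump size under binomial thinning'', and that is a genuine gap, not merely a deferred computation. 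The paper does not prove any clump-level convexity at all: it imports, group by group, the two bounds already established inside the proof of Theorem~\ref{Hcomp}. For $R_I\ge R_V$ the input is $\mu_H(p)\le(1-p)\mu_H$ and $\mu_W(p)\le(1-p)\mu_W$ (with equality iff group sizes are $\le 2$), fed into $g_{I,p}^{(HW)}\bigl((1-p)R_I\bigr)\ge 0$ to get $R_I(p)\le(1-p)R_I$ and hence $p_C\le 1-R_I^{-1}$. For $R_V\ge R_0$ the input is the \emph{opposite-direction, generation-wise} bound $\mu_\ell^H(p)\ge(1-p)^\ell\mu_\ell^H$ and $\mu_{\ell'}^W(p)\ge(1-p)^{\ell'}\mu_{\ell'}^W$, which via \eqref{eq:coefficientsHW} gives $c_k(p)\ge(1-p)^{k+1}c_k$ and hence $g^{(HW)}_{0,p'_C}(1)\le 0$ with $p'_C=1-R_0^{-1}$. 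This second comparison is a statement about post-vaccination \emph{generation sizes}, not about the clump final size, so ``random immunisation depresses $C_{\mathrm{ind}}(p)$ at least linearly'' is not the right object and would not deliver it; likewise the strictness cut-offs $\max(n_H,n_W)>2$ and $\max(n_H,n_W)>3$ are per-group phenomena (whether a group member can have more than one shortest chain to the group's primary case), not properties of the two-structure clump.

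Two smaller corrections. First, your factorisation of $g_I^{(HW)}-g_H$ yields $R_H>R_I$ only in a growing epidemic; in a declining one the sign of $\lambda-1$ flips and $R_H<R_I$, consistent with the statement of the theorem. Second, for $R_I$ versus $R_0$ your description of the generational picture is off: $M_I^{(HW)}$ does not place every locally infected individual in generation $1$ --- $g_I^{(HW)}$ carries a genuine generation-$2$ term $\mu_G\mu_H\mu_W/\lambda^3$, and when $n_H=n_W=2$ one has $g_I^{(HW)}=g_0^{(HW)}$ exactly. The stochastic-ordering route can still be made rigorous (both characteristic functions have total coefficient mass $f_{HW}$, and the partial sums of the $R_I$-coefficients dominate those of the $c_k$ because $\mu_1^H\le\mu_H$ and $\mu_1^W\le\mu_W$), and it is then equivalent to the paper's direct decomposition $c_k=c_k^{(1)}+c_k^{(2)}$ in Section~\ref{subsec:HWcompproof}; but as written the justification, and in particular the claimed source of strictness, needs repair.
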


The main practical use of Theorem~\ref{Theorem 5.2} is that, as for the households model, $R_I \ge R_V \ge R_0$ for a growing epidemic.  Thus, with a perfect vaccine, if individuals are selected for vaccination uniformly at random then the critical vaccination coverage $p_C$, assuming a growing epidemic, satisfies $$1-1/R_0 \le p_C \le 1-\/R_I.$$

%Theorem \ref{Theorem 5.2}c holds for reasons outlined at the end of Section~\ref{subsec:HRrproof}.

Finally, consider the case $\mu_G=0$.  The reproduction numbers $R_0$, $R_H$, $R_W$, $R_I$, $R_r$ and $\widetilde{R}_r$ can all be defined essentially as before but note, for example, that the branching process underlying $R_0$ is now $2$-type, rather than $3$-type, since double-primary cases no longer occur (apart from in global generation $0$, i.e.\ the initial infectives in the epidemic at large).  The reproduction number $R_*=0$, since clumps no longer reproduce, though (cf.~Section~\ref{subsec:RHRW}) a clump may be infinite in size. It is easily seen that Theorem~\ref{Theorem 5.2}, with $R_*$ removed, continues to hold when $\mu_G=0$, as does the generalisation of
Theorem~\ref{propos} to the households-workplaces model.

%%%%%%%%%%%%%%%%%%%%%%%%%%%%%%%%%%%%%%%%%%%%%%%%%%%%%%%%%%%%%%%%
\section{Numerical illustrations}
\label{sec:numerical}

	In this section we present some numerical examples which illustrate the inequalities between reproduction numbers considered in the paper.  Most of these reproduction numbers  are fairly straightforward to compute for a wide range of modelling assumptions.  This is not the case for the exponential-growth-associated reproduction number $R_r$, which generally cannot be computed explicitly. A notable exception is if the underlying epidemic model is Markovian and therefore most of our numerical examples are for such models.  The main practical interest in these illustrations is how well the various reproduction  numbers approximate the perfect-vaccine-associated reproduction number $R_V$.

\subsection{Markov SIR and SEIR households models}
\label{MarkovSIRHmod}
We consider the model introduced by Ball et al.~\cite{BalMolSca1997}, Section 3.1, specialised to exponential infectious periods.  Thus we assume that all households have common size $n$, that the total population size is $N$ and that
the infectious period of an infective has an exponential distribution having mean one.  (The unit of time may be chosen to be the mean of the infectious period.)  During his/her infectious period, a given infective makes global contact with any given susceptible in the population at the points of a homogeneous Poisson process having rate $\mu_G/N$ and, additionally, local contacts with any given susceptible in his/her household at the points of a homogeneous Poisson process having rate $\lambda_H$.  All the Poisson processes describing infectious contacts (whether or not either or both of the individuals involved are the same) and all the infectious periods are assumed to be independent.
There is no latent period, so a susceptible becomes an infective as soon as he/she is contacted by an infective. Denote this epidemic model by $\mathscr{E}^H(n,\mu_G,\lambda_H)$.

We now describe briefly the calculation of the various reproduction numbers for this model.  The mean generation sizes $\mu_1^{(n)}, \mu_2^{(n)}, \cdots, \mu_{n-1}^{(n)}$ for a single size-$n$ household epidemic may be computed using the method described by Pellis et al.~\cite{PelBalTra2012}, Appendix A, thus enabling $R_0$ to be calculated.  %(Note\note{PT:Is this note important at this place?} that\add{ the use of exponential distributions here implies} the escape probability $q(k)$ in that appendix is given by $q(k)=(1+k\lambda_H)^{-1}$ $(k=1,2,\cdots)$.)
The reproduction numbers $R_*, R_I, R_2$ and $R_{H\!I}$ are then easily calculated, since $\mu_H^{(n)}=\mu_1^{(n)}+\mu_2^{(n)}+\cdots+\mu_{n-1}^{(n)}$.  Alternatively, $\mu_H^{(n)}$ may be computed more directly using Ball~\cite{Ball1986}, equations (2.25) and (2.26).

The perfect-vaccine-associated reproduction number $R_V$ is computed as follows.  Suppose that a fraction $p$ of the population is vaccinated, with individuals selected for vaccination uniformly at random from the population.  After vaccination, the probability that a global contact is successful (i.e.~is with an unvaccinated individual) is $1-p$, so the mean number of global contacts made by an infective is $(1-p)\mu_G$.  If a global contact is successful then the number of other unvaccinated individuals in the globally contacted individual's household follows a binomial distribution, whence the expected number of households infected by a typical infected household in an otherwise uninfected population, $R_*^V(p)$ say, is given by
\begin{equation*}
R_*^V(p)=(1-p)\mu_G\left(1+\sum_{j=1}^{n-1} {{n-1} \choose j} (1-p)^j p^{n-1-j} \mu_H^{(j+1)}\right).
\end{equation*}
The corresponding critical vaccination coverage $p_C$ is found by solving $R_*^V(p)=1$ numerically and $R_V$ then follows using~\eqref{rv}.  The leaky-vaccine-associated reproduction number $R_{V\!L}$ may be computed by noting that if the entire population is vaccinated with a leaky vaccine having efficacy $\mathcal{E}$ then after vaccination the model behaves as $
\mathscr{E}^H(n,(1-\mathcal{E})\mu_G,(1-\mathcal{E})\lambda_H)$, so a post-vaccination reproduction number, $R_*^{V\!L}(\mathcal{E})$ say, is easily calculated.  The critical efficacy $\mathcal{E}_C$ is found by solving $R_*^{V\!L}(\mathcal{E})=1$ numerically and $R_{V\!L}$ is then given by~\eqref{rvl}.

Turning to the exponential-growth-associated reproduction number $R_r$ and its approximation $\widetilde{R}_r$, the real-time growth rate $r$ for the Markov SIR households model $\mathscr{E}^H(n,\mu_G,\lambda_H)$ may be computed
using the matrix method described in Pellis et al.~\cite{PelFerFra2011}, Section 4.2.  The infectivity profile of a typical infective in $\mathscr{E}^H(n,\mu_G,\lambda_H)$ is given by~\eqref{IGSE},
with $T_I \sim \mathrm{Exp}(1)$. Hence, $\gtrv \sim \mathrm{Exp}(1)$, so $\mathcal{M}_\gtrv(\theta)=(1+\theta)^{-1}$ $(\theta >-1)$ and, recalling~\eqref{Rrfromr}, $R_r=1+r$.  As explained just after~\eqref{IGSE}, $\gtrvt \sim \mathrm{Exp}(1+\lambda_H)$, so $\mathcal{M}_{\gtrvt}(\theta)=\frac{1+\lambda_H}{1+\lambda_H+\theta}$ $(\theta > -(1+\lambda_H))$.  It follows that $\widetilde{R}_r=1+\tilde{r}$, where $\tilde{r}$ solves \eqref{rtgrowth4}.

Figures~\ref{HSIRnondyn1} to~\ref{HSIRRr} show the various reproduction numbers as functions of the within-household infection rate $\lambda_H$ for various combinations of household size $n$ and overall global infection rate $\mu_G$.  The parameters and format are the same as in Figure 1 of Goldstein et al.~\cite{GoldsteinEtal2009}, though the range of values for $\lambda_H$ is reduced.
Note that in this model all of the reproduction numbers, except $R_r$ and $\widetilde{R}_r$, are invariant to the introduction of a latent period into the model.  Figure~\ref{HSIRnondyn1} compares all of the reproduction numbers except $R_{V\!L}$, $R_r$ and $\widetilde{R}_r$.
Observe that they are all ordered in accordance with Theorem~\ref{Hcomp}\thp{b} and that the conjectured comparison between $R_0$ and $R_2$ is also satisfied.  Moreover, $R_I=R_0=R_2$ ($=R_V$ in a growing epidemic) when $n=2$, as expected.  In particular, $R_0 \le R_V \le R_I$ in a growing epidemic.  Note that generally, in a growing epidemic, $R_*$ is appreciably greater than $R_I$ and is a poor approximation to $R_V$.  (Recall, though, that in the present setting, when all households have the same size, $R_*$ gives the correct critical vaccination coverage if households are either fully vaccinated or fully unvaccinated.)  Also, in a growing epidemic, $R_{H\!I}$ is generally a noticeably worse lower bound to $R_V$ than $R_2$.  Indeed $R_2$ and $R_0$ are very close and, as is the case in most of the figures, $R_0$ is very close to $R_V$.  Note that less knowledge of the epidemic model is required to compute $R_2$ than to compute $R_0$.

\begin{figure}[!ht] \begin{center}
\includegraphics[width=\textwidth]{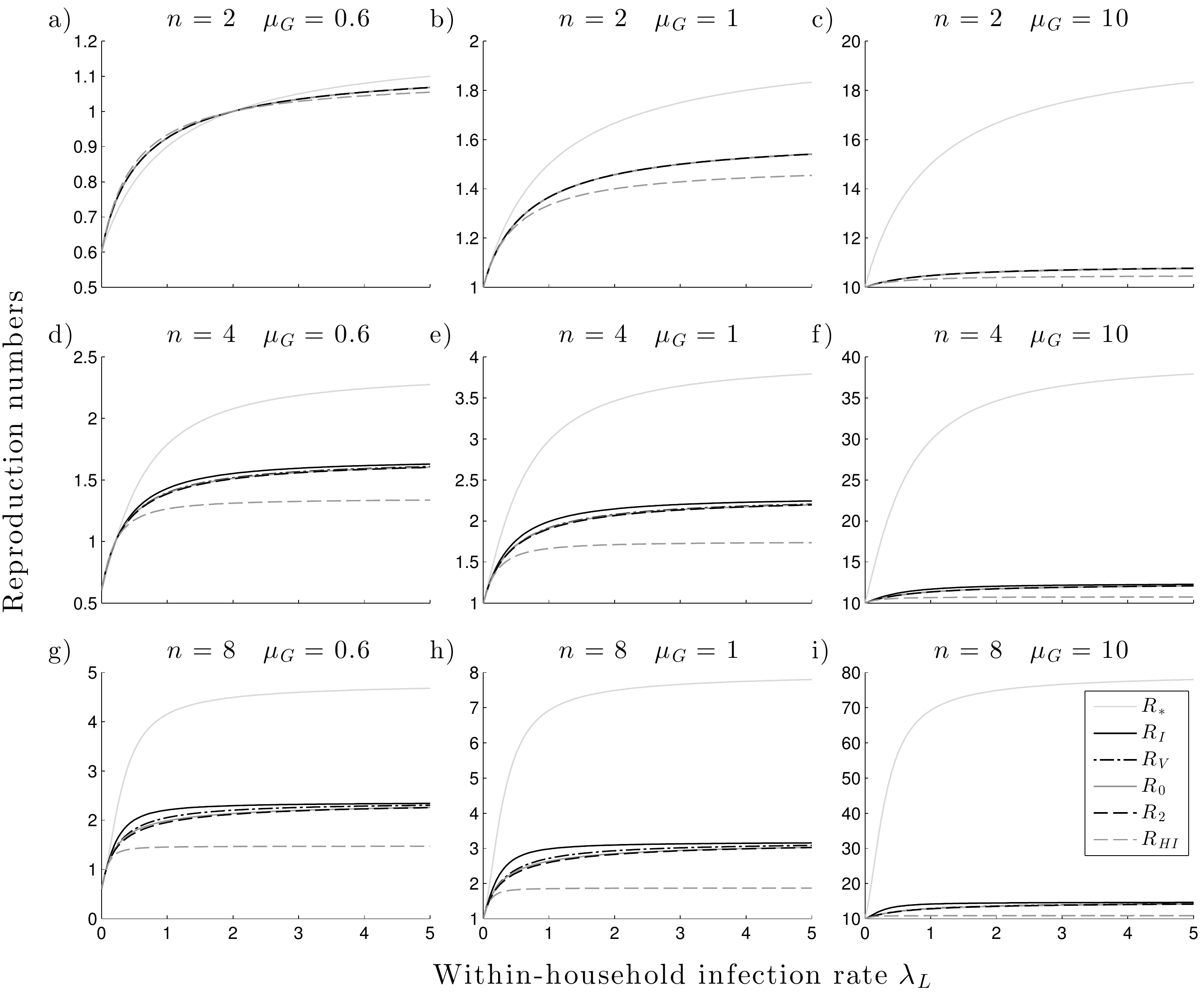}
 \caption{Reproduction numbers $R_*, R_I, R_V, R_0, R_2$ and $R_{H\!I}$ for the Markov SIR households model $\mathscr{E}^H(n,\mu_G,\lambda_H)$.}
\label{HSIRnondyn1}\end{center}
\end{figure}

Figure~\ref{HSIRnondyn2} compares the reproduction numbers $R_I, R_{V\!L}, R_V, R_0$ and  $R_2$.  Recall that, in our notation, Goldstein et al.~\cite{GoldsteinEtal2009} proved that, in a growing epidemic, $\bar{R}_{H\!I} \le R_V \le R_{V\!L} \le R_*$ so, using Theorem~\ref{Hcomp}\thp{b}, $R_0 \le R_V \le R_{V\!L}$, as is clearly seen in Figure~\ref{HSIRnondyn2}.  Note that although $R_I$ and $R_{V\!L}$ cannot be ordered in general (see the graphs when $n=8$ and $\mu_G=10$), $R_{V\!L}$ is generally appreciably larger than $R_I$, unless the within-household infection rate is small.

\begin{figure}[!ht] \begin{center}
\includegraphics[width=\textwidth]{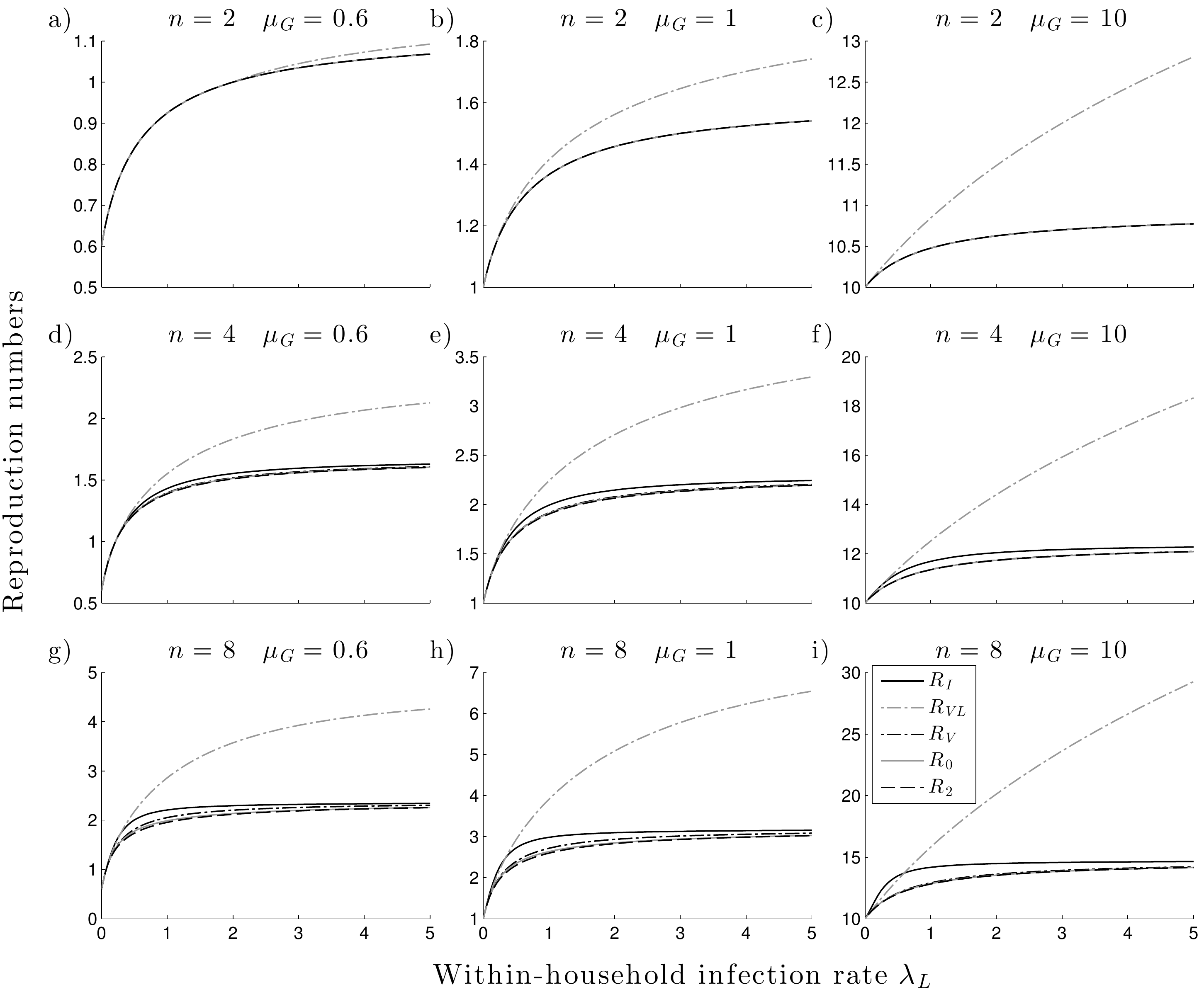}
 \caption{Reproduction numbers $R_I, R_{V\!L}, R_V, R_0$ and  $R_2$ for the Markov SIR households model $\mathscr{E}^H(n,\mu_G,\lambda_H)$. }
\label{HSIRnondyn2}\end{center}
\end{figure}

Figure~\ref{HSIRRr} compares the exponential-growth-associated reproduction number $R_r$ and its variant $\widetilde{R}_r$ with $R_I, R_V$ and $R_0$. Goldstein et al.~\cite{GoldsteinEtal2009} noted that in most plausible parameter regions $R_r \ge R_V$ and this is seen in Figure~\ref{HSIRRr}.  However, $R_r$ is usually an appreciably coarser upper bound than $R_I$ for $R_V$, though, as seen from the graphs when $n=8$ and $\mu_G=10$, it is not possible to order $R_r$ and $R_I$ in general.  As a particular case of Theorem~\ref{propos}\thp{c},
for the Markov SIR model, in all growing epidemics
$R_0 \le \widetilde{R}_r \le R_r$ and, for $n=2$, $R_r=\widetilde{R}_r$ since $\gtrvt$ gives the correct infection interval for local infection between the primary and secondary case in a household. Also note that, when $n=2$, $R_r = R_{V\!L}$. This is proved in Appendix~\ref{app:RrRVLcomp}, where it is shown that $R_r$ and $R_{V\!L}$ cannot in general be ordered.

\begin{figure}[!ht] \begin{center}
\includegraphics[width=\textwidth]{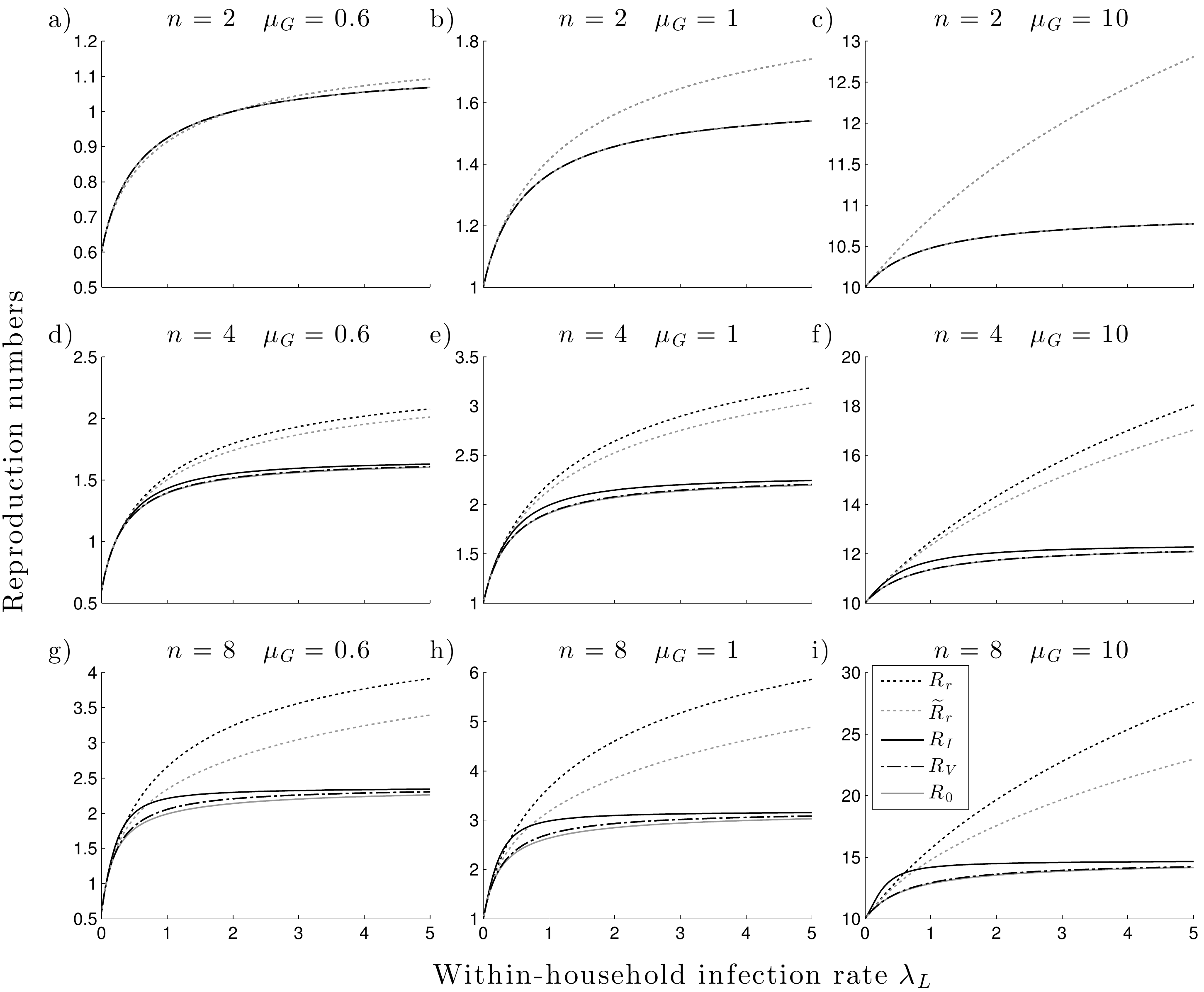}
 \caption{Reproduction numbers $R_r, \widetilde{R}_r, R_I, R_V$ and $R_0$ for the Markov SIR households model $\mathscr{E}^H(n,\mu_G,\lambda_H)$.  }
\label{HSIRRr}\end{center}
\end{figure}

We now add a latent period to the above model.  Specifically we assume that infectives have independent latent periods, each distributed as $\mathrm{Exp}(\delta)$, so the mean latent period is $\delta^{-1}$.  The latent periods are also independent of all the other random quantities used to define the model.  Thus the model is now a Markov SEIR households epidemic model and is identical to one used by Goldstein et al.~\cite{GoldsteinEtal2009} in their numerical illustrations.  As noted previously, the introduction of a latent period changes only the reproduction numbers $R_r$ and $\widetilde{R}_r$.

Denote the above model by $\mathscr{E}^H(n,\mu_G,\lambda_H,\delta)$. Goldstein et al.~\cite{GoldsteinEtal2009} determined the real-time growth rate $r$ for $\mathscr{E}^H(n,\mu_G,\lambda_H,\delta)$ by linearising a system of differential equations that describe the evolution of the relative numbers of households in different states (when the total population size $N$ is large), where the state of a household is given by the number of infected, latent and susceptible individuals it contains, and determining the corresponding largest eigenvalue.  We determine $r$ by extending the matrix method in Pellis et al.~\cite{PelFerFra2011}, Section 4.2, to incorporate a latent period.  The infectivity profile of a typical infective in $\mathscr{E}^H(n,\mu_G,\lambda_H,\delta)$ is given by
\begin{equation*}
\mathcal{I}(t)=\left\{
\begin{array}{ll}
1\qquad&\text{if } T_E \le t \le T_E+T_I,\\
0 &\text{otherwise },
\end{array} \right.
\end{equation*}
where $T_E \sim \mathrm{Exp}(\delta)$ and $T_I \sim \mathrm{Exp}(1)$ are independent random variables giving the latent and infectious periods of a typical infective.  It is then readily verified that $$\mathcal{M}_\gtrv(\theta)=\frac{\delta}{(\delta+\theta)(1+\theta)} \qquad (\theta > -\min(1,\delta)).$$   Note that $\gtrv=\gtrvz{0}+T_E$, where $\gtrvz{0}$ is the infectious contact interval for $\mathscr{E}^H(n,\mu_G,\lambda_H)$, and $\gtrvz{0}$ and $T_E$ are independent.  Further, in an obvious notation, $\gtrvt=\gtrvzt{0}+T_E$, whence
$$\mathcal{M}_{\gtrvt}(\theta)=\frac{\delta}{\delta+\theta}\frac{1+\lambda_H}{1+\lambda_H+\theta} \qquad (\theta>-\min(1+\lambda_H,\delta)).$$  Thus, given $r$, both $R_r$ and $\widetilde{R}_r$ are easily calculated.

Figure~\ref{HSEIRRr} shows the exponential-growth-associated reproduction numbers $R_r$ and $\widetilde{R}_r$, and also $R_I, R_{V\!L}, R_V$ and $R_0$, as functions of the mean latent period $\delta^{-1}$.  For the case $n=2$, $R_r=\widetilde{R}_r$, as with the SIR model, and $R_0=R_V=R_I$, agreeing with Theorem~\ref{Hcomp}\thp{b}.  Further, $\widetilde{R}_r \ne R_0$, since $\gtrv$ and $\gtrvt$ have different distributions.   Note that $R_r$ is decreasing in $\delta^{-1}$ and converges to $R_0$ as $\delta^{-1} \to \infty$. When $n=3$, a similar picture emerges except that $R_0=R_V<R_I$ and $R_r>\widetilde{R}_r$, though both $R_r$ and $\widetilde{R}_r$ converge to $R_0$ $(=R_V)$ as $\delta^{-1} \to \infty$.  Observe that neither $R_r$ nor $\widetilde{R}_r$ can be ordered with $R_I$.  The main differences between the cases $n=3$ and $n=4$ is that when $n=4$, $R_0<R_V$ and the exponential-growth-associated reproduction numbers $R_r$ and $\widetilde{R}_r$ tend to different limits as $\delta^{-1} \to \infty$, though the discrepancy is difficult to see as $R_0$ and $R_V$ are very close.  It is much clearer in the case when $n=8$.  Observe that $\widetilde{R}_r \to R_0$ as $\delta^{-1} \to \infty$, whilst $R_r$ converges to a limit lying strictly between $R_0$ and $R_V$.  The fact that
$R_r<R_V$ for very long latent periods when $n \ge 4$ is noted in Goldstein et al.~\cite{GoldsteinEtal2009}, though the proof in Appendix B of that paper, which in our terminology shows that $R_r \to R_0$ as the latent periods become infinitely long, does not hold for the Markov SEIR households model.  This is explored further in Appendix~\ref{app:inflonglat}, where it is proved that for the Markov SEIR households model, in the limit as $\delta^{-1} \to \infty$, if the maximum household size $n_H \le 3$ then $R_r=\widetilde{R}_r=R_0$ $(=R_V)$, whilst if $n_H \ge 4$ then $R_r>\widetilde{R}_r=R_0$.
Further, when $n_H=4$, we show that $R_V>R_r>R_0$, though we do not have a proof for $n_H \ge 5$.  Although such long latent periods do not occur in real-life infections, we let the mean latent period $\delta^{-1}$ in
Figures~\ref{HSEIRRr} and~\ref{HWSEIR} run up to $10^4$ times the mean infectious period to illustrate the limiting behaviour of the reproduction numbers as $\delta^{-1} \to \infty$.
\begin{figure}[!ht] \begin{center}
\includegraphics[width=\textwidth]{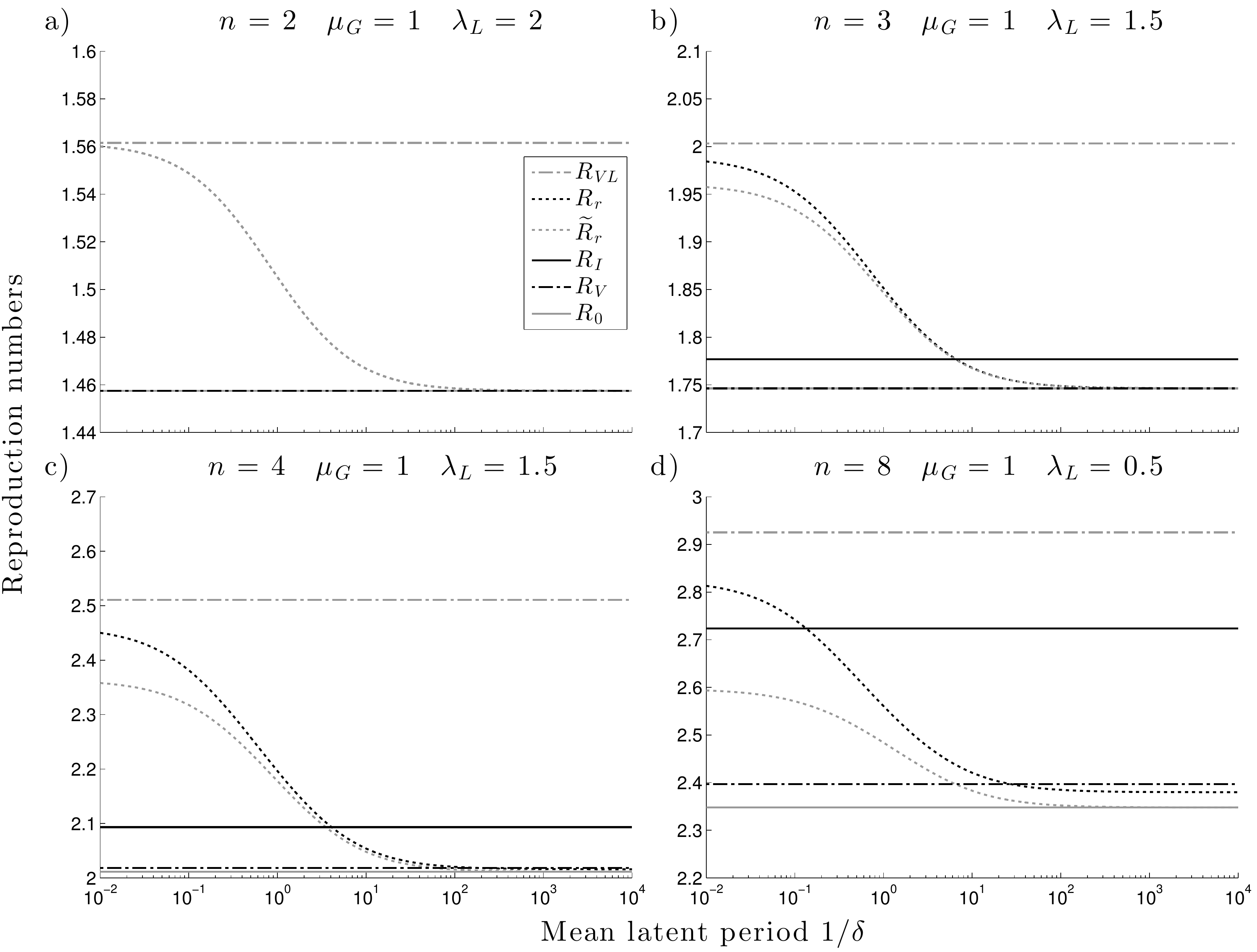}
 \caption{Reproduction numbers $R_r, \widetilde{R}_r, R_{V\!L},R_I, R_V$ and $R_0$ for the Markov SEIR households model $\mathscr{E}^H(n,\mu_G,\lambda_H,\delta)$.  }
\label{HSEIRRr}\end{center}
\end{figure}

\subsection{Households model with non-random infectivity profile}
\label{nonrandHM}
We now assume that the infectivity profile of an individual is non-random.  Specifically, following Fraser~\cite{Fra2007} and Goldstein et al.~\cite{GoldsteinEtal2009}, we assume that the infectious contact interval $\gtrv$ follows a gamma distribution, with parameters $\alpha>0$ and $\gamma>0$. Thus $\mathcal{I}(t)=\gtpdf(t)$ $(t \ge 0)$, where
\begin{equation}
\label{gammapdf}
\gtpdf(t)=\frac{\gamma^\alpha t^{\alpha-1} {\rm e}^{-\gamma t}}{\Gamma(\alpha)}
\end{equation}
and $\Gamma(\alpha) = \int_0^\infty{t^{\alpha-1}{\rm e}^{-t}\mathrm{d}t}$ is the gamma function.  Similar to Section~\ref{subsec:Rr}, we assume that, $t$ time units after he/she was infected, an infectious individual makes global contacts at overall rate $\mu_G \gtpdf(t)$ and, additionally, he/she contacts any given susceptible in his/her household at rate $\lambda_H \gtpdf(t)$.  Thus, since $\int_0^\infty \gtpdf(t) {\rm d}t=1$, a given infective infects locally other members of its household independently, each with probability $p=1- {\rm e}^{-\lambda_H}$.  It follows that the mean generation sizes  $\mu_1^{(n)}, \mu_2^{(n)}, \cdots, \mu_{n-1}^{(n)}$ for a single size-$n$ household epidemic coincide with those of a Reed-Frost model with escape probability $q=1-p$ and hence may be computed using the algorithm in Appendix A of Pellis et al.~\cite{PelBalTra2012}.  This enables all of the reproduction numbers, except for $R_r$ and $\widetilde{R}_r$, to be computed in a similar fashion as for the Markov SIR model.
(Again $\mu_H^{(n)}$ may be computed more directly using Ball~\cite{Ball1986}, Equations (2.25) and (2.26).)  Note that, except for $R_r$ and $\widetilde{R}_r$, all of the reproduction numbers are independent of the parameters $(\alpha,\gamma)$ of the gamma distribution that describes the infectivity profile; indeed they are independent of the infectivity profile, provided it is non-random.

To calculate $R_r$, the real-time growth rate $r$ is required, for which we are not aware of any exact method of calculation.  Goldstein et al.~\cite{GoldsteinEtal2009} used stochastic simulations, involving an approximate discrete-time model having a small time step, to estimate the mean infectivity profile $\beta_H(t)$ $(t \ge 0)$ of a single-household epidemic.  We use a simulation method, based on a Sellke~\cite{Sellke1983} construction and described in Appendix~\ref{app:sellke}, to estimate the Laplace transform $\mathcal{L}_{\beta_H}(\theta)$ of $\beta_H(t)$, whence $r$ is obtained by solving $\mathcal{L}_{\beta_H}(r)=1$ numerically. The reproduction number $R_r$ then follows, using~\eqref{Rrfromr} with $\mathcal{M}_\gtrv(r)=\left(\frac{\gamma}{\gamma+r}\right)^\alpha$.  In the present model there is no closed-form expression for $\mathcal{M}_{\gtrvt}(\theta)$, so we do not consider $\widetilde{R}_r$.

\begin{figure}[htp]
\begin{center}
\includegraphics[width=\textwidth]{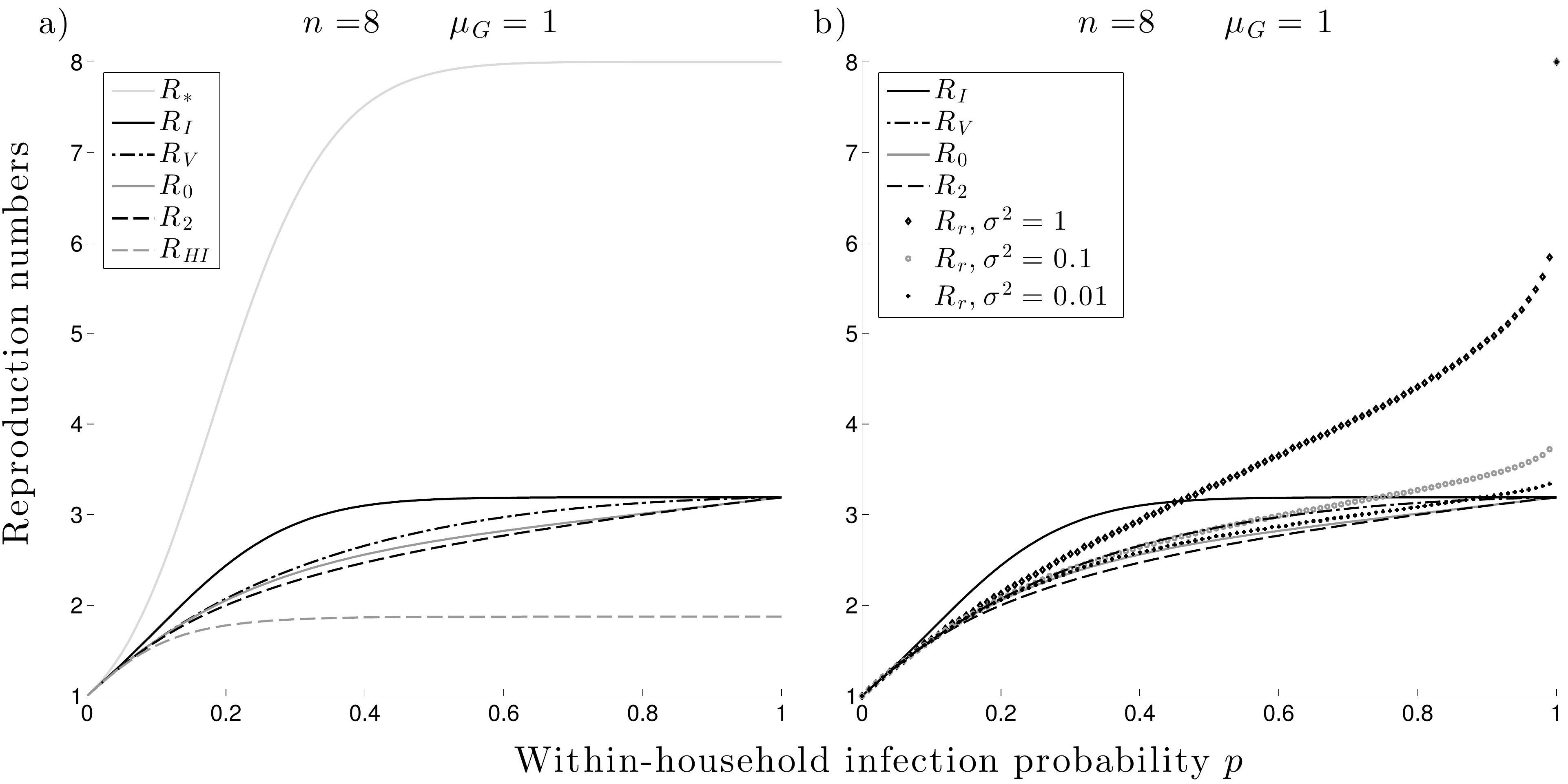}
 \caption{\fpc{a} Reproduction numbers $R_*, R_I, R_V, R_0, R_2$ and $R_{H\!I}$ for a households model with a non-random infectivity profile; and \fpc{b} reproduction numbers $R_I, R_V, R_0, R_2$ and $R_r$ (with $\sigma^2=1,0.1$ and $0.01$) for a households model with a non-random infectivity profile which follows a gamma distribution with mean $1$. }
\label{TVI}\end{center}
\end{figure}

For brevity we present results only for the case when all households are of size $8$ and $\mu_G=1$.
In Figure~\ref{TVI}\fpt{a}, the reproduction numbers $R_*, R_I, R_V, R_0, R_2$ and $R_{H\!I}$ are plotted against the within-household infection probability $p$.  These reproduction numbers satisfy $R_{H\!I}<R_0<R_V<R_I<R_*$ and $R_{H\!I}<R_2<R_I$, as predicted by Theorem~\ref{Hcomp}\thp{b}, and the conjecture $R_2<R_0$.  As $p \to 1$, so $\lambda_H \to \infty$, the mean generation sizes become $\mu_1=7$ and $\mu_k=0$ $(k=2,3,\cdots,7)$, and the corresponding limiting values of the reproduction numbers are easily obtained.  Note that unless $p$ is small, i.e.~unless there is very little enhanced spread of infection within households, $R_*$ is a coarse upper bound for $R_V$ and $R_{H\!I}$ is a coarse lower bound.  Further, $R_0$ is a good approximation to $R_V$ across the full range of values for $p$, though it is an underestimate.

Figure~\ref{TVI}\fpt{b} shows the reproduction numbers $R_2, R_0, R_V, R_I$ and $R_r$ as functions of $p$.  Note that $R_r$ depends on the parameters of the gamma distribution describing the non-random infectivity profile.  When $\gtrv$ has probability density function given by~\eqref{gammapdf}, $\bbE[\gtrv]=\frac{\alpha}{\gamma}$ and $\sigma^2=\mathrm{Var}(\gtrv)=\frac{\alpha}{\gamma^2}$.
In Figure~\ref{TVI}\fpt{b}, we assume that $\bbE[\gtrv]=1$, so $\alpha=\gamma$, and show $R_r$ when $\sigma^2=1$ $(\alpha=1)$, $\sigma^2=0.1$ $(\alpha=10)$ and $\sigma^2=0.01$ $(\alpha=100)$.  Each graph for $R_r$ is estimated from $10,000$ simulations of the corresponding single-household epidemic.  Observe that, for fixed $p$, the exponential-growth-associated reproduction number $R_r$ is a decreasing function of $\sigma^2$.  As $\sigma^2$ decreases to $0$ the epidemic model becomes more and more like a Reed-Frost type model, for which $R_r=R_0$.  The accuracy of $R_r$ as an approximation to $R_V$ depends on both the variance of the infectious contact interval $W$ and on how infectious the disease is within households.  Generally, the approximation is good when $p$ is small, since then there is little spread within households, and improves as $\sigma^2$ decreases.  Normally, $R_r$ overestimates $R_V$ but when the infectious contact interval is highly peaked it may be a slight underestimate, as is illustrated in the graph when $\sigma^2=0.01$.

%Finally, although not appearing in the figures, note that  $R_r\to R_*$ as $\lambda_H \to \infty$ (i.e.~as $p\to 1$). In this limit, all secondary infections take place as soon as the primary individual is infected, so $\beta_H(t) = (1+\mu_H) \mu_G \gtpdf(t)=R_* \gtpdf(t)$ $(t \geq 0)$. Hence, $\mathcal{L}_{\beta_H}(r) = R_* \mathcal{M}_\gtrv(r)$ and \eqref{Rrfromr} and \eqref{rtgrowth1} then imply that $R_r = R_*$.

\subsection{Markov SIR and SEIR households-workplaces models}
\label{MarkovSIRHWmod}
The Markov SIR households model described in Section~\ref{MarkovSIRHmod} is readily generalised to incorporate workplaces.  For simplicity, we assume that all households have common size $n_H$ and all workplaces  have common size $n_W$.  During his/her infectious period, which is distributed as $\mathrm{Exp}(1)$, a typical infective makes global contacts at overall rate $\mu_G$, infects any given susceptible in his/her household at rate $\lambda_H$ and any susceptible in his/her workplace at rate $\lambda_W$.  The mean generation sizes for within-household and within-workplace epidemics may be evaluated using the methods described for the Markov SIR households model, so, apart from $R_r$ and $\widetilde{R}_r$, the reproduction numbers are readily computed.

To compute the exponential-growth-associated reproduction number $R_r$, consider first a single-household epidemic and let $S(t)$ and $I(t)$ be respectively the numbers of susceptible and infectives at time $t$.  Then, at time $t$, new infections occur in this household at rate $\lambda_H S(t) I(t)$, so, in the notation of Section~\ref{subsec:HWRr}, $\xi_H(t)=\lambda_H\mathbb{E}[ S(t) I(t)]$ $(t \ge 0)$. Hence
\begin{equation*}
\mathcal{L}_{\xi_H}(\theta)=\int_0^\infty \lambda_H\mathbb{E}[ S(t) I(t)]\mathrm{e}^{-\theta t}\mathrm{d}t,
\end{equation*}
which can be evaluated numerically using the matrix method described in Pellis et al.~\cite{PelFerFra2011}, Section 4.3.  The Laplace transform $\mathcal{L}_{\xi_W}(\theta)$ may be computed similarly. The real-time growth rate $r$ may be computed by solving~\eqref{HWrdefeq} numerically  (recall that $\mathcal{M}_\gtrv(\theta)=(1+\theta)^{-1}$) and $R_r$ is then given by $R_r=1+r$.  Note that, in the notation of~\eqref{wtildeHapprox1} %and~\eqref{wtildeWapprox1},
$\mathcal{M}_{\gtrvt^{H}}(r)=\frac{1+\lambda_H}{1+\lambda_H+r}$ and  $\mathcal{M}_{\gtrvt^{W}}(r)=\frac{1+\lambda_W}{1+\lambda_W+r}$, thus enabling $\tilde{r}$ to be computed, whence $\widetilde{R}_r=1+\tilde{r}$.

\begin{figure}[!ht] \begin{center}
\includegraphics[width=\textwidth]{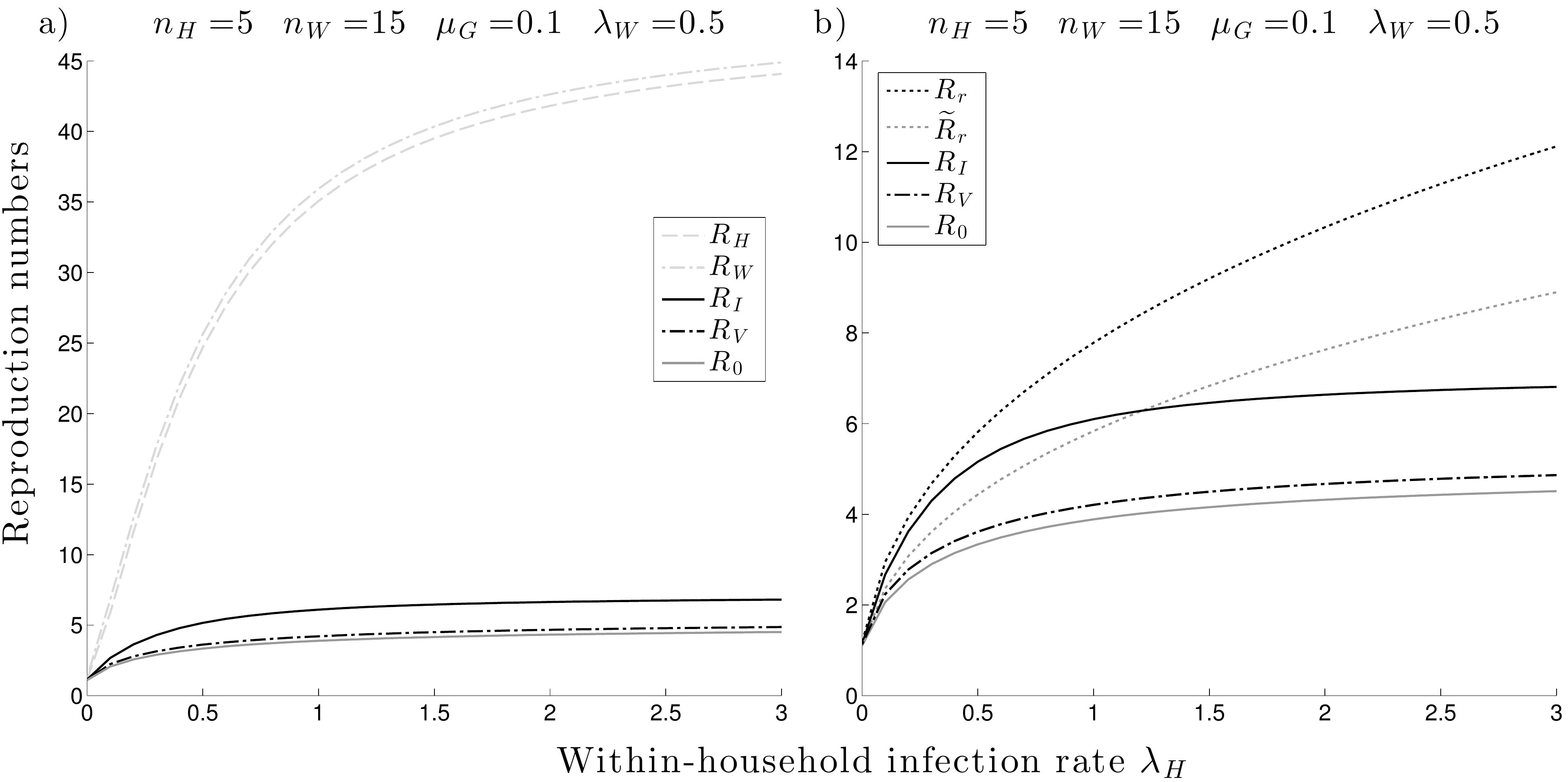}
 \caption{\fpc{a} Reproduction numbers $R_H, R_W, R_I, R_V$ and $R_0$ and \fpc{b} reproduction numbers $R_r, \widetilde{R}_r, R_I, R_V$ and $R_0$ for a Markov SIR households-workplaces model.}
\label{HWSIR}\end{center}
\end{figure}

Figure~\ref{HWSIR} is
for a model in which $n_H=5$ and $n_W=15$. Figure~\ref{HWSIR}\fpt{a} shows graphs of the reproduction numbers $R_H, R_W, R_I, R_V$ and $R_0$ against $\lambda_H$ when $\mu_G=0.1$ and $\lambda_W=0.5$.  For these parameter values, $R_H$ and $R_W$ are distinct, though their difference is small and both are useless as approximations to $R_V$ ($R_*$, not shown as it is so large, is even worse).  This is because of the large within-workplace epidemic sizes.  Note that the reproduction numbers satisfy the inequalities proved in Theorem~\ref{Theorem 5.2}\thp{b}. In Figure~\ref{HWSIR}\fpt{b}, the reproduction numbers $R_r, \widetilde{R}_r, R_I, R_V$ and $R_0$ are plotted against $\lambda_H$ when $\mu_G=0.5$ and
$\lambda_W=0.1$.  Observe that $R_0 < \widetilde{R}_r < R_r$ for $\lambda_H>0$, in
accordance with Theorem \ref{Theorem 5.2}\thp{c}, and that neither $R_r$ nor $\widetilde{R}_r$ can be ordered with $R_I$.  Unless $\lambda_H$ is small, $R_r$ is not a good approximation to $R_V$.  Note that
in Figure~\ref{HWSIR},
$R_0$ is a close approximation to $R_V$ for all values of $\lambda_H$.

Finally, we consider the Markov SEIR version of the above model, which incorporates a latent period having an $\mathrm{Exp}(\delta)$ distribution.  Again, apart from $R_r$ and $\widetilde{R}_r$, the reproduction numbers are unchanged by the inclusion of a latent period.  The method described above for computing the real-time growth rate $r$ is easily extended to the present model.  Note that $\mathcal{M}_\gtrv(\theta)$ is the same as in the above Markov SEIR households model, $\mathcal{M}_{\gtrvt^{H}}(\theta)=\frac{\delta}{\delta+\theta}\frac{1+\lambda_H}{1+\lambda_H+\theta}$ $(\theta>-\min(1+\lambda_H,\delta))$ and $\mathcal{M}_{\gtrvt^{W}}(\theta)=\frac{\delta}{\delta+\theta}\frac{1+\lambda_W}{1+\lambda_W+\theta}$ $(\theta>-\min(1+\lambda_W,\delta))$, thus enabling $R_r$ and $\widetilde{R}_r$ to be computed.

\begin{figure}[!ht] \begin{center}
\includegraphics[width=\textwidth]{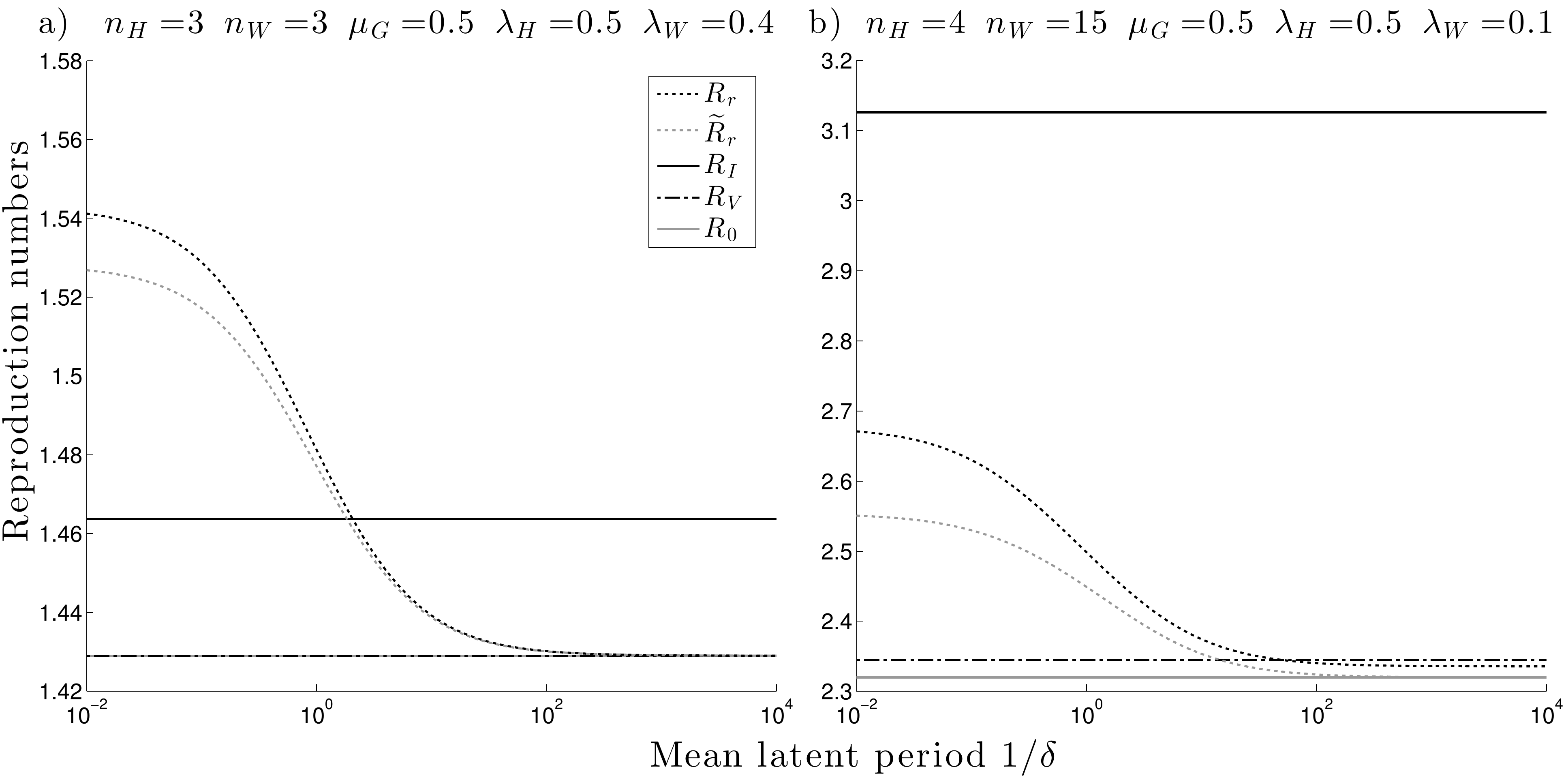}
 \caption{Reproduction numbers $R_r, \widetilde{R}_r, R_I, R_V$ and $R_0$ for a Markov SEIR households-workplaces model.}
\label{HWSEIR}\end{center}
\end{figure}

Figure~\ref{HWSEIR}\fpt{a} shows the dependence of the reproduction numbers $R_r, \widetilde{R}_r, R_I, R_V$ and $R_0$ on the mean latent period $\delta^{-1}$ when $n_H=n_W=3$, $\mu_G=0.5, \lambda_H=0.5$ and $\lambda_W=0.4$.  Note that $R_V=R_0<R_I$, as predicted by Theorem~\ref{Theorem 5.2}\thp{b}, and that both $R_r$ and $\widetilde{R}_r$ converge down to $R_0$ as $\delta^{-1} \to \infty$. Figure~\ref{HWSEIR}\fpt{b} shows the same reproduction numbers when $n_H=4$ and $n_W=15$.  The values of $\mu_G$ and $\lambda_H$ are as before and $\lambda_W$ is now $0.1$, in view of the larger workplace size.
Now, $R_0<R_V<R_I$, again as predicted by Theorem~\ref{Theorem 5.2}\thp{b}, and $\widetilde{R}_r \to R_0$ as $\delta^{-1} \to \infty$, whereas $R_r$ tends to a limit lying strictly between $R_0$ and $R_V$.  The limiting case when $\delta^{-1} \to \infty$ is analysed in Appendix~\ref{app:inflonglat}, where similar results as for the households model are proved.  Note that in Figure~\ref{HWSEIR}\fpt{b}, $R_I$ is appreciably greater than $R_r$, owing in part to the effect of large workplaces.

%%%%%%%%%%%%%%%%%%%%%%%%%%%%%%%%%%%%%%%%%%%%%%%%%%%%%%%%%%%%%%%%
\section{Proofs}
\label{sec:proofs}
We define $\sign(x)$ to be $-1$, $0$ and $1$, for $x<0$, $x=0$ and $x>0$, respectively.
\subsection{Proof of Theorem~\ref{Hcomp}}
\label{subsec:Hcompproof}
 To shorten the exposition of the proof, we use the notation $\overset{n}{\geq}$ to denote that there is equality if the population contains no household with size strictly larger than $n$ and the inequality is strict if the population contains households with size strictly larger than $n$.  With this notation, the statement of Theorem~\ref{Hcomp} is as follows.

\begin{duplicate} %PT: I am not happy at all with the numbering solution. I don't know how to do it properly.            
\
\begin{enumerate}
\item[(a)]
$R_*=1 \iff R_I =1 \iff R_0 = 1 \iff R_2=0 \iff R_{H\!I}=1 \implies R_V = 1$.

\item[(b)]
In a growing epidemic,
\[
R_* > R_I \overset{2}{\geq} R_V \overset{3}{\geq} R_0 > R_{H\!I} >1\quad \mbox{ and }\quad R_I \overset{2}{\geq} R_2 > R_{H\!I} > 1,
\]
and in a declining epidemic
\[
R_* < R_I \overset{2}{\leq} R_0 < R_{H\!I}  <1  \quad\mbox{ and }\quad R_I \overset{2}{\leq} R_2 < R_{H\!I} < 1.
\]

\end{enumerate}
\end{duplicate}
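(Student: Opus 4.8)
The plan is to treat all of the reproduction numbers $R_*, R_I, R_0, R_{H\!I}, R_2$ through the common representation~\eqref{gA}: each is the unique positive root of a strictly increasing function $g_A(\lambda)=1-\mu_G\sum_k \mu_k^A\lambda^{-(k+1)}$ with $g_A(0^+)=-\infty$, $g_A(\infty)=1$, and $\sum_k\mu_k^A=1+\mu_H$ for every $A$. To compare two such numbers I would evaluate one characteristic function at the other's root: since $g_A$ is increasing, $R_A$ exceeds $R_B$ exactly when $g_A(R_B)<0$, and
\[
g_A(R_B)=g_A(R_B)-g_B(R_B)=\mu_G\sum_{k\ge0}\bigl(\mu_k^B-\mu_k^A\bigr)R_B^{-(k+1)}.
\]
For a growing epidemic $R_B>1$, so the weights $R_B^{-(k+1)}$ decrease in $k$; for a declining epidemic $R_B<1$ and they increase --- this is exactly what produces the reversal between the two cases, so I would carry the computation once and read off both directions. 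Part~(a) is then immediate: $g_A(1)=1-\mu_G(1+\mu_H)=1-R_*$ for every $A$, so each $R_A=1\iff R_*=1$ (with the corresponding identity for $g_2$), while $R_*=1$ forces $p_C=0$ in~\eqref{rv} and hence $R_V=1$.

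The comparisons involving $R_*$ and $R_I$ are the easy ones, because $X^*$ is concentrated at generation $0$ and $X^I$ at generations $0$ and $1$ (cf.~\eqref{RAXlambda}); hence the partial-sum (stochastic) orderings $X^*\overset{st}{\le}X^I\overset{st}{\le}X^0$ and $X^I\overset{st}{\le}X^2$ hold for arbitrary nonnegative generation sizes, giving $R_*>R_I\ge R_0$ and $R_I\ge R_2$ in a growing epidemic (and the reverse when declining), the strictness of $R_I\ge R_0$ and $R_I\ge R_2$ being equivalent to the presence of mass beyond generation $1$, i.e.\ to $n_H>2$. That $R_{H\!I}>1$ in a growing epidemic follows from part~(a) together with monotonicity of $g_{H\!I}$.

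The quantity $R_V$ does not come from a generational construction, so I would argue it separately and only in the growing case (it is set to $1$ otherwise). Write $R_*(p)=(1-p)\mu_G\bigl(1+\mu_H(p)\bigr)$ for the post-vaccination household reproduction number when each non-primary member is retained independently with probability $1-p$, so that $R_V=1/(1-p_C)$ with $R_*(p_C)=1$. Coupling the vaccinated within-household epidemic to the unvaccinated one, a secondary case must at least survive vaccination, so its infection probability is at most $1-p$, giving $\mu_H(p)\le(1-p)\mu_H$; since the approximation $R_{V\!A}$ of Goldstein et al.\ (obtained by replacing $\mu_H(p)$ by $(1-p)\mu_H$) coincides with the root of~\eqref{gind}, i.e.\ with $R_I$, this yields $R_V\le R_I$, strict iff $n_H>2$. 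For the lower bound, at $p^\ast=1-1/R_0$ a generation-$k$ case survives whenever the $k$ non-primary individuals on a shortest infection chain are all retained, an event of conditional probability at least $R_0^{-k}$; summing over cases gives $\mu_H(p^\ast)\ge\sum_{k\ge1}\mu_k R_0^{-k}$, so $1+\mu_H(p^\ast)\ge\sum_{k\ge0}\mu_k R_0^{-k}=R_0/\mu_G$ by~\eqref{g0H}, whence $R_*(p^\ast)\ge1$. Thus $p_C\ge p^\ast$ and $R_V\ge R_0$, with strictness precisely when alternative infection chains exist, i.e.\ when $n_H>3$.

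The remaining and hardest comparisons are $R_0>R_{H\!I}$ and $R_2>R_{H\!I}$. Using $g_{H\!I}=\sum_n\pi_n g_{H\!I}^{(n)}$ with $g_{H\!I}^{(n)}(\lambda)=1-\mu_G/(\lambda-a^{(n)})$, I would reduce these to $g_{H\!I}(R_0)>0$ and $g_{H\!I}(R_2)>0$ household by household, i.e.\ to $\sum_k\bigl(\mu_k^{(n)}-(a^{(n)})^k\bigr)\lambda^{-(k+1)}\ge0$ at $\lambda=R_0$ (resp.\ with the $R_2$-profile) for $\lambda>1$. For $R_2$ versus $R_{H\!I}$ the geometric form of both profiles makes everything collapse to the single inequality $b^{(n)}\le a^{(n)}$, equivalently $\mu_1^{(n)}\ge a^{(n)}$, which I would prove by dominating the within-household epidemic by a Galton--Watson process of offspring mean $\mu_1^{(n)}$ (so $\mu_H^{(n)}\le\mu_1^{(n)}/(1-\mu_1^{(n)})$ when $\mu_1^{(n)}<1$). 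For $R_0$ versus $R_{H\!I}$ no such collapse occurs: Abel summation shows $g_{H\!I}^{(n)}(R_0)$ has the sign of $\sum_k E_k^{(n)}R_0^{-(k+1)}$ with $E_k^{(n)}=\sum_{j\le k}\bigl(\mu_j^{(n)}-(a^{(n)})^j\bigr)$, so a clean sufficient condition is the tail estimate $\sum_{j>k}\mu_j^{(n)}\le(1+\mu_H^{(n)})(a^{(n)})^{k+1}$ for all $k$, asserting that the true generation profile is front-loaded relative to the matched geometric one. I expect this to be the main obstacle: the case $k=1$ is again $\mu_1^{(n)}\ge a^{(n)}$, but controlling the whole tail for $k\ge2$ genuinely requires the structure of the within-household epidemic (susceptible depletion under the exchangeable contact model), without which the inequality --- and indeed $R_0>R_{H\!I}$ --- can fail. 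Finally, I would record that $b^{(n)}<\mu_1^{(n)}$, needed so that $R_2$ is the relevant root of its characteristic function, is verified at the end of the proof, and that the declining statements follow verbatim from the same identities with the weight monotonicity reversed.
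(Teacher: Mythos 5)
Most of your outline tracks the paper's proof closely (the $g_A(1)=1-R_*$ identity for part (a), the coupling bound $\mu_H(p)\le(1-p)\mu_H$ for $R_I\ge R_V$, the chain-survival bound $\mu_k(p)\ge(1-p)^k\mu_k$ for $R_V\ge R_0$, and the algebraic collapse of $R_2$ versus $R_{H\!I}$ to $\mu_1^{(n)}>a^{(n)}$), and your use of the stochastic orderings $X^*\overset{st}{\le}X^I\overset{st}{\le}X^0$, $X^I\overset{st}{\le}X^2$ for the easy comparisons is a legitimate variant that the paper itself flags as available. But there is a genuine gap at the comparison that carries the main content of the theorem: $R_0>R_{H\!I}$ in a growing epidemic. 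You correctly reduce it, via Abel summation, to the tail estimate $\sum_{j>k}\mu_j^{(n)}\le(1+\mu_H^{(n)})(a^{(n)})^{k+1}$ for all $k$ (equivalently $X^0\overset{st}{\le}X^{H\!I}$), and you correctly observe that your Galton--Watson domination $\mu_{j+1}\le\mu_1\mu_j$ only delivers the case $k=0$ --- it bounds the tail by $\mu_{k+1}\,\mu_1/(1-\mu_1)$, which exceeds the required $\mu_{k+1}\mu_H$ exactly when susceptible depletion matters. At that point you stop, declaring the tail control for $k\ge2$ to be ``the main obstacle''. That obstacle is the theorem. The paper closes it with inequality \eqref{muinequ}, $\sum_{j=k+2}^{n-1}\mu_j<\mu_{k+1}\sum_{j=1}^{n-1}\mu_j$, proved probabilistically: condition on the generation sizes $Y_0,\dots,Y_{k+1}$ in the epidemic graph, bound $\sum_{j\ge k+2}Y_j$ by $Y_{k+1}$ times the mean progeny of a single generation-$(k+1)$ infective, and observe that this progeny is the outcome of a household epidemic with strictly fewer initial susceptibles than $n-1$, hence stochastically smaller than a fresh one --- giving \eqref{muleeyz}--\eqref{muinequ1}. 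Without this (or an equivalent) argument your proof of $R_0>R_{H\!I}$, and hence of the displayed chains in part (b), is incomplete.

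A secondary, much smaller point: the theorem asserts the \emph{strict} inequality $R_2>R_{H\!I}$, which requires $\mu_1^{(n)}>a^{(n)}$ strictly, whereas the Galton--Watson bound as you state it yields only $\mu_H^{(n)}\le\mu_1^{(n)}/(1-\mu_1^{(n)})$ and hence $\mu_1^{(n)}\ge a^{(n)}$. The paper obtains strictness as the $k=0$ case of the strict inequality \eqref{muinequ}; if you keep your route you need an extra sentence ruling out equality (e.g.\ noting that with positive probability a generation-$1$ infective's list contains the initial case, so $\mu_2<\mu_1^2$ whenever $\mu_1>0$).
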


\begin{proof}
We first prove \thp{a}.  Note from~\eqref{g0H} and~\eqref{gind} that  $$g_0(1)=g_I(1)=1-\mu_G(1+\mu_H).$$  Recalling~\eqref{gHI}, note that $1 - a^{(n)}=1/(1+\mu_H^{(n)})$ $(n=1,2,\cdots,n_H),$ so $$g_{H\!I}(1)=1-\mu_G\sum_{n=1}{n_H}
\pi_n (1+\mu_H^{(n)})=1-\mu_G(1+\mu_H).$$  Similarly, recalling~\eqref{g2unequal}, $1-b^{(n)}=\mu_1^{(n)}/\mu_H^{(n)}$ $(n=2,3,\cdots,n_H)$, so $$g_2(1)=1-\mu_G\sum_{n=2}{n_H}\pi_n (1+\mu_H^{(n)})=1-\mu_G(1+\mu_H).$$
Recall that $R_*=\mu_G(1+\mu_H)$. Now $g_0$ and $g_I$ are strictly increasing on $(0,\infty)$, $g_{H\!I}$ is strictly increasing on $(a,\infty)$ and $g_2$ is strictly increasing on $(b,\infty)$, so 
$$\sign(g_0(1))=\sign(g_I(1))= \sign(g_{H\!I}(1))=\sign(g_2(1))=\sign(1-R_*),$$ 
since $R_0, R_I, R_{H\!I}$ and $R_2$ are the unique roots of $g_0, g_I, g_{H\!I}$ and $g_2$, respectively.
Thus $$R_*=1 \iff R_I = 1 \iff R_0=1 \iff R_{H\!I}=1 \iff R_2=1,$$ as required. By definition, $R_V=1$ if $R_*=1$.

To prove \thp{b}, we first note that the above argument shows that the reproductions numbers $R_*, R_I, R_0, R_{H\!I}$ and $R_2$ are all strictly greater than 1 in a growing epidemic and strictly smaller than 1 in a declining
epidemic. We consider now each of the comparisons in turn.
\begin{enumerate}
\item[(i)]
\textit{$R_*$ and $R_I$.}

Suppose that $R_*>1$. From (\ref{gind}),
\[
g_I (R_*)=1-\frac{\mu_G}{R_*} - \frac{\mu_H \mu_G}{R_*^2} > 1-\frac{\mu_G}{R_*} - \frac{\mu_H \mu_G}{R_*}=0,
\]
since $R_*=\mu_G(1+\mu_H)$. Thus $R_I<R_*$, since $g_I$ is increasing in $(0,\infty)$ and $R_I$ is the unique root of $g_I$ in $(0,\infty)$. A similar argument shows that $R_* < R_I$ when $R_* < 1$.

\item[(ii)]
\textit{$R_I$ and $R_V$}

Suppose that $R_I>1$ and a fraction $p$ of the population is vaccinated with a perfect vaccine.  Then $\mu_G$ is reduced to $\mu_G (p) = (1-p) \mu_G$ and $\mu_H$ is reduced to $\mu_H (p)$, for which we now obtain a simple upper bound.  Consider the epidemic graph $\mathcal{G}^{(n)}$
defined in Section~\ref{Hmodel}.  For $i=1,2,\cdots,n-1$, let $\chi_i^{(n)}$  denote the event that individual $i$ becomes infected in the single household epidemic (i.e.~if in $\mathcal{G}^{(n)}$ there is a chain of directed edges from $0$ to $i$) and let $(\chi_i^{(n)})^{\mathcal{C}}$ denote its complement.  Then the mean size of the single household epidemic (not including the primary case) is given by $\mu_H^{(n)}=\sum_{i=1}^{n-1} \mathbb{P}(\chi_i^{(n)})$.  Now keep the same realisation of $\mathcal{G}^{(n)}$, vaccinate each initial susceptible independently with probability $p$, and hence obtain a realisation of the single-household epidemic with vaccination.  For $i=1,2,\cdots,n-1$, let $\chi_i^{(n)} (p)$ be the event that individual $i$ is infected by the epidemic in the vaccinated population and let
$(\chi_i^{(n)} (p))^{\mathcal{C}}$ be its complement.  Clearly, for $p>0$, if $\chi_i^{(n)} (p)$ occurs, then so does $\chi_i^{(n)}$ and $i$ is not vaccinated.  Hence, if $p > 0$, $\mathbb{P}(\chi_i^{(n)} (p)) \overset{2}{\leq} (1-p)\mathbb{P}(\chi_i^{(n)})$, since vaccination is independent of $\mathcal{G}^{(n)}$. (Note that $\chi_1^{(2)} (p)$ occurs if and only if $\chi_1^{(2)}$ occurs and 1 is not vaccinated, so 
$\mathbb{P}(\chi_1^{(2)} (p)) = (1-p)\mathbb{P}(\chi_1^{(2)})$.  However, for $n>2$, given that individual $1$ is
not vaccinated, it does not necessarily follow that he/she is infected in the vaccinated epidemic if he/she is infected in the unvaccinated epidemic, since all chains from from individual $0$ to individual $1$ may
still be broken by vaccination.) This inequality implies, in obvious notation, that $\mu_H^{(n)} (p) \overset{2}{\leq} (1-p) \mu_H^{(n)}$, and taking expectations with respect to the size-biased household size distribution $(\pi_n)$ then gives $\mu_H (p) \overset{2}{\leq} (1-p) \mu_H$.  Let $R_I (p)$ denote the post-vaccination version of $R_I$.  Then, as at (\ref{gind}), $R_I (p)$ is the unique solution of
$g_{I,p} (\lambda)=0$ in $(0,\infty)$, where
\[
g_{I,p} (\lambda) = 1-\frac{\mu_G (p)}{\lambda} - \frac{\mu_H (p) \mu_G (p)}{\lambda^2}.
\]
Now, for $R_I >1$ and $p>0$,
\begin{align*}
g_{I,p} ((1-p)R_I)&=1-\frac{\mu_G (p)}{(1-p)R_I} - \frac{\mu_H (p) \mu_G (p)}{(1-p)^2 R_I^2}\\
&\overset{2}{\geq} 1-\frac{\mu_G}{R_I} - \frac{\mu_H \mu_G}{R_I^2} =0,
\end{align*}
by the definition of $R_I$.  It follows that $R_I (p) \overset{2}{\leq} (1-p)R_I$ and, in particular, if $p=1-R_I^{-1}$ then
$R_I (p) \overset{2}{\leq} 1$.  Hence, $p_C \overset{2}{\leq} 1-R_I^{-1}$ and, using (\ref{rv}), $R_V \overset{2}{\leq} R_I$.

\item[(iii)]
\textit{$R_V$ and $R_0$}

Suppose that $R_0>1$ and a fraction $p$ of the population is vaccinated with a perfect vaccine.  Then, cf.~(\ref{g0H}), the post-vaccination basic reproduction number, $R_0 (p)$ say, is given by the unique solution in $(0,\infty)$ of $g_{0,p} (\lambda)=0$, where
\[
g_{0,p} (\lambda )=1-\sum_{k=0}^{n_H-1} \frac{\nu_k (p)}{\lambda^{k+1}},
\]
with $\nu_k (p) = \mu_k (p) \mu_G (p)$.  Here, $\mu_G (p) = (1-p) \mu_G$ (as above) and, for $k=0,1,\cdots$, $\mu_k (p)$ is the post-vaccination mean size of the $k$th generation in a typical single-household epidemic with one initial infective (who is not vaccinated, so $\mu_0 (p)=1$).  We now obtain a lower bound for $\mu_k (p)$ ($k=1,2,\cdots$).  Consider again the epidemic graph $\mathcal{G}^{(n)}$.  For $k,i=1,2,\cdots,n-1$, let $\chi_{k,i}^{(n)}$ be the event that individual $i$ is a generation-$k$ infective and let $(\chi_{k,i}^{(n)})^{\mathcal{C}}$ be its complement.  Then the mean size of the $k$th generation is given by
$\mu_k^{(n)}=\sum_{i=1}^{n-1} \mathbb{P}(\chi_{k,i}^{(n)})$.  Now construct a realisation of the post-vaccination single-household epidemic as above, and define $\mu_k^{(n)} (p)$ and $\chi_{k,i}^{(n)} (p)$ in the obvious fashion.  Then, fix generation $k$ and suppose that individual $i$ is a generation-$k$ infective in the unvaccinated epidemic. Then $\chi_{k,i}^{(n)}$ occurs and in $\mathcal{G}^{(n)}$ there exists at least one chain of directed arcs of length $k$ from the initial infective to individual $i$, and there is no shorter such chain connecting those individuals. Fix such a path of length $k$.  If all $k$ members of that path avoid vaccination, which happens with probability $(1-p)^k$ independently of $\mathcal{G}^{(n)}$, then $\chi_{k,i}^{(n)}(p)$ occurs. %In this case, $\chi_{k,i}^{(n)}(p)$ occurs only if either the chain is the only one of length $k$ and all members escape infection independently, which happens with probability $(1-p)^k$, or that chain is interrupted but there are other uninterrupted chains of the same length $k$ reaching individual $i$. 
Therefore, $\mathbb{P}(\chi_{k,i}^{(n)} (p)|\chi_{k,i}^{(n)}) \geq (1-p)^k$, whence
\begin{align*}
\mathbb{P}(\chi_{k,i}^{(n)} (p))
&= \mathbb{P}(\chi_{k,i}^{(n)})\mathbb{P}(\chi_{k,i}^{(n)} (p)|\chi_{k,i}^{(n)})+\mathbb{P}((\chi_{k,i}^{(n)})^{\mathcal{C}})\mathbb{P}(\chi_{k,i}^{(n)} (p)|(\chi_{k,i}^{(n)})^{\mathcal{C}})\\
&\ge \mathbb{P}(\chi_{k,i}^{(n)})(1-p)^k,
\end{align*}
which implies that $\mu_k^{(n)} (p) \geq (1-p)^k \mu_k^{(n)}$ and hence that $\mu_k (p) \geq (1-p)^k \mu_k$.  Note that for households of size $n \leq 3$, $\mu_k^{(n)} (p) = (1-p)^k \mu_k$ ($k=0,1,\cdots,n-1)$, since there can be at most one chain of length $k$ linking an individual to the initial susceptible, but for $n \geq 4$ and $p > 0$ the inequality
$\mu_k^{(n)} (p) \geq (1-p)^k \mu_k^{(n)}$ is strict for at least one $k$, as two or more chains may link an individual to the initial infective.

Clearly $R_V=1$ if $R_0=1$.  Suppose that $R_0 >1$ and let $p_C'=1-R_0^{-1}$.  Then,
\begin{align*}
g_{0,p'_{C}} (1)&=1-\sum_{k=0}^{n_H-1} \nu_k (p_C')\\
&\overset{3}{\leq}1-\sum_{k=0}^{n_H-1} (1-p_C')^{k+1} \mu_G \mu_k=1-\sum_{k=0}^{n_H-1} \frac{\nu_k}{R_0^{k+1}}=0,
\end{align*}
as $R_0$ satisfies (\ref{g0H}).  Hence, $R_0 (p_C') \overset{3}{\geq} 1$, since $R_0 (p_C')$ is the unique positive solution of $g_{0,p_C'} (\lambda)=0$. %in $(0,\infty)$.  
Thus $p_C \overset{3}{\geq} 1-R_0^{-1}$ and, recalling (\ref{rv}), $R_V \overset{3}{\geq} R_0$.

\item[(iv)]
\textit{$R_0$ and $R_I$}

For a growing epidemic, we know that both $R_0$ and $R_I$ are strictly greater than 1, and we have proved above that $R_I \overset{2}{\ge} R_V \overset{3}{\ge} R_0$, so we need consider only a declining epidemic.  If $R_I < 1$,  then it follows from (\ref{gind}) and (\ref{g0H}) that $g_0 (R_I) \overset{2}{\leq} g_I (R_I)=0$, whence $R_0 \overset{2}{\ge} R_I$.

\item[(v)]
\textit{$R_0$ and $R_{H\!I}$.}

Let $h_0(\lambda)=g_{H\!I}(\lambda)-g_0(\lambda)$.  We show that, for $\lambda> a$, $\sign(h_0(\lambda))=\sign(\lambda-1)$.  It then
follows that in a growing epidemic $R_0 > R_{H\!I}$ and in a declining epidemic $R_0<R_{H\!I}$,
since $g_0$ and $g_{H\!I}$ are each strictly increasing on their respective domains.   Note that,
since $g_0(\lambda)=\sum_{n=1}^{n_H} \pi_n g_0^{(n)}(\lambda)$ and $g_{H\!I}(\lambda)=\sum_{n=1}^{n_H} \pi_n g_{H\!I}^{(n)}(\lambda)$, it is sufficient to show,
for each $n=2,3,\cdots,n_H$, that $\sign(h_0(\lambda))=\sign(\lambda-1)$ when all the households have size $n$. (It is easily verified that $g_0^{(1)}(\lambda)=g_{H\!I}^{(1)}(\lambda)=1-\mu_G/\lambda$, so households of size $1$ do not contribute to $h_0(\lambda)$.)  Thus we now assume that all households have size $n$, where $n \ge 2$. To ease the exposition, we suppress the explicit dependence on $n$.

It follows directly from (\ref{g0H}) and (\ref{gnHI}) that
\begin{align}
\label{hlambda}
h_0(\lambda) &= \frac{\mu_G}{\lambda-a}\left\{\left[\sum_{k=0}^{n-1}\frac{\mu_k(\lambda-a)}{\lambda^{k+1}}\right]-1\right\}\nonumber\\
&= \frac{\mu_G}{\lambda(\lambda-a)}\left\{\left[\sum_{k=1}^{n-1}\frac{\mu_k-a\mu_{k-1}}{\lambda^{k-1}}\right]-\frac{a\mu_{n-1}}{\lambda^n}\right\}\nonumber\\
&=-\frac{\mu_G}{\lambda(\lambda-a)} f(\lambda^{-1}),
\end{align}
where $f$ is the polynomial of degree $n-1$ given by $f(x)=\sum_{k=0}^{n-1}c_k x^k$, with 
$$c_k=a\mu_k-\mu_{k+1}  \mbox{ for $k=0,1,\cdots,n-2$} \qquad \mbox{and} \qquad c_{n-1}=a\mu_{n-1}.$$

Now $f(1)=a\sum_{k=0}^{n-1}\mu_k-\sum_{k=1}^{n-1}\mu_k=a(1+\mu_H)-\mu_H=0$, so
\begin{equation}
f(x)=(x-1)\tilde{f}(x),
\label{fnfntilde}
\end{equation}
where $\tilde{f}(x)$ is a polynomial of degree $n-2$, say
\begin{equation}
\label{fntildepower}
\tilde{f}(x)=\sum_{k=0}^{n-2}\tilde{c}_k x^k.
\end{equation}
Substituting (\ref{fntildepower}) into (\ref{fnfntilde}) yields, after equating coefficients of powers of $x$, that, for $k=0,1,\cdots,n-2$,
\begin{equation}
\label{ctilde}
\tilde{c}_k=\sum_{j=k+1}^{n-1} c_j=a\sum_{j=k+1}^{n-1}\mu_j-\sum_{j=k+2}^{n-1}\mu_j,
\end{equation}
where the final sum is zero if $k=n-2$.

Let $n_0=\max(k:\mu_k>0)$ and note that $n_0 \ge 1$, since otherwise $\mu_H=0$.
Then $\tilde{c}_{n_0-1}=a\mu_{n_0}>0$ and $\tilde{c}_k=0$ for $k \ge n_0$. Thus,
to complete the proof we show that $\tilde{c}_k \ge 0$ $(k=0,1,\cdots,n_0-2)$, since then  (\ref{hlambda}), (\ref{fnfntilde}) and (\ref{fntildepower}) imply that $$\sign(h_0(\lambda))=\sign(\lambda-1).$$

Recall that $a=\sum_{j=1}^{n-1}\mu_j/(1+\sum_{j=1}^{n-1}\mu_j)$, which on substituting into (\ref{ctilde}) shows that, for $i=0,1,\cdots,n-3$, $\tilde{c}_i>0$ if and only if
\begin{equation}
\sum_{j=k+2}^{n-1}\mu_j<\mu_{k+1}\sum_{j=1}^{n-1}\mu_j.
\label{muinequ}
\end{equation}

To prove (\ref{muinequ}), construct a realisation of a single-household epidemic
using the epidemic graph $\mathcal{G}^{(n)}$.
Let $Y_0,Y_1,\cdots,Y_{n-1}$ denote the sizes of the successive generations of infectives.
Then, for $k=0,1,\cdots,n_0-2$,
\begin{align}
\sum_{j=k+2}^{n-1}\mu_j&=\mathbb{E}\left[\sum_{j=k+2}^{n-1} Y_j\right] \nonumber\\
&=\mathbb{E}\left[\mathbb{E}\left[\sum_{j=k+2}^{n-1} Y_j|Y_0, Y_1,\cdots,Y_{k+1}\right]\right]\nonumber\\
&\le \mathbb{E}[Y_{k+1} \mathbb{E}[\Upsilon_{k+1}|Y_0, Y_1,\cdots,Y_{k+1}]],\label{muleeyz}
\end{align}
where $\Upsilon_{k+1}$ is the total number of infectives in generations $k+2,k+3,\cdots,n-1$ that are descended from (i.e.~in the epidemic graph have chain of directed edges from) a typical generation-$(k+1)$ infective.  Note that an infective in generation $j>k+1$ may be descended from
more than one generation-$(i+1)$ infective, hence the inequality in (\ref{muleeyz}).  Further,
$\Upsilon_{k+1}|Y_0, Y_1,\cdots,Y_{k+1}$ is distributed as the total number of infectives, $\Upsilon$ say, in generations $1,2,\cdots,n-2-k$ of a single-household epidemic with initially $1$ infective and $n-(Y_0+Y_1+\cdots+Y_{k+1})$ susceptibles.  Now $\Upsilon$ is stochastically strictly less than the total number of infectives in generations $1,2,\cdots,n-2-k$ of such an epidemic with initially $1$ infective and $n-1$ susceptibles, so
\begin{equation*}
\mathbb{E}[\Upsilon_{k+1}|Y_0, Y_1,\cdots,Y_{k+1}] < \sum_{j=1}^{n-2-k}\mu_j,
\end{equation*}
and (\ref{muleeyz}) yields
\begin{equation}
\label{muinequ1}
\sum_{j=k+2}^{n-1}\mu_j < \mathbb{E}\left[Y_{k+1}\sum_{j=1}^{n-2-k}\mu_j\right]=\mu_{k+1}\left(\sum_{j=1}^{n-2-k}\mu_j\right) \le \mu_{k+1} \sum_{j=1}^{n-1}{\mu_j},
\end{equation}
proving (\ref{muinequ}).

\item[(vi)]
\textit{$R_I$ and $R_2$.}

Let $h_I(\lambda)=g_2(\lambda)-g_I(\lambda)$.  From~\eqref{gind} and~\eqref{muLunequal},\newline $g_I(\lambda)=1-\frac{\mu_G}{\lambda}\left(1+\sum_{n=2}^{n_H} \pi_n \frac{\mu_H^{(n)}}{\lambda}\right)$, whence, recalling~\eqref{g2unequal},
\begin{eqnarray*}
\label{h1}
h_I(\lambda)&=&\frac{\mu_G}{\lambda}\sum_{n=2}^{n_H}\pi_n\left(\frac{\mu_H^{(n)}}{\lambda}-\frac{\mu_1^{(n)}}{\lambda-b^{(n)}}\right)\\ \nonumber
&=& \mu_G \frac{\lambda-1}{\lambda^2}\sum_{n=2}^{n_H}\pi_n \frac{\mu_H^{(n)}-\mu_1^{(n)}}{\lambda-b^{(n)}},
\end{eqnarray*}
since $b^{(n)}=1-\left(\mu_1^{(n)}/\mu_H^{(n)}\right)$. Now $\mu_H^{(2)}=\mu_1^{(2)}$ so, if $n_H=2$, then $h_I(\lambda) \equiv 0$
and $R_I=R_2$.  If $n_H>2$, then $\sign(h_I(\lambda))=\sign(\lambda-1)$ for $\lambda>b=\max\left(b^{(2)},b^{(3)},\cdots,b^{(n_H)}\right)$, since $\mu_H > \mu_1$, and, similar to the comparison of $R_0$ and $R_{H\!I}$, it follows that
in a growing epidemic $R_I > R_2$ and in a declining epidemic $R_I < R_2$.
\item[(vii)]
\textit{$R_2$ and $R_{H\!I}$.}

Let $h_2(\lambda)=g_{H\!I}(\lambda)-g_2(\lambda)$.  As in the proof of the comparison of $R_0$ and $R_{H\!I}$, it is sufficient to assume that all households have size $n$, where $n \ge 2$, and show that $\sign(h_2(\lambda))=\sign(\lambda-1)$ for $\lambda>\max(a,b)$.  Equations (\ref{gnHI}) and (\ref{gn2}) imply that, for $\lambda>\max(a,b)$,
\begin{align}
\label{hhatlambda}
h_2(\lambda) &=
\mu_G\left[\frac{1}{\lambda}+\frac{\mu_1}{\lambda(\lambda-b)}-\frac{1}{\lambda-a}\right]
\nonumber\\
&=
\frac{\mu_G}{\lambda(\lambda-a)(\lambda-b)}\left[\lambda(\mu_1-a)-a(\mu_1-b)\right].
\end{align}
It follows from (\ref{gnHI}) and (\ref{gn2}) that $g_{H\!I}(1)=g_2(1)=1-R_*$, so
$h_2(1)=0$, whence $\mu_1-a=a(\mu_1-b)$.  (The latter is easily checked directly using the definitions of $a$ and $b$.)  Setting $i=0$ in (\ref{muinequ}), recalling that $a=\sum_{k=1}^{n-1} \mu_k/(1+\sum_{k=1}^{n-1} \mu_k)$ and rearranging shows that $\mu_1>a$.  (Note that this implies $b<\mu_1$, as claimed after (\ref{g2}).)
Substituting $\mu_1-a=a(\mu_1-b)$ into (\ref{hhatlambda}) then shows that, for $\lambda>\max(a,b))$, $\sign(h_2(\lambda)=\sign(\lambda-1)$, which completes the proof. 

\end{enumerate}
\end{proof}

\subsection{Proof of Theorem~\ref{propos}}
\label{subsec:HRrproof}

%PT: added something:
In this subsection we prove Theorem~\ref{propos}, which we restate here.
\begin{duplicate}
\begin{itemize}
\item[(a)] For all choices of infectivity profile $\mathcal{I}(t)$  $(t \geq 0)$,
\[
R_r=1 \iff \widetilde{R}_r=1 \iff R_0 = 1.
\]
\item[(b)] If $\mathcal{I}(t) = J \gtpdf(t)$ ($t \geq 0$), where $J$ is a non-negative random variable,
then in a growing epidemic,
\[
R_r \ge R_0 >1,
\]
and in a declining epidemic,
\[
R_r \le R_0 <1.
\]
\item[(c)] If $\mathcal{I}(t) =f(t) \indic(T_I>t)$ ($t \geq 0$), where $f(t)$ is a deterministic function and $T_I$ a non-negative random variable, then in a growing epidemic,
\[
R_r \ge \widetilde{R}_r \ge R_0 >1,
\]
and in a declining epidemic,
\[
R_r \le \widetilde{R}_r \le R_0 <1.
\]

\end{itemize}
The above results still hold if a latent period independent of the remainder of the infectivity profile is added.
\end{duplicate}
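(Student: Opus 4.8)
The plan is to reduce every comparison to the three Lotka--Euler equations defining the growth rates, and thence to a single pointwise comparison of Laplace transforms. Write $r$, $\tilde r$ and $r_0$ for the roots of $\mathcal{L}_{\beta_H}(\cdot)=1$, $\widetilde{\mathcal{L}}_{\beta_H}(\cdot)=1$ and $\mathcal{L}_{\beta_H}^{(0)}(\cdot)=1$, respectively. Evaluating~\eqref{rtgrowth2} at $r_0$ shows, exactly as in the derivation that Fraser's approximation gives $R_0$, that $\mathcal{L}_{\beta_H}^{(0)}(r_0)=1$ is equivalent to $g_0(1/\mathcal{M}_{\gtrv}(r_0))=0$, so $R_0=1/\mathcal{M}_{\gtrv}(r_0)$; by construction $R_r=1/\mathcal{M}_{\gtrv}(r)$ and $\widetilde{R}_r=1/\mathcal{M}_{\gtrv}(\tilde r)$. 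Since $\mathcal{M}_{\gtrv}(\theta)=\bbE[\e^{-\theta\gtrv}]$ is strictly decreasing, $\lambda\mapsto 1/\mathcal{M}_{\gtrv}(\lambda)$ is increasing, so $R_r\ge R_0$ iff $r\ge r_0$ and $\widetilde{R}_r\ge R_0$ iff $\tilde r\ge r_0$ (with reversed inequalities corresponding). Each of $\mathcal{L}_{\beta_H}$, $\widetilde{\mathcal{L}}_{\beta_H}$, $\mathcal{L}_{\beta_H}^{(0)}$ is the Laplace transform of a non-negative function, hence strictly decreasing, and each equals $\mu_G(1+\mu_H)=R_*$ at $\theta=0$. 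This already yields part~\thp{a}: $R_r=1\iff r=0\iff R_*=1$, and likewise for $\widetilde{R}_r$, while $R_*=1\iff R_0=1$ by Theorem~\ref{Hcomp}\thp{a}. It also reduces \thp{b} and \thp{c} to the pointwise inequalities $\mathcal{L}_{\beta_H}(\theta)\ge\widetilde{\mathcal{L}}_{\beta_H}(\theta)\ge\mathcal{L}_{\beta_H}^{(0)}(\theta)$ for $\theta>0$ (reversed for $\theta<0$), because for decreasing $\phi,\psi$ with unique roots $\rho,\sigma$ one has $\sigma\ge\rho\iff\psi(\rho)\ge\phi(\rho)=1$, and in a growing (declining) epidemic the relevant roots are positive (negative).

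The inner inequality $\widetilde{\mathcal{L}}_{\beta_H}\ge\mathcal{L}_{\beta_H}^{(0)}$ rests on the single lemma $\gtrvt\overset{st}{\le}\gtrv$. Conditional on its magnitude, the contacts an infective makes to a fixed susceptible household member occur at the points of an inhomogeneous Poisson process with normalised intensity $\gtpdf$; the first such point, which defines $\gtrvt$, is therefore a minimum of independent copies of $\gtrv$ and so is stochastically no larger than $\gtrv$. Hence $\mathcal{M}_{\gtrvt}(\theta)\ge\mathcal{M}_{\gtrv}(\theta)$ for $\theta>0$ (and $\le$ for $\theta<0$), and comparing~\eqref{rtgrowth3} and~\eqref{rtgrowth5} with~\eqref{rtgrowth2} termwise gives $\widetilde{\mathcal{L}}_{\beta_H}(\theta)\ge\mathcal{L}_{\beta_H}^{(0)}(\theta)$ for $\theta>0$. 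This establishes $\widetilde{R}_r\ge R_0$ in a growing epidemic and $\widetilde{R}_r\le R_0$ in a declining one, the inner inequality of part~\thp{c}.

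For the outer inequality I would use the representation, valid because an infective's own infectivity profile is independent of its infection time $T_i$,
\[
\mathcal{L}_{\beta_H}(\theta)=\mu_G\mathcal{M}_{\gtrv}(\theta)\sum_{k=0}^{n_H-1}L_k(\theta),\qquad L_k(\theta)=\bbE\Big[\sum_{i:\,\mathrm{gen}(i)=k}\e^{-\theta T_i}\Big],
\]
in which $\mathcal{L}_{\beta_H}^{(0)}$ and $\widetilde{\mathcal{L}}_{\beta_H}$ are exactly this expression with $L_k(\theta)$ replaced by $\mu_k(\mathcal{M}_{\gtrv}(\theta))^k$ and $\mu_k(\mathcal{M}_{\gtrvt}(\theta))^k$. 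The target is thus $L_k(\theta)\ge\mu_k(\mathcal{M}_{\gtrvt}(\theta))^k$ for $\theta>0$ (giving $R_r\ge\widetilde{R}_r$ in part~\thp{c}), which combined with the lemma also yields $L_k(\theta)\ge\mu_k(\mathcal{M}_{\gtrv}(\theta))^k$ (part~\thp{b}), and the reverse for $\theta<0$. The mechanism is that, for a rank-generation-$k$ infective $i$ with a shortest chain $0=i_0\to\cdots\to i_k=i$ in $\mathcal{G}^{(n)}$, one has $T_i\le\sum_{m=1}^{k}S_m$, where $S_m$ is the first infectious-contact time along the edge $i_{m-1}\to i_m$; each $S_m$ is distributed as $\gtrvt$, and since the infectors $i_0,\dots,i_{k-1}$ are distinct (with independent profiles, in both classes) the $S_m$ along the chain are independent. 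For $\theta>0$, $\e^{-\theta T_i}\ge\prod_{m}\e^{-\theta S_m}$, and summing over generation-$k$ infectives should return $\mu_k(\mathcal{M}_{\gtrvt}(\theta))^k$; the real-time epidemic can only do better, because $i$ may in fact be reached by a faster chain with more edges, so the bound is typically strict, consistent with the Remark.

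The hardest step, and where the argument needs most care, is this last one. The rank generation of $i$ is determined by the \emph{edge set} of $\mathcal{G}^{(n)}$, whereas the $T_i$ and the marks $S_m$ are determined by the contact-time structure; in the class $\mathcal{I}(t)=f(t)\indic(T_I>t)$ these are statistically dependent, since an infector's infectious period is shared across all of its out-edges. Conditioning on $\{\mathrm{gen}(i)=k\}$ therefore distorts the law of the marks, and the factorisation $\bbE[\prod_m\e^{-\theta S_m}]=(\mathcal{M}_{\gtrvt}(\theta))^k$ cannot be read off edgewise without justification. I would resolve this by building the graph and its marks on a common space so that, along any fixed chain of distinct infectors, the first-contact intervals are genuinely independent copies of $\gtrvt$ while the domination $T_i\le\sum_m S_m$ holds pathwise, and then summing over chains rather than over vertices. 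Finally, adding an independent latent period $T_E$ shifts $\gtrv$, $\gtrvt$ and each $T_i$ by latent periods in matching numbers, so corresponding terms of all three transforms acquire the same power of $\mathcal{M}_{T_E}(\theta)$; this preserves both $\gtrvt\overset{st}{\le}\gtrv$ (for the shifted variables) and the pathwise domination, leaving every inequality intact.
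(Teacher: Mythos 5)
Your reduction of all three parts to pointwise comparisons of $\mathcal{L}_{\beta_H}$, $\widetilde{\mathcal{L}}_{\beta_H}$ and $\mathcal{L}_{\beta_H}^{(0)}$, and your proof of part \thp{a} via their common value $R_*$ at $\theta=0$, match the paper exactly. The gaps are in the two lemmas that carry the rest. First, your justification of $\mathcal{M}_{\gtrvt}(\theta)\ge\mathcal{M}_{\gtrv}(\theta)$ for $\theta>0$ is wrong: conditional on the realised infectivity profile, contacts occur at intensity $\lambda_H\mathcal{I}(t)$, \emph{not} at a normalised multiple of $\gtpdf(t)=\bbE[\mathcal{I}(t)]$, so $\gtrvt$ is not a minimum of independent copies of $\gtrv$. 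Indeed the inequality is \emph{false} for general random profiles (Appendix~\ref{app:RandomTVI} constructs a profile with $\mathcal{M}_{\gtrvt}(r)<\mathcal{M}_{\gtrv}(r)$ for $r>0$, whence $\widetilde{R}_r<R_0$ in a growing epidemic); an argument that never invokes the hypothesis $\mathcal{I}(t)=f(t)\indic(T_I>t)$ cannot be right. The paper proves the inequality for class \thp{c} by writing $\mathcal{M}_{\gtrvt}(\theta)$ as a ratio of expectations over $\gtrv$ and applying Chebychev's `other' (Harris) inequality to the two decreasing functions ${\rm e}^{-\theta x}$ and ${\rm e}^{-\lambda_H\int_0^x f(s)\,{\rm d}s}$.

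Second, you correctly locate the hardest step --- that conditioning on $\{\mathrm{gen}(i)=k\}$ distorts the law of the first-contact intervals along the chain --- but your proposed fix (a coupling under which those intervals are ``genuinely independent copies of $\gtrvt$'') is not available: the conditioning forces certain ``forbidden'' contacts not to occur, and the resulting interval is distributed as $\gtrvzt{m}$, which genuinely differs from $\gtrvt$. The paper's resolution is to condition on the configuration $\Xi$, factorise $\bbE[{\rm e}^{-\theta\hat T_k}]$ into a product of $\mathcal{M}_{\gtrvzt{m(i')}}$ terms, and prove $\gtrvzt{m}\overset{st}{\le}\gtrvt$ --- again by Harris' inequality, and again only for class \thp{c}; the paper explicitly notes this comparison can fail in class \thp{b}. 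This also breaks your route to part \thp{b}, which you derive from the (c)-type bound $L_k(\theta)\ge\mu_k(\mathcal{M}_{\gtrvt}(\theta))^k$ plus your lemma: that intermediate bound is not justified in class \thp{b}. The paper instead proves \thp{b} directly, using that conditional on $J_i$ the contact times towards a given individual are i.i.d.\ copies of $\gtrv$ independent of the edge structure, so $t_{ii'}\le\gtpdfij(1)$ gives $\hat T_k\overset{st}{\le}\sum_{l}\gtrvz{l}$ with the $\gtrvz{l}$ i.i.d.\ as $\gtrv$, bypassing $\gtrvt$ entirely.
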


\begin{proof}
To prove part \thp{a} of Theorem~\ref{propos}, note that
\begin{equation}
\label{Lbetaat0}
\mathcal{L}_{\beta_H}(0)=\widetilde{\mathcal{L}}_{\beta_H}(0)=\mathcal{L}_{\beta_H}^{(0)}(0)=\mu_G(1+\mu_H)=R_*.
\end{equation}
Let $r, \tilde{r}$ and $r^{(0)}$ be the unique real solutions of
\begin{equation}
\label{Lbetardef}
\mathcal{L}_{\beta_H}(\theta)=1, \quad \widetilde{\mathcal{L}}_{\beta_H}(\theta)=1 \quad \mbox{and}\quad \mathcal{L}_{\beta_H}^{(0)}(\theta)=1,
\end{equation}
respectively.  Then,
\begin{equation}
\label{RrsfromM}
R_r=\frac{1}{\mathcal{M}_\gtrv(r)}, \quad \widetilde{R}_r=\frac{1}{\mathcal{M}_\gtrv(\tilde{r})}\quad \mbox{and}\quad R_0=\frac{1}{\mathcal{M}_\gtrv(r^{(0)})}.
\end{equation}
The functions $\mathcal{L}_{\beta_H}(\theta), \widetilde{\mathcal{L}}_{\beta_H}(\theta)$ and $\mathcal{L}_{\beta_H}^{(0)}(\theta)$ are each strictly decreasing in $\theta$, so~\eqref{Lbetaat0} implies that
\[
\sign(r)=\sign(\tilde{r})=\sign(r^{(0)})=\sign(R_*-1),
\]
so, since $\mathcal{M}_\gtrv(\theta)$ is strictly decreasing in $\theta$ and $\mathcal{M}_\gtrv(0)=1$,
\[
\sign(R_r-1)=\sign(\widetilde{R}_r-1)=\sign(R_0-1)=\sign(R_*-1).
\]
This proves part \thp{a} and shows also that $R_r, \widetilde{R}_r, R_0$ and $R_*$ are all strictly greater than 1 in a growing epidemic and strictly less than 1 in a declining epidemic.

%We initially restrict our attention to the practically more interesting case of $r> 0$, i.e.~we assume in what follows that $R_\ast > 1$.
For ease of presentation, we assume first that all households have the same size $n$ and we drop the explicit dependence of $\lambda^{(n)}_H$ on $n$. We further assume $n\geq 2$ in order to avoid trivial cases. We outline at the end of the proof how it extends to variable household sizes.

Consider a local epidemic started by a single initial infective. Label the individuals $0,1, \cdots, n-1$, with individual 0 being the initial infective in the household.  Construct the augmented random graph derived from $\mathcal{G}^{(n)}$ as described in Section~\ref{Hmodel}. For future reference we refer to this augmented graph as $\tilde{\mathcal{G}}^{(n)}$.   Recall that $t_{ii'}$ denotes the time of this first contact (since $i$'s infection) to $i'$; we refer to $t_{ii'}$ as the real infection interval for $i$ to infect $i'$.
For $i=1,2,\cdots,n-1$, if individual $i$ is infected by the local epidemic, then his/her time of infection, denoted by $T_i$, is given by the minimum of the sums of the real infection intervals between every pair of linked individuals along
all directed paths from the initial infective to $i$; if $i$ is not infected by the local epidemic then we set $T_i = \infty$. We set $T_0=0$. This fully specifies the real-time construction of the epidemic.

The overall expected household infectivity profile $\beta_H(t)$ can be decomposed as $\sum_{i=0}^{n-1} \beta_{H,i}(t)$, where $\beta_{H,i}(t) = \mu_G \bbE\left[ \mathcal{I}_i(t-T_i) \right]$ is the contribution from individual $i$ and $\mathcal{I}_i(t)$ is his/her infectivity profile (with $\mathcal{I}_i(t) =0$ if $t<0$).
Now, $\int_0^{\infty} e^{-\theta t} \mathcal{I}_i(t-T_i) \mathrm{d}t = e^{-\theta T_i} \int_0^{\infty} e^{-\theta t} \mathcal{I}_i(t) \mathrm{d}t $ for $T_i < \infty$, while the integral is 0 for $T_i = \infty$.
Therefore, noting that $T_i$ and $\{\mathcal{I}_i(t):t\geq 0\}$ are independent, we have
\begin{eqnarray}
\label{decomposeBetaH}
\mathcal{L}_{\beta_H}(\theta) & = & \int_0^{\infty} \beta_H(t){\rm e}^{-\theta t} \mathrm{d}t = \sum_{i=0}^{n-1} \int_0^{\infty} \beta_{H,i}(t)  {\rm e}^{-\theta t}  \mathrm{d}t \nonumber \\
& = & \sum_{i=0}^{n-1} \bbE\left[ \mathrm{e}^{-\theta T_i} \right] \int_0^{\infty}  \mu_G \bbE[ \mathcal{I}_i(t)]  {\rm e}^{-\theta t}  \mathrm{d}t\nonumber\\
& =&\mu_G \mathcal{M}_\gtrv(\theta) \sum_{i=0}^{n-1} \bbE\left[ \mathrm{e}^{-\theta T_i} \right],
\end{eqnarray}
since $ \int_0^{\infty}  \bbE[ \mathcal{I}_i(t)] \mathrm{e}^{-\theta t} \mathrm{d}t= \mathcal{M}_\gtrv(\theta)$. Let $\chi_k(i)$ be the event that individual $i$ is in generation $k$ $(i,k =0,1,\cdots n-1)$.  Then,
\begin{equation}
\label{elaborationp23}
\sum_{i=1}^{n-1}  \bbE[{\rm e}^{-\theta T_i}] = \sum_{i=1}^{n-1}\sum_{k=1}^{n-1}\mathbb{P}(\chi_k(i))   \bbE[{\rm e}^{-\theta T_i}|\chi_k(i)]
 =  \sum_{k=1}^{n-1} \mu_k   \bbE[{\rm e}^{-\theta T_1}|\chi_k(1)],
\end{equation}
since $\bbE[{\rm e}^{-\theta T_i}|\chi_k(i)]$ is independent of $i$ and $\mu_k = \sum_{i=1}^{n-1}\mathbb{P}(\chi_k(i))$.
Hence, using~\eqref{decomposeBetaH}, and recalling that $T_0=0$,
\begin{equation}
\label{Lbetatheta}
\mathcal{L}_{\beta_H}(\theta)=\mu_G \mathcal{M}_\gtrv(\theta)\left\{1+ \sum_{k=1}^{n-1} \mu_k   \bbE[{\rm e}^{-\theta T_1}|\chi_k(1)]\right\}.
\end{equation}

Suppose that an individual, $i$ say, is in household generation $k$. Then there exists at least one path of length $k$, and no shorter path, from the initial case in the household to $i$.  Consider one such path and relabel the individuals so that the successive individuals in that path are $0,1, \cdots, k$, so our given individual now has label $k$.  Let $\hat{T}_k=\sum_{i'=0}^{k-1} t_{i',i'+1}$ and observe that, since $\hat{T}_k$ is defined using the ``length'' of a specific path and $T_k$ is the minimum length over a collection of possible paths, we have
\begin{equation}
\label{Tgbound}
T_k \le \hat{T}_k.
\end{equation}

Consider case \thp{b} of Theorem~\ref{propos}, in which $\mathcal{I}(t) = J \gtpdf(t)$ ($t \geq 0$), where $J$ is a non-negative random variable.  In this case the  augmented random graph $\tilde{\mathcal{G}}^{(n)}$ may be constructed by independently for each individual, $i$ say, first sampling $J_i$ according to $J$ and then, conditional on $J_i$, letting $N_{ii'}$ $(i' \ne i)$ be independent Poisson random variables
having mean $J_i$.  The random variable $N_{ii'}$ gives the number of infectious contacts individual $i$ makes towards individual
$i'$.  If $N_{ii'}>0 $ then the times of these contacts, relative to $i$'s time of infection are given by $\gtpdfij(1),\gtpdfij(2),\cdots,\gtpdfij(N_{ii'})$, which are mutually independent realisations of the random variable $\gtrv$.  Suppose that $i$ makes infectious contact with $i'$, so $N_{ii'}>0$.   Then $t_{ii'}=\min\left(\gtpdfij(1),\gtpdfij(2),\cdots,\gtpdfij(N_{ii'})\right)$ and, in particular,
$t_{ii'} \le \gtpdfij(1)$.  Since $\gtpdfij(m)$ $(i \ne i', m=1,2,\cdots, N_{ii'})$ are mutually independent it follows that
\begin{equation}
\label{casecThat}
 \hat{T}_k \overset{st}{\le} \sum_{l=0}^{k-1} \gtrvz{l},
\end{equation}
where $\gtrvz{0}, \gtrvz{1}, \cdots, \gtrvz{k-1}$ are independent and identically distributed copies of $\gtrv$.  Thus, if $\theta>0$, then~\eqref{Lbetatheta},~\eqref{Tgbound} and~\eqref{casecThat} imply that
\begin{equation}
\label{betaHcaseb}
\mathcal{L}_{\beta_H}(\theta)\ge \mu_G \mathcal{M}_\gtrv(\theta)\left\{1+ \sum_{k=1}^{n-1} \mu_k \mathcal{M}_\gtrv(\theta)^k\right\}=\mathcal{L}_{\beta_H}^{(0)}(\theta),
\end{equation}
where $\mathcal{L}_{\beta_H}^{(0)}$ is defined at~\eqref{rtgrowth2}, with the opposite inequality holding if $\theta<0$.

Suppose that the epidemic is growing, so $R_*>1$.  Then,~\eqref{betaHcaseb} states that, for $\theta>0$,
\[
\mathcal{L}_{\beta_H}(\theta) \ge \mathcal{L}_{\beta_H}^{(0)}(\theta),
\]
so, since $\mathcal{L}_{\beta_H}(\theta)$ and $\mathcal{L}_{\beta_H}^{(0)}(\theta)$ are strictly decreasing in $\theta$ and $\mathcal{L}_{\beta_H}(0)= \mathcal{L}_{\beta_H}^{(0)}(0)=R_*$ it follows that $0<r^{(0)}\le r$, whence, since $\mathcal{M}_\gtrv(\theta)$ is strictly decreasing in $\theta$,~\eqref{RrsfromM} implies that $R_r \ge R_0>1$.  A similar argument shows that  $R_r \le R_0<1$ in a declining epidemic. 
%Note that the inequality in \eqref{betaHcaseb} is strict if $\gtpdf(t)$ is a proper density function, so in that case $R_r > R_0$ in a growing epidemic and $R_r <R_0$ is a decreasing epidemic.

Turn now to case \thp{c} of Theorem~\ref{propos}, in which $\mathcal{I}(t) =f(t) \indic(T_I>t)$ ($t \geq 0$), where $f(t)$ is a deterministic function and $T_I$ a non-negative random variable.  For $m=0,1,\cdots,n-1$, let the random variable $\gtrvzt{m}$ be distributed
as the time of the first infectious contact between two given individuals (say from $i$ to $i'$), given that there is at least one such contact and that $i$ does not contact $m$ other given individuals.  Note that the condition that $i$ does not contact 0 other given individuals is necessarily satisfied, so $\gtrvzt{0}$ has the same distribution as the random variable $\gtrvt$ defined in Section~\ref{subsec:Rr}.

Observe that if we know that an individual is in household generation $k$, then there exists at least one path of length $k$, and no shorter path, from the initial case in the household to that individual. If we know one path of length $k$ and we condition on knowing all edges in the epidemic generating graph in the household, not starting at one of the individuals in that path, then we know that a contact is made from an individual to the next individual in the path and some other contacts are not made (namely contacts which would lead to paths of shorter length than $k$, e.g.~a contact from an individual to an individual more than one place further along the path). We call the latter contacts ``forbidden'' (see Figure \ref{epigraphforbfig}).

\begin{figure}[!ht] \begin{center}
\includegraphics[width=.3\textwidth]{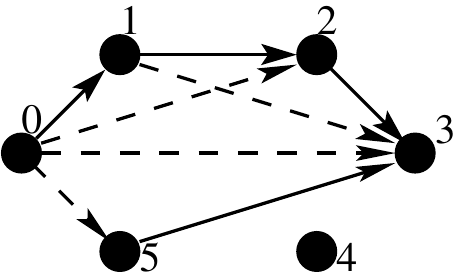}
 \caption{Epidemic graph, with relevant contacts represented by solid arrows. We consider the path from 0 to 3 (via 1 and 2). The forbidden contacts, which would make the path from 0 to 3 shorter are represented by dashed arrows. In this Figure, there are 3 forbidden contacts emanating from individual 0 and 1 from individual 1.}\label{epigraphforbfig}\end{center}
\end{figure}

Suppose that an individual, $i$ say, is in household generation $k$. Select a path of length $k$ from the initial case to individual $i$ and, as above, relabel the individuals so that the successive individuals in that path are $0,1, \cdots, k$.
 Denote the configuration, in the epidemic graph, of all edges not emanating from any node in the path \emph{plus} all those in the path itself by $\Xi$ (the solid arrows in Figure \ref{epigraphforbfig}) and let $m(i') = m(i',\Xi)$, $ (i'=0,1,\cdots,k-1)$ be the number of forbidden contacts from individual $i'$ under configuration $\Xi$. (In Figure \ref{epigraphforbfig}, $m(0) =3$, $m(1) =1$ and $m(2)=0$.)  Recall that $\hat{T}_k$ is the time it takes for individual $k$ to be infected along this path.  Then, as contacts emanating from different individuals are independent,
\begin{equation}
\label{Tghatmgf}
\mathbb{E}\left[{\rm e}^{-\theta \hat{T}_k}\right]  =  \mathbb{E}_{\Xi}\left[\prod_{i'=0}^{k-1} \mathbb{E}\left[{\rm e}^{-\theta \gtrvzt{m(i')}}|\Xi\right]\right], \qquad \theta \in (-\infty,\infty).
\end{equation}
(The notation $\mathbb{E}_{\Xi}$ denotes that the expectation is with respect to the distribution of $\Xi$.
Since the path of length $k$ is fixed, the randomness in $\Xi$ is contained in the distribution of edges 
in the epidemic graph that emanate from nodes not in the path.) 
We now show that
\begin{equation}
\label{Wktilde}
\gtrvzt{m} \overset{st}{\le} \gtrvt \quad \mbox{for all }  m=1,2,\cdots, n-1.
\end{equation}

For $m=0,1,\cdots,n-1$, let $\mathcal{D}_m$ be the event that there is at least one contact between two given individuals (say $i$ and $i'$) and $m$ other specified individuals are not contacted by $i$.  Let $\mathcal{D}$ be the event that there is at least one contact between two given individuals, so $\mathcal{D}=\mathcal{D}_0$.  Note that the probability of $\mathcal{D}_m$ depends on the infectious profile $\mathcal{I}_i$ ($=\{\mathcal{I}_i(t):t\geq 0\}$).  By the definition of conditional expectation
and noting that $\mathcal{P}(\mathcal{D}_m)=\mathbb{E}_{\mathcal{I}}[\indic(\mathcal{D}_m)]$, we have that, for $m=0,1,\cdots,n-1$,
\begin{equation}
\label{Wtildist}
\mathbb{P}(\gtrvzt{m} \leq t) = \mathbb{E}_{\mathcal{I}}\left[\indic(\mathcal{D}_m) \frac{\int_0^t {\rm e}^{-\int_0^{s}\lambda_H \mathcal{I}(x) {\rm d}x} \mathcal{I}(s) {\rm d}s}{\int_0^{\infty} {\rm e}^{-\int_0^{s}\lambda_H \mathcal{I}(x) {\rm d}x} \mathcal{I}(s) {\rm d}s}\right]/\mathbb{E}_{\mathcal{I}}[\indic(\mathcal{D}_m)] \qquad (t\geq 0).
\end{equation}
Making the substitution $u(t)=\int_0^t \mathcal{I}(x){\rm d}x$ %%PT added what follows
we obtain $$\int_0^t {\rm e}^{-\int_0^{s}\lambda_H \mathcal{I}(x) {\rm d}x} \mathcal{I}(s) {\rm d}s = \int_0^t {\rm e}^{-\lambda_H u(s)} \frac{{\rm d}u(s)}{{\rm d}s} {\rm d}s = 1-{\rm e}^{-\lambda_H u(t)} = 1-{\rm e}^{- \lambda_H\int_0^t \mathcal{I}(s) {\rm d}s}$$
and similarly $\int_0^{\infty} {\rm e}^{-\int_0^{s}\lambda_H \mathcal{I}(x) {\rm d}x} \mathcal{I}(s) ds = 1-{\rm e}^{- \lambda_H\int_0^{\infty} \mathcal{I}(s) {\rm d}s}$.
Furthermore, $\mathcal{D}_m \subseteq \mathcal{D}$ and $$\mathbb{P}(\mathcal{D}_m|\int_0^{\infty} \mathcal{I}(s) {\rm d}s) = {\rm e}^{-m \lambda_H \int_0^{\infty}\mathcal{I}(s) {\rm d}s}.$$
Combining these observations yields
that, for $m=0,1,\cdots,n-1$, \eqref{Wtildist} can be rewritten as
\begin{equation}
\label{tilWkdist}
\mathbb{P}(\gtrvzt{m} \leq t) = \frac{\mathbb{E}_{\mathcal{I}|\mathcal{D}}\left[{\rm e}^{-m \lambda_H \int_0^{\infty}\mathcal{I}(s) {\rm d}s} \frac{1-{\rm e}^{- \lambda_H\int_0^t \mathcal{I}(s) {\rm d}s}}{1-{\rm e}^{- \lambda_H\int_0^{\infty} \mathcal{I}(s) {\rm d}s}}|\mathcal{D}\right]}{\mathbb{E}_{\mathcal{I}|\mathcal{D}}[{\rm e}^{-m \lambda_H \int_0^{\infty}\mathcal{I}(s) {\rm d}s}|\mathcal{D}]} \qquad (t\geq 0).
\end{equation}
(The notation $\mathbb{E}_{\mathcal{I}|\mathcal{D}}$ denotes that the expectation is with respect to the
distribution of the infectivity profile $\mathcal{I}=\{\mathcal{I}(t):t\geq 0\}$ of an infective given that
that infective makes at least one contact with a given individual.)
The distribution function of $\gtrvt$ may be obtained by setting $m=0$ in~\eqref{tilWkdist}.  Hence,
$\gtrvzt{m}\:\overset{st}{\mathop{\le }}\,\gtrvt$ if and only if, for all $t>0$,
\begin{multline}
\label{Harrisuse2}
\mathbb{E}_{\mathcal{I}|\mathcal{D}}\left[{\rm e}^{-m \lambda_H\int_0^{\infty}\mathcal{I}(s) {\rm d}s} \frac{1-{\rm e}^{- \lambda_H\int_0^t \mathcal{I}(s) {\rm d}s}}{1-{\rm e}^{- \lambda_H\int_0^{\infty} \mathcal{I}(s) {\rm d}s}}|\mathcal{D}\right] \\
\geq \mathbb{E}_{\mathcal{I}|\mathcal{D}}\left[{\rm e}^{-m \lambda_H\int_0^{\infty}\mathcal{I}(s) {\rm d}s} |\mathcal{D}\right]\mathbb{E}_{\mathcal{I}|\mathcal{D}}\left[\frac{1-{\rm e}^{- \lambda_H\int_0^t \mathcal{I}(s) {\rm d}s}}{1-{\rm e}^{- \lambda_H\int_0^{\infty} \mathcal{I}(s) {\rm d}s}}|\mathcal{D}\right].
\end{multline}

Recall Chebychev's `other' inequality (also referred to as Harris' inequality) (Hardy~\cite{Hardy1952}, p.~168), which states
that if $f_1(x)$ and $f_2(x)$ are both increasing or both decreasing functions and $X$ is a random variable, then
\begin{equation}
 \label{Harris}
 \mathbb{E}[f_1(X)f_2(X)] \geq \mathbb{E}[f_1(X)]\mathbb{E}[f_2(X)].
\end{equation}
From the proof of this inequality it follows immediately that the inequality is strict if both $f_1(X)$ and $f_2(X)$ have strictly positive variance, which is the case if (i) both functions are strictly increasing or both functions are strictly decreasing and (ii) $\mathrm{Var}(X)>0$.
We now apply Chebychev's `other' inequality to conditional expectations.
In case \thp{c} of Theorem~\ref{propos}, for $t\in (0,\infty]$ we have $\int_0^{t} \mathcal{I}(s) {\rm d}s = \int_0^{\min(T_I,t)} f(s) {\rm d}s$ and we observe that both $$f_1(x)= {\rm e}^{-m  \lambda_H\int_0^{x}\mathcal{I}(s) {\rm d}s}$$ and
$$f_2(x) = \frac{1-{\rm e}^{-\lambda_H\int_0^{\min(x,t)} \mathcal{I}(s) {\rm d}s}}{1-{\rm e}^{-\lambda_H\int_0^{x} \mathcal{I}(s) {\rm d}s}} = \indic(x<t) + \indic(x\geq t) \frac{1-{\rm e}^{-\lambda_H\int_0^t \mathcal{I}(s) {\rm d}s}}{1-{\rm e}^{-\lambda_H\int_0^{x} \mathcal{I}(s) {\rm d}s}}$$
are decreasing in $x$. Thus by \eqref{Harrisuse2} and \eqref{Harris} we have $\gtrvzt{m}\:\overset{st}{\mathop{\le }}\,\gtrvt$.

It follows from~\eqref{Wktilde} that $\mathcal{M}_{\gtrvzt{m}}(r) \ge \mathcal{M}_{\gtrvt}(r)$ if $r>0$ and $\mathcal{M}_{\gtrvzt{m}}(r) \le \mathcal{M}_{\gtrvt}(r)$ if $r<0$. It then follows using~\eqref{Lbetatheta},~\eqref{Tgbound} and~\eqref{Tghatmgf} that, if $r>0$, then
\begin{equation}
\label{Lequation2}
\mathcal{L}_{\beta_H}(r)\ge \mu_G \mathcal{M}_\gtrv(r)\left\{1+ \sum_{k=1}^{n-1} \mu_k \mathcal{M}_{\gtrvt}(r)^k\right\}=\widetilde{\mathcal{L}}_{\beta_H}(r),
\end{equation}
where $\widetilde{\mathcal{L}}_{\beta_H}$ is defined at~\eqref{rtgrowth3}, with the opposite inequality holding if $r<0$.
Arguing as for case \thp{b} above, shows that $R_r \ge \widetilde{R}_r >1$ in a growing epidemic and $R_r \le \widetilde{R}_r <1$
in a declining epidemic.

To compare $ \widetilde{R}_r$ and $R_0$, we need to show that
\begin{equation}
\label{MWcompar}
\mathcal{M}_{\gtrvt}(\theta)\ge \mathcal{M}_\gtrv(\theta) \mbox{ if $\theta>0$} \quad \mbox{and} \quad \mathcal{M}_{\gtrvt}(\theta)\le \mathcal{M}_\gtrv(\theta) \mbox{ if $\theta<0$.}
\end{equation}
Recall from Section \ref{subsec:Rr} that $\mathcal{M}_\gtrv(\theta) = \int_0^{\infty}{\rm e}^{-\theta t} \mathbb{E}[\mathcal{I}(t)]{\rm d}t$.
From the definition of $\gtrvt$ we see that
\begin{equation}
\label{PWGlet}
\mathbb{P}( \gtrvt \leq t) = \frac{\mathbb{E}_{\mathcal{I}}\left[ \int_0^t {\rm e}^{-\int_0^{s} \lambda_H \mathcal{I}(x) {\rm d}x }\lambda_H \mathcal{I}(s) {\rm d}s \right]}{\mathbb{E}_{\mathcal{I}}\left[ \int_0^{\infty} {\rm e}^{-\int_0^{s} \lambda_H \mathcal{I}(x) {\rm d}x }\lambda_H \mathcal{I}(s) {\rm d}s \right]}, \qquad t \geq 0.
\end{equation}
After first differentiating~\eqref{PWGlet} to obtain the density of $\gtrvt$, we obtain (using the dominated convergence theorem for a fully rigorous argument) that
$$\mathcal{M}_{\gtrvt}(\theta) = \frac{ \int_0^{\infty} {\rm e}^{-\theta t} \mathbb{E}_{\mathcal{I}}\left[ {\rm e}^{-\int_0^{t} \lambda_H \mathcal{I}(x) {\rm d}x } \mathcal{I}(t)  \right] {\rm d}t}{\mathbb{E}_{\mathcal{I}}\left[ \int_0^{\infty} {\rm e}^{-\int_0^{s} \lambda_H \mathcal{I}(x) {\rm d}x } \mathcal{I}(s) {\rm d}s \right]}, \qquad \theta \in (-\infty,\infty). $$ 
Now using that $\mathcal{I}(t) = f(t) \indic(T_I>t)$ for some random variable $T_I$ gives that for all real $\theta$,
$\mathcal{M}_\gtrv(\theta) = \int_0^{\infty}{\rm e}^{-\theta t} f(t)\mathbb{P}[T_I>t]{\rm d}t$ and
\begin{eqnarray*}
\mathcal{M}_{\gtrvt}(\theta) & = & \frac{ \int_0^{\infty} {\rm e}^{-\theta t} \mathbb{E}_{T_I}\left[ {\rm e}^{-\int_0^{t} \lambda_H f(x) {\rm d}x } f(t) \indic(T_I>t)  \right] {\rm d}t}{\mathbb{E}_{T_I}\left[ \int_0^{\infty} {\rm e}^{-\int_0^{s} \lambda_H f(x) {\rm d}x } f(s) \indic(T_I>s) {\rm d}s \right]}\\
\ & = & \frac{ \int_0^{\infty} {\rm e}^{-\theta t}  {\rm e}^{-\int_0^{t} \lambda_H f(x) {\rm d}x } f(t) \mathbb{P}[T_I>t]  {\rm d}t}{\int_0^{\infty} {\rm e}^{-\int_0^{s} \lambda_H f(x) {\rm d}x } f(s) \mathbb{P}[T_I>s] {\rm d}s }.
\end{eqnarray*}
Note that $f(t)\mathbb{P}[T_I>t] = \mathbb{E}[\mathcal{I}(t)] = \gtpdf(t)$, which is the density of $\gtrv$.
Using this definition of $\gtrv$, we obtain that
$$\mathcal{M}_\gtrv(\theta) = \mathbb{E}\left[{\rm e}^{-\theta \gtrv}\right] \qquad \mbox{and} \qquad \mathcal{M}_{\gtrvt}(\theta) = \frac{\mathbb{E}\left[{\rm e}^{-\theta \gtrv}{\rm e}^{-\int_0^{\gtrv} \lambda_H f(x) {\rm d}x}\right]}{\mathbb{E}\left[{\rm e}^{-\int_0^{\gtrv} \lambda_H f(x) {\rm d}x}\right]}. $$
For $\theta >0$, the inequality $\mathcal{M}_{\gtrvt}(\theta)\geq \mathcal{M}_\gtrv(\theta)$ now follows from applying Chebychev's `other' inequality (inequality \eqref{Harris}) to the functions $f_1(x) =  {\rm e}^{-\theta x}$ and $f_2(x) = {\rm e}^{-\int_0^{x} \lambda_H f(s) {\rm d}s}$. The result for $\theta < 0$ is proved in the same way, using $f_1(x) =  -{\rm e}^{-\theta x}$ instead. It follows using \eqref{rtgrowth2} and \eqref{rtgrowth3} that $\widetilde{\mathcal{L}}_{\beta_H}(\theta) \geq \mathcal{L}^{(0)}_{\beta_H}(\theta)$ if $\theta >0$ and
$\widetilde{\mathcal{L}}_{\beta_H}(\theta) \leq \mathcal{L}^{(0)}_{\beta_H}(\theta)$ if $\theta <0$, which implies that $\widetilde{R}_r \geq R_0$ in a growing epidemic and $\widetilde{R}_r \leq R_0$ in a declining epidemic.

Note that in case \thp{b} of Theorem~\ref{propos} (i.e.~when $\mathcal{I}(t) =J \gtpdf(t)$, with $J$ random and $\gtpdf(t)$ $(t \ge 0)$ non-random), $\gtrvzt{m}\:\overset{st}{\mathop{\le }}\,\gtrvt$ does not necessarily hold, because conditioning on the absence of some edges leads to relatively low realisations of $J$, which implies fewer infectious contacts even if there is at least one contact between given individuals, which in turn implies later first contacts. So, here we cannot conclude that, for example, $R_r \ge \widetilde{R}_r$ in a growing epidemic.

Note that in the variable household size setting \eqref{Lbetatheta} becomes
\begin{equation}
\label{Lbetathetavar}
\mathcal{L}_{\beta_H}(\theta)=\mu_G \mathcal{M}_\gtrv(\theta)\left\{1+ \sum_{n=2}^{n_H} \pi_n \sum_{k=1}^{n-1} \mu_k^{(n)}   \bbE[{\rm e}^{-\theta T_1^{(n)}}|\chi_k(1)]\right\},
\end{equation}
where $T_1^{(n)}$ is the time that individual $1$ is infected in a local epidemic in a household having size $n$. The expectations are conditioned on individual $1$ being in generation $k$.
The arguments leading to \eqref{casecThat} show that
\begin{equation}
\bbE[{\rm e}^{-\theta T_1^{(n)}}|\chi_k(1)]  \geq   [\mathcal{M}_\gtrv(\theta)]^k \qquad \mbox{for $\theta >0$,}
\end{equation}
and
\begin{equation}
\bbE[{\rm e}^{-\theta T_1^{(n)}}|\chi_k(1)]  \leq   [\mathcal{M}_\gtrv(\theta)]^k \qquad \mbox{for $\theta <0$.}
\end{equation}
It then follows using \eqref{sizebiasgen} that, if $\theta >0$, inequality \eqref{betaHcaseb} holds, with the opposite inequality holding if $\theta<0$. Case \thp{b} of Theorem~\ref{propos} now follows as before.

Turning to case \thp{c}, recall that if $\lambda_H^{(n)}$ varies with $n$ then so does the distribution of $\gtrvnt{n}$. In that case recall that $\gtrvnt{n}$ is a random variable distributed as $\gtrvt$ for a size-$n$ household and, for $m=0,1,\cdots n\!-\!1$, let $\gtrvznt{m}{n}$ be a random variable distributed as $\gtrvzt{m}$, again for a size-$n$ household. (Note that, cf.~\eqref{Lbetathetavar}, the distribution of $\gtrv$ is determined purely by the distribution of the infectivity profile $\{\mathcal{I}(t): t\geq 0\}$ and hence it is independent of household size $n$.)
Arguing as before shows that $\gtrvznt{m}{n} \overset{st}{\le}  \gtrv$ $(n=2,3,\cdots,n_H;m=0,1,\cdots, n-1)$ and using obvious extensions of \eqref{Tgbound} and \eqref{Tghatmgf} to the variable household size setting it follows from \eqref{Lbetathetavar} that, if $r>0$,  then \eqref{Lequation2} holds with $\widetilde{\mathcal{L}}_{\beta_H}(r)$ defined at \eqref{rtgrowth5}, and if $r<0$, then $\mathcal{L}_{\beta_H}(r)\leq \widetilde{\mathcal{L}}_{\beta_H}(r)$. As in the case when all households have the same size, this implies that $R_r \geq \widetilde{R}_r >1$ in a growing epidemic and $R_r \leq \widetilde{R}_r <1$ in a declining epidemic.
Finally the previous argument shows that for $n=2,3,\cdots n_H$, $\mathcal{M}_{\gtrvnt{n}}(\theta)\geq \mathcal{M}_{\gtrv}(\theta)$ if $\theta>0$ and $\mathcal{M}_{\gtrvnt{n}}(\theta)\leq \mathcal{M}_{\gtrv}(\theta)$ if $\theta<0$.
The comparison between $\widetilde{R}_r$ and $R_0$ then follows on noting that $\mathcal{L}_{\beta_H}^{(0)}(r)$ is obtained by replacing $\mathcal{M}_{\gtrvnt{n}}(r)$ by $\mathcal{M}_{\gtrv}(r)$ in equation \eqref{rtgrowth5}.

\end{proof}

\begin{rmk}
\label{remarkstrict}
Note that if $\gtpdf(t)$ is a proper density function, i.e.~$\gtpdf(t) < \infty$ for all $t \geq 0$, then the inequality in \eqref{betaHcaseb} is strict, so in that case the inequalities in Theorem~\ref{propos}\thp{b} are strict.
Note that for the infectivity profile considered in Theorem~\ref{propos}\thp{c}, the random variable $\gtrv$ necessarily has a proper density function, so the application of Chebychev's other inequality leads to strict inequalities in~\eqref{MWcompar}.  Thus, in this case, we have $\widetilde{R}_r>R_0$ in a growing epidemic and $\widetilde{R}_r<R_0$ in a declining epidemic. 
%Similarly the inequalities  in \eqref{MWcompar} are strict because the random variable \change{$Z$}{$\gtrv$?} in the following argument has a proper density function and the application of Chebychev's other inequality leads to a strict inequality\note{This paragraph needs to be changed given the old $Z$ is in fact $\gtrv$}. Therefore, for the infection profile considered in in Theorem \ref{propos}\thp{c}, we have $\widetilde{R}_r>R_0$ in a growing epidemic and $ \widetilde{R}_r<R_0$ in a declining epidemic.

We have already observed that $\mathcal{L}_{\beta_H}(\theta) = \widetilde{\mathcal{L}}_{\beta_H}(\theta)$ and therefore $\widetilde{R}_r=R_r$ for $n_H \leq 2$, since there is only one possible path from the initial infective to the other individual in a household of size 2. In households of size 3, it depends on the distribution of $T_I$, whether the (real) time needed for the epidemic to traverse an infection path of length 2 might be shorter than the time needed to traverse an infection path of length 1. If this is impossible, then $\widetilde{R}_r=R_r$ for $n_H =3$, otherwise $\widetilde{R}_r \neq R_r$ unless they are both one.
In households of size 4 or larger, it is possible that there are two disjoint paths of length 2 from the initial infective in the household to an individual in generation 2, so in this case, $\mathcal{L}_{\beta_H}(\theta) = \widetilde{\mathcal{L}}_{\beta_H}(\theta)$, for all $\theta \in (-\infty,\infty)$, if and only if $\mathrm{Var}(\gtrvt)=0$. Thus, for $n_H \ge 4$, unless they are both one, $\widetilde{R}_r = R_r$ if and only if
$\mathrm{Var}(\gtrvt)=0$.
\end{rmk}

%Note that in the variable household size setting, if $\lambda^{(n)}_L$ varies with $n$ then the distribution of $\widetilde{W}$ also varies with $n$. In that case let $\widetilde{W}^{(n)}$ be a random variable distributed as $\widetilde{W}$ and $\widetilde{W}_k^{(n)}$ be a random variable distributed as $\widetilde{W}_k$ when the household size is $n$. Furthermore, define $W^{(n)}$ accordingly similar to $W$.

%We note that for $\theta >0$ \eqref{betaHcaseb},  \eqref{Lequation2} and $M_{\widetilde{W}^{(n)}}(\theta)\ge M_{W^{(n)}}(\theta)$  still hold using the household size specific random variables.
%The theorem follows in a straightforward manner from the fixed household size case above, since, if the epidemic goes (stochastically) faster through every household in the population, it also goes faster through the whole population. Which implies that if $r \geq r^{(0)}$ for every fixed household size, then  $R_r \geq R_0$ in a population of variable household sizes.

\subsection{Proof of Theorem~\ref{Theorem 5.2}}
\label{subsec:HWcompproof}

Where relevant, we now use the notation $\overset{n}{\geq}$ to denote that the inequality is strict if and only if $\max(n_H,n_W)>n$, so the statement of Theorem~\ref{Theorem 5.2} is as follows.

\begin{duplicate}
\label{Theorem 5.2a}
\begin{enumerate}
\item[(a)]
$R_*=1 \iff R_H =1 \iff R_I=1 \iff R_0 = 1 \Longrightarrow R_V = 1 $.

\item[(b)]
In a growing epidemic,
\[
R_* > R_H > R_I \overset{2}{\geq} R_V \overset{3}{\geq} R_0 > 1,
\]
and in a declining epidemic
\[
R_* < R_H < R_I \overset{2}{\leq} R_0 < 1.
\]

\item[(c)] Theorem~\ref{propos} holds also for the households-workplaces model.
\end{enumerate}

\end{duplicate}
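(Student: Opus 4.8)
The plan is to follow the architecture of the proof of Theorem~\ref{Hcomp} for parts \thp{a} and \thp{b}, and of Theorem~\ref{propos} (Section~\ref{subsec:HRrproof}) for part \thp{c}, using throughout that $R_H$, $R_I$ and $R_0$ are the unique positive roots of the strictly increasing functions $g_H$, $g_I^{(HW)}$ and $g^{(HW)}_0$ at \eqref{gh}, \eqref{ghwi} and \eqref{gnHnW}. For \thp{a} I would first record the common value at $\lambda=1$. A direct substitution, using the derivation of \eqref{gnHnW} with $A=\mu_G$, $B=\mu_W$, $C=\mu_H$, gives
\[
g_H(1)=g_I^{(HW)}(1)=g^{(HW)}_0(1)=1-\b{\mu_G(1+\mu_H)(1+\mu_W)+\mu_H\mu_W}.
\]
Writing $\Delta=\mu_G(1+\mu_H)(1+\mu_W)+\mu_H\mu_W-1$, one checks from \eqref{hwrstar} that $\sign(\Delta)=\sign(R_*-1)$ regardless of whether $\mu_H\mu_W<1$ (with $R_*=\infty$ read as $>1$). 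Since each characteristic function is strictly increasing with a unique positive root, the common sign at $1$ forces $R_*=1\iff R_H=1\iff R_I=1\iff R_0=1$, and $R_*=1\Rightarrow R_V=1$ is immediate from \eqref{rv}, the critical coverage then being $0$.

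For \thp{b} I would dispatch the five comparisons in turn. \emph{$R_*$ versus $R_H$}: if $\mu_H\mu_W\ge1$ then $R_*=\infty$ and the claim is trivial, so assume $\mu_H\mu_W<1$ and evaluate $g_H$ at $R_*$; the algebra collapses to $g_H(R_*)=\mu_W\Delta/\b{\mu_G(1+\mu_W)^2}$, so $\sign(g_H(R_*))=\sign(\Delta)=\sign(R_*-1)$, giving $R_*>R_H$ when growing and $R_*<R_H$ when declining, strictly since $\mu_W>0$. \emph{$R_H$ versus $R_I$}: the difference factorises as
\[
g_I^{(HW)}(\lambda)-g_H(\lambda)=\frac{\mu_H}{\lambda^3}\,(\lambda-1)\b{(\mu_G+\mu_W)\lambda+\mu_G\mu_W},
\]
whose sign equals that of $\lambda-1$ for $\lambda>0$, whence $R_H>R_I$ (growing) and $R_H<R_I$ (declining), strict since $\mu_H>0$. \emph{$R_I\overset{2}{\ge}R_V$ and $R_V\overset{3}{\ge}R_0$}: these copy the vaccination arguments of Section~\ref{subsec:Hcompproof}, applied to the single-household and single-workplace epidemic graphs separately; vaccinating a fraction $p$ sends $\mu_G\mapsto(1-p)\mu_G$ and, by the chain-survival argument, $\mu_H(p)\le(1-p)\mu_H$, $\mu_W(p)\le(1-p)\mu_W$ and $\mu_\ell^H(p)\ge(1-p)^\ell\mu_\ell^H$, $\mu_{\ell'}^W(p)\ge(1-p)^{\ell'}\mu_{\ell'}^W$, hence $c_k(p)\ge(1-p)^{k+1}c_k$. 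Substituting $p=1-R_I^{-1}$ into $g_{I,p}^{(HW)}$ and $p=1-R_0^{-1}$ into $g_{0,p}^{(HW)}$, and invoking part \thp{a} for the vaccinated model to transfer the threshold from $R_I(p)$ and $R_0(p)$ to $R_*(p)$, yields $p_C\overset{2}{\le}1-R_I^{-1}$ and $p_C\overset{3}{\ge}1-R_0^{-1}$, i.e.\ $R_V\overset{2}{\le}R_I$ and $R_V\overset{3}{\ge}R_0$. \emph{$R_I\overset{2}{\le}R_0$ (declining; the growing case being already covered)}: Abel summation of $g_I^{(HW)}(\lambda)-g^{(HW)}_0(\lambda)=\sum_{k\ge0}(c_k-b_k^I)\lambda^{-(k+1)}$ (where $b_k^I$ are the numerators in \eqref{ghwi}) shows its sign is that of $1-\lambda$ provided all partial sums $\sum_{j\le k}(c_j-b_j^I)\le0$; since $b_k^I=0$ for $k\ge3$ and $\sum_k b_k^I=\sum_k c_k$, the only nontrivial case is $k=1$, i.e.\ $c_1\le\mu_G\mu_H+\mu_G\mu_W+\mu_H\mu_W$, which holds because $\mu_1^H\le\mu_H$ and $\mu_1^W\le\mu_W$.

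For \thp{c} I would transplant the proof of Theorem~\ref{propos}. The real-time growth rate solves \eqref{HWrdefeq}, whose left-hand side
\[
\Phi(\theta)=\mu_G\mathcal{M}_{\gtrv}(\theta)\pp{\mathcal{L}_{\xi_H}(\theta)+1}\pp{\mathcal{L}_{\xi_W}(\theta)+1}+\mathcal{L}_{\xi_H}(\theta)\mathcal{L}_{\xi_W}(\theta)
\]
is strictly decreasing in $\theta$ and, crucially, strictly increasing in each of $\mathcal{L}_{\xi_H}(\theta)$ and $\mathcal{L}_{\xi_W}(\theta)$ as all coefficients are non-negative. At $\theta=0$ every exact and approximate version coincides, with $\Phi(0)=\mu_G(1+\mu_H)(1+\mu_W)+\mu_H\mu_W$ and $\sign(\Phi(0)-1)=\sign(R_*-1)$, which delivers part \thp{a} of Theorem~\ref{propos} for this model exactly as at \eqref{Lbetaat0}--\eqref{RrsfromM}. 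Because a single-household and a single-workplace epidemic are each structurally identical to the single-household epidemic of Section~\ref{subsec:HRrproof}, the bounds obtained there transfer verbatim to each local unit: in case \thp{b} one gets $\mathcal{L}_{\xi_H}(\theta)\ge\mathcal{L}^{(0)}_{\xi_H}(\theta)$ and $\mathcal{L}_{\xi_W}(\theta)\ge\mathcal{L}^{(0)}_{\xi_W}(\theta)$ for $\theta>0$ (reversed for $\theta<0$), while in case \thp{c} the refined chains $\mathcal{L}_{\xi_H}(\theta)\ge\widetilde{\mathcal{L}}_{\xi_H}(\theta)\ge\mathcal{L}^{(0)}_{\xi_H}(\theta)$ and its $W$-analogue follow from \eqref{Wktilde}, \eqref{MWcompar} and Harris' inequality. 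Feeding these into the monotone, decreasing $\Phi$ orders the roots as $r\ge\tilde r\ge r^{(0)}$ in a growing epidemic (reversed when declining), and since $\mathcal{M}_{\gtrv}$ is decreasing, $R=\mathcal{M}_{\gtrv}(\cdot)^{-1}$ gives $R_r\ge\widetilde{R}_r\ge R_0$ in case \thp{c} and $R_r\ge R_0$ in case \thp{b} (all reversed when declining). Recalling that substituting $\mathcal{L}^{(0)}_{\xi_H}$, $\mathcal{L}^{(0)}_{\xi_W}$ into \eqref{HWrdefeq} returns $R_0$ (see \eqref{LxiH0}--\eqref{LxiW0}) closes the comparison with $R_0$, and an independent latent period multiplies every transform by the common factor $\mathcal{M}_{T_E}(\theta)$, leaving all orderings intact.

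The substitutions of \thp{a}, the two polynomial sign computations of \thp{b}, the vaccination arguments and the Abel comparison are all close transcriptions of Sections~\ref{subsec:Hcompproof} and~\ref{subsec:HRrproof}, so I expect the main obstacle to be part \thp{c}. Unlike the households model there is no single scalar household-infectivity Laplace transform to compare; instead one must verify that the two \emph{separate} orderings of $\mathcal{L}_{\xi_H}$ and $\mathcal{L}_{\xi_W}$ propagate correctly through the coupled two-type equation \eqref{HWrdefeq} to an ordering of the Malthusian parameter, keeping every inequality's direction consistent as $\sign(\theta)$ changes and tracking strictness (which, as in Remark~\ref{remarkstrict}, needs $\gtrv$ to possess a proper density and $\max(n_H,n_W)\ge4$ for $R_r\neq\widetilde{R}_r$).
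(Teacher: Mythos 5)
Your proposal is correct and follows essentially the same architecture as the paper's proof: evaluating $g_H$, $g_I^{(HW)}$ and $g_0^{(HW)}$ at $\lambda=1$ for part \thp{a}; the sign of $g_H(R_*)$ and the factorisation of $g_I^{(HW)}-g_H$ for the first two comparisons in \thp{b}; the two vaccination arguments transplanted from the proof of Theorem~\ref{Hcomp}; and, for \thp{c}, pushing the single-household Laplace-transform bounds of Section~\ref{subsec:HRrproof} separately through $\mathcal{L}_{\xi_H}$ and $\mathcal{L}_{\xi_W}$ and using the monotonicity of the left-hand side of \eqref{HWrdefeq} in each of them. The one place you genuinely deviate is the comparison of $R_I$ and $R_0$ in a declining epidemic: the paper splits $c_k=c_k^{(1)}+c_k^{(2)}$ according to whether a term contributes to $\mu_G\mu_H+\mu_G\mu_W+\mu_H\mu_W$ or to $\mu_G\mu_H\mu_W$ and bounds the two groups by $R_I^{-2}$ and $R_I^{-3}$ respectively, whereas you apply Abel summation to $\sum_k(c_k-b_k^I)\lambda^{-(k+1)}$; your route is sound because the partial sums vanish at $k=0$, are $\le0$ at $k=1$ since $\mu_1^H\le\mu_H$ and $\mu_1^W\le\mu_W$, and are automatically $\le0$ for $k\ge2$ because $\sum_j b_j^I=\sum_j c_j$ with all $c_j$ non-negative (strictness coming from $c_3>0$ when $\max(n_H,n_W)>2$). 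Incidentally, your evaluation $g_H(R_*)=\mu_W\bigl(f_{HW}-1\bigr)/\bigl(\mu_G(1+\mu_W)^2\bigr)$ appears to be the correct one — the prefactor $\mu_H/\bigl(\mu_G(1+\mu_H)^2\bigr)$ printed in the paper looks like an $H\leftrightarrow W$ slip — though only the sign is used in either version, so nothing hinges on it.
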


\begin{proof}
We first prove \thp{a}. Let
\begin{equation}
\label{hwthreshpar}
f_{HW}(\mu_G,\mu_H,\mu_W)=\mu_G(1+\mu_H)(1+\mu_W)+\mu_H\mu_W,
\end{equation}
and note from~\eqref{hwrstar} that $R_*=1$ if and only if $f_{HW}(\mu_G,\mu_H,\mu_W)=1$.  Observe from~\eqref{gh} and ~\eqref{ghwi} that $g_H(1)=g_I^{(HW)}(1)=1-f_{HW}(\mu_G,\mu_H,\mu_W)$, so $R_*=1 \iff R_H = 1 \iff R_I=1$.

Turning to $R_0$, note that
\[
\mu_H\mu_W=\left(\sum_{\ell=1}^{n_H-1}\mu_\ell^H\right)\left( \sum_{{\ell'}=1}^{n_W-1}\mu_{\ell'}^W\right)
=\sum_{k=1}^{n_H+n_W-3} 
\left(\sum_{\ell=\max(1,k-n_W+2)}^{\min(k,n_H-1)}\mu_\ell^H\mu_{k+1-\ell}^W\right)
\]
and
\begin{eqnarray*}
(1+\mu_H)(1+\mu_W)&=&\left(\sum_{\ell=0}^{n_H-1}\mu_\ell^H\right) \left(\sum_{{\ell'}=0}^{n_W-1}\mu_{\ell'}^W\right)\\
&=&\sum_{k=0}^{n_H+n_W-2}
\left(\sum_{\ell=\max(0,k-n_W+1)}^{\min(k,n_H-1)}\mu_\ell^H\mu_{k-\ell}^W\right).
\end{eqnarray*}
%\[
%\mu_H\mu_W=\sum_{\ell=1}^{n_H-1}\mu_\ell^H \sum_{{\ell'}=1}^{n_W-1}\mu_{\ell'}^W=\sum_{k=0}^{n_H+n_W-2}\sum_{(\ell,{\ell'})\in \mathcal{B}(k)}\mu_\ell^H\mu_{\ell'}^W,
%\]
%and
%\[
%(1+\mu_H)(1+\mu_W)=\sum_{\ell=0}^{n_H-1}\mu_\ell^H \sum_{{\ell'}=0}^{n_W-1}\mu_{\ell'}^W=\sum_{k=0}^{n_H+n_W-2}\sum_{(\ell,{\ell'})\in \mathcal{A}(k)}\mu_\ell^H\mu_{\ell'}^W.
%\]
Thus
\begin{equation}
\label{fHWc}
f_{HW}(\mu_G,\mu_H,\mu_W)=\sum_{k=0}^{n_H+n_W-2}c_k,
\end{equation}
where $c_k$ is defined in~\eqref{eq:coefficientsHW} ($c_0=\mu_G$) and, using~\eqref{gnHnW}, $g^{(HW)}_0 (1)=1-f_{HW}(\mu_G,\mu_H,\mu_W)$.  Hence, $R_0=1 \iff R_*=1$.
By definition, $R_V=1$ if $R_*=1$.

To prove \thp{b}, we first note that~\eqref{hwrstar} and~\eqref{hwthreshpar} imply that
\begin{equation}
\label{hwthreshcond}
\sign(R_*-1)=\sign(f_{HW}(\mu_G,\mu_H,\mu_W)-1).
\end{equation}
Thus, since the functions $g_H, g_I^{(HW)}$ and $g_0^{(HW)}$ are all strictly increasing on $(0,\infty)$, it follows that the reproduction numbers $R_*,R_H, R_I$ and $R_0$ are all strictly greater than 1 in a growing epidemic and all strictly smaller than 1 in a declining
epidemic. We consider now each of the comparisons in turn.

\begin{enumerate}
\item[(i)]
\textit{$R_*$ and $R_H$.}

Suppose that $R_*>1$.
Clearly, $R_*>R_H$ if $R_*=\infty$, so suppose that $R_*<\infty$.
An elementary calculation shows that
\[
g_H(R_*)=\frac{\mu_H}{\mu_G(1+\mu_H)^2}\left(f_{HW}(\mu_G,\mu_H,\mu_W)-1\right),
\]
whence, using (\ref{hwthreshcond}), $g_H(R_*)>0$.  It follows that $R_H<R_*$, since $g_H$ is strictly increasing on $(0,\infty)$ and $g_H(R_H)=0$.  A similar argument shows that $R_H>R_*$ if $R_*<1$.
\item[(ii)]
\textit{$R_H$ and $R_I$}

Elementary algebra gives
\[
g_I^{(HW)}(\lambda)-g_H(\lambda)=(\lambda-1)\left[\frac{\mu_H(\mu_G+\mu_W)}{\lambda^2}
+\frac{\mu_G\mu_H\mu_W}{\lambda^3}\right],
\]
so, for $\lambda>0$,
\begin{equation}
\label{signgI-gH}
\sign\left(g_I^{(HW)}(\lambda)-g_H(\lambda)\right)=\sign(\lambda-1).
\end{equation}
Recall that $R_H$ and $R_I$ are the unique roots in $(0,\infty)$ of $g_H$ and
$g_I^{(HW)}$, respectively.  Suppose that $R_H>1$.  Then, since $g_H(R_H)=0$, (\ref{signgI-gH}) implies that $g_I^{(HW)}(R_H)>0$,
whence $R_I<R_H$, since $g_I^{(HW)}(R_I)=0$ and $g_I^{(HW)}$ is increasing on $(0,\infty)$.  A similar argument shows that $R_I>R_H$ if $R_H<1$.

\item[(iii)]
\textit{$R_I$ and $R_V$}

Suppose that $R_I\!>\!1$ and a fraction $p$ of individuals is vaccinated with a perfect vaccine.  Then, as in the households model, $\mu_G, \mu_H$ and $\mu_W$ are reduced to $\mu_G(p), \mu_H(p)$ and $\mu_W(p)$, respectively, where $$\mu_G(p)=(1-p)\mu_G, \mu_H (p) \overset{2}{\leq} (1-p) \mu_H \qquad \mbox{and}
\qquad \mu_W (p) \overset{2}{\leq} (1-p) \mu_W,$$ and $R_I$ is reduced to $R_I(p)$, which is given by the unique solution of $g_{I,p}^{(HW)}(\lambda)=0$ in $(0,\infty)$, where
\begin{align*}
g_{I,p}^{(HW)}(\lambda)=&1-\frac{\mu_G(p)}{\lambda}-\frac{\mu_G(p)\mu_H(p)+\mu_G(p)\mu_W(p)+\mu_H(p)\mu_W(p)}{\lambda^2}\\
&-\frac{\mu_G(p)\mu_H(p)\mu_W(p)}{\lambda^3}.
\end{align*}
Suppose that $p>0$.  Then, the above inequalities imply that
\begin{align*}
g_{I,p}^{(HW)}((1-p)R_I)
&\overset{2}{\ge} 1-\frac{\mu_G}{R_I}-\frac{\mu_G\mu_H+\mu_G\mu_W+\mu_H\mu_W}{R_I^2}
-\frac{\mu_G\mu_H\mu_W}{R_I^3}\\
&=0,
\end{align*}
since $g_I^{(HW)}(R_I)=0$.  Hence, since $g_{I,p}^{(HW)}(R_I(p))=0$ and $g_{I,p}^{(HW)}$ is strictly increasing on $(0,\infty)$, $R_I(p) \overset{2}{\le} (1-p)R_I$.  In particular, $R_I(1-R_I^{-1})  \overset{2}{\le} 1$, whence $p_C \overset{2}{\le} 1-R_I^{-1}$ and, using (\ref{rv}), $R_V \overset{2}{\le} R_I$.

\item[(iv)]
\textit{$R_V$ and $R_0$}

Recall that $R_0$ is given by the unique root in $(0,\infty)$ of $g^{(HW)}_0$ defined at~\eqref{gnHnW}.  Suppose that $R_0>1$ and a fraction $p$ of the population is vaccinated with a perfect vaccine.  Then the post-vaccination basic reproduction number, $R_0(p)$ say, is given by the unique root in $(0,\infty)$ of  the function $g^{(HW)}_{0,p}$ defined by
\begin{equation*}
g^{(HW)}_{0,p} (\lambda)=1-\sum_{k=0}^{n_H+n_W-2} \frac{c_k(p)}{\lambda^{k}},
\end{equation*}
where $c_0(p)=\mu_G(p)$ and, for $k=1,2,\cdots,n_H+n_W-2$,  
\begin{align*}
c_k(p)=\mu_G(p)&\left(\sum_{\ell=\max(0,k-n_W+1)}^{\min(k,n_H-1)}\mu_\ell^H(p)\mu_{k-\ell}^W(p)\right)\\
&+\sum_{\ell=\max(1,k-n_W+2)}^{\min(k,n_H-1)} \mu_\ell^H(p)\mu_{k+1-\ell}^W(p),
\end{align*}
%\[
%c_k(p)=\mu_G(p)\sum_{(\ell,{\ell'})\in \mathcal{A}(k)}\mu_\ell^H(p)\mu_{\ell'}^W(p) +\sum_{(\ell,{\ell'})\in \mathcal{B}(k)}\mu_\ell^H(p)\mu_{\ell'}^W(p),
%\]
$\mu_G(p)=(1-p)\mu_G$, and, for $\ell=0,1,\cdots, \mu_\ell^H(p)$ (respectively $\mu_\ell^W(p)$) is the post-vaccination mean size of the $\ell$th generation in a typical single-household (respectively single-workplace) epidemic with one (unvaccinated) initial infective; the second sum in the expression for $c_k(p)$ is zero when $k=n_H+n_W-2$.

As in the proof of Theorem~\ref{Hcomp}, $$\mu_\ell^H(p){\ge} (1-p)^\ell
\mu_\ell^H\mbox{ }(\ell=0,1,\cdots,n_H-1)\mbox{ and }\mu_{\ell'}^W(p) {\ge} (1-p)^{\ell'} \mu_{\ell'}^W \mbox{ }({\ell'}=0,1,\cdots,n_W-1),$$ whence
$c_k(p) {\ge} (1-p)^{k+1} c_k$ $(k=1,2,\cdots,n_H+n_W-2)$.  Moreover, these inequalities are all equalities if $\max(n_H,n_W) \le 3$, otherwise at least one of them is strict.  Arguing exactly as in the proof of Theorem~\ref{Hcomp} shows that $R_V \overset{3}{\ge} R_0$.

\item[(v)]
\textit{$R_I$ and $R_0$}

We need to consider only a declining epidemic, since for a growing epidemic comparison of $R_I$ and $R_0$ follows from (iii) and (iv) above.  It is convenient to express~\eqref{hwthreshpar} and~\eqref{fHWc} as
\begin{equation}
\label{fHWc1}
\mu_G+\mu_G\mu_H+\mu_G\mu_W+\mu_G\mu_H\mu_W+\mu_H\mu_W=\sum_{k=0}^{n_H+n_W-2}c_k.
\end{equation}
Note that $c_0=\mu_G$ and that, in (\ref{fHWc1}), the contributions to $\mu_G\mu_H\mu_W$ come
from elements of the sums in $c_2,c_3,\cdots,c_{n_H+n_W-2}$ (and not $c_1$).  Thus, we may write
\[
\mu_G\mu_H+\mu_G\mu_W+\mu_H\mu_W=\sum_{k=1}^{n_H+n_W-2}c_k^{(1)} \quad \text{and}\quad
\mu_G\mu_H\mu_W=\sum_{k=2}^{n_H+n_W-2}c_k^{(2)},
\]
where $c_1^{(1)}=c_1$ and, for k=$2,3,\cdots,n_H+n_W-2$, $c_k=c_k^{(1)}+c_k^{(2)}$ with both
$c_k^{(1)}$ and $c_k^{(2)}$ being positive.  Now, from (\ref{gnHnW}),
\[
g_0^{(HW)}(\lambda)=1-\frac{\mu_G}{\lambda}-\sum_{k=1}^{n_H+n_W-2}\frac{c_k^{(1)}}{\lambda^{k+1}}
-\sum_{k=2}^{n_H+n_W-2}\frac{c_k^{(2)}}{\lambda^{k+1}}.
\]

Suppose that $R_I<1$.  Then,
\begin{align}
\label{g0hw<0}
g_0^{(HW)}(R_I)&=1-\frac{\mu_G}{R_I}-\sum_{k=1}^{n_H+n_W-2}\frac{c_k^{(1)}}{R_I^{k+1}}
-\sum_{k=2}^{n_H+n_W-2}\frac{c_k^{(2)}}{R_I^{k+1}}\nonumber\\
&\le 1-\frac{\mu_G}{R_I}-\frac{1}{R_I^2}\sum_{k=1}^{n_H+n_W-2}c_k^{(1)}
-\frac{1}{R_I^3}\sum_{k=2}^{n_H+n_W-2}c_k^{(2)}\\
&=1-\frac{\mu_G}{R_I}-\frac{\mu_G\mu_H+\mu_G\mu_W+\mu_H\mu_W}{R_I^2}
-\frac{\mu_G\mu_H\mu_W}{R_I^3}\nonumber\\
&=g_I^{(HW)}(R_I)\nonumber\\
&=0\nonumber,
\end{align}
by the definition of $R_I$.  Hence, $R_0 \ge R_I$, since $g_0^{(HW)}(R_0)=0$ and $g_0^{(HW)}$ is increasing on $(0,\infty)$. If $n_H=n_W=2$ then $c_1=\mu_G\mu_H+\mu_G\mu_W+\mu_H\mu_W$ and $c_2= \mu_G\mu_H\mu_W$, whence $g_0^{(HW)}=g_I^{(HW)}$ and $R_0=R_I$.  If $n_H>2$ or $n_W>2$ then it is readily seen using~\eqref{eq:coefficientsHW} that $c_3^{(2)}>0$, as $n_H+n_W-2 \ge 3$, 
%as the set $\mathcal{A(3)}$ is non-empty, 
which implies that the inequality (\ref{g0hw<0}) is strict, whence $R_0 \overset{2}{\ge} R_I$.
\end{enumerate}

Finally, we prove \thp{c}.  For $\theta \in (-\infty,\infty)$, let
\begin{equation*}
%\label{FHWtheta}
F_{HW}(\theta)=\mu_G \mathcal{M}_\gtrv(\theta) (\mathcal{L}_{\xi_H}(\theta)+1)(\mathcal{L}_{\xi_W}(\theta)+1)+\mathcal{L}_{\xi_H}(\theta)\mathcal{L}_{\xi_W}(\theta),
\end{equation*}
and define $\widetilde{F}_{HW}(\theta)$ and $F_{HW}^{(0)}(\theta)$ similarly, using $\widetilde{\mathcal{L}}_{\xi_H}(\theta)$ and $\widetilde{\mathcal{L}}_{\xi_W}(\theta)$ (recall~\eqref{wtildeHapprox} and~\eqref{wtildeWapprox}) for
$\widetilde{F}_{HW}(\theta)$, and $\mathcal{L}_{\xi_H}^{(0)}(\theta)$ and $\mathcal{L}_{\xi_W}^{(0)}(\theta)$ (recall~\eqref{LxiH0} and~\eqref{LxiW0}) for $F_{HW}^{(0)}(\theta)$.
Then, recalling~\eqref{HWrdefeq}, the real-time growth rate $r$, and its approximations $\tilde{r}$ and $r^{(0)}$ under the two approximate models, are given by
the unique real solutions of
\begin{equation}
\label{FHWroot}
F_{HW}(r)=1, \quad \widetilde{F}_{HW}(\tilde{r})=1 \quad \mbox{and}\quad F_{HW}^{(0)}(r^{(0)})=1.
\end{equation}
Note that $\mathcal{M}_\gtrv(0)\!=\!1$, $\mathcal{L}_{\xi_H}(0)\!=\!\widetilde{\mathcal{L}}_{\xi_H}(0)\!=\! \mathcal{L}_{\xi_H}^{(0)}(0)\!=\!\mu_H$ and $\mathcal{L}_{\xi_W}(0)\!=\!\widetilde{\mathcal{L}}_{\xi_W}(0)\!=\! \mathcal{L}_{\xi_W}^{(0)}(0)\!=\!\mu_W$, so, recalling~\eqref{hwthreshpar},
\begin{equation*}
F_{HW}(0)=\widetilde{F}_{HW}(0)=F_{HW}^{(0)}(0)=f_{HW}(\mu_G,\mu_H,\mu_W).
\end{equation*}
Thus, using part \thp{a} and its proof,
\begin{align*}
R_0=1 \iff R_*=1 &\iff f_{HW}(\mu_G,\mu_H,\mu_W)=1\\
&\iff F_{HW}(0)=\widetilde{F}_{HW}(0)=F_{HW}^{(0)}(0)=1.
\end{align*}
Hence, using~\eqref{FHWroot},
\begin{equation*}
R_*=1 \iff r=\tilde{r}=r^{(0)}=0,
\end{equation*}
and part \thp{a} of Theorem~\ref{propos} holds for the households-workplaces model, since $\mathcal{M}_\gtrv(0)=1$.

Turning to part \thp{b} of Theorem~\ref{propos}, applied to the households-workplaces model, suppose first that all households have size $n$.  Then an analogous argument to the derivation of~\eqref{Lbetatheta} yields, using the same notation,
\begin{equation*}
\mathcal{L}_{\xi_H}(\theta)=\sum_{i=1}^{n-1} \bbE\left[ \mathrm{e}^{-\theta T_i} \right]=\sum_{k=1}^{n-1} \mu_k^H   \bbE[{\rm e}^{-\theta T_1}|\chi_k(1)].
\end{equation*}
Arguing as for the households model then gives that $\mathcal{L}_{\xi_H}(\theta) \ge \mathcal{L}_{\xi_H}^{(0)}(\theta)$ if $\theta>0$, while if $\theta<0$, then $\mathcal{L}_{\xi_H}(\theta) \le \mathcal{L}_{\xi_H}^{(0)}(\theta)$ , and further that these inequalities hold also in the unequal household size setting, which, together with analogous inequalities for $\mathcal{L}_{\xi_W}(\theta)$ and $\mathcal{L}_{\xi_W}^{(0)}(\theta)$, imply that $F_{HW}(\theta) \ge F_{HW}^{(0)}(\theta)$ if $\theta>0$ and $F_{HW}(\theta) \le F_{HW}^{(0)}(\theta)$ if $\theta<0$.  Using~\eqref{FHWroot}, it follows that $0<r^{(0)}\le r$ in a growing epidemic and $r \le r^{(0)}<0$ in a declining epidemic. Part \thp{b} of Theorem~\ref{propos} now follows for the households-workplaces model, since $\mathcal{M}_\gtrv(\theta)$ strictly decreases with $\theta$.  The proof of part \thp{c} of Theorem~\ref{propos} for the households-workplaces model is omitted since it exploits arguments used in the corresponding proof for the households model in exactly the same way as is done above for part \thp{b}.
\end{proof}

%%%%%%%%%%%%%%%%%%%%%%%%%%%%%%%%%%%%%%%%%%%%%%%%%%%%%%%%%%%%%%%
\section{Conclusions}
\label{sec:conclusions}

In this paper, we focus on an SIR model for a directly transmissible infection spreading in a fully susceptible population, socially structured into households, or households and workplaces. However, most of our results extend readily to SEIR models. We collect together most of the reproduction numbers that have been defined in the literature (see Tables \ref{tab:RsH} and \ref{tab:RsHW}) and we show how they relate to each other. 
Particular emphasis is placed on the basic reproduction number $R_0$, for which we provide a simpler and more elegant method for its calculation than that introduced in the companion paper of this \cite{PelBalTra2012}.  
Extending the work of Goldstein et al.~\cite{GoldsteinEtal2009}, we add other reproduction numbers (namely $\hat{R}_2$, $R_I$ and $R_0$) to the ones they already discuss, and we provide new definitions for the reproduction numbers $R_{H\!I}$ and $R_2$ in a way that is more satisfactory when households have variable size: see \eqref{gHI} and \eqref{g2unequal}, respectively.
%Extending the work of Goldstein et al.~\cite{GoldsteinEtal2009}, we add other reproduction numbers (namely $\hat{R}_2$, $R_I$ and $R_0$) to the ones they already discuss, and we provide definitions for the reproduction numbers $R_{H\!I}$ and $R_2$ that are more satisfactory when households have variable size: see \eqref{gHI} and \eqref{g2unequal}, respectively.
%Particular emphasis is placed on the basic reproduction number $R_0$, for which we provide a novel and more elegant derivation of the constructive definition introduced in the companion paper of this \cite{PelBalTra2012}. Extending the work of Goldstein et al.~\cite{GoldsteinEtal2009}, we add other reproduction numbers (namely $\hat{R}_2$, $R_I$ and $R_0$) to the ones they already discuss, and we provide novel definitions for the reproduction numbers $R_{H\!I}$ and $R_2$ in a way that is more satisfactory when households have variable size: see \eqref{gHI} and \eqref{g2unequal}, respectively.

We extend the inequalities discussed in Goldstein et al.~\cite{GoldsteinEtal2009} (Table \ref{tab:Inequalities}) %and \ref{tab:Declining} 
and, by doing so, we provide significantly sharper bounds for the vaccine-associated reproduction number $R_V$ than previously available, a result that holds consistently also for the network-households and households-workplaces models. 

%]]]

More precisely, Goldstein et al.~\cite{GoldsteinEtal2009} proved that $R_\ast$, $R_r$ , $R_V$ , $R_{V\!L}$ and (if all households have the same size) $\bar{R}_{H\!I}$ share the same threshold at 1 and  $R_\ast \ge R_{V\!L} \ge R_V \ge \bar{R}_{H\!I}$ in a growing epidemic. They noted also that in most cases $R_r$ fits into the inequalities as
\[ R_\ast \geq R_{V\!L} \geq R_r \geq R_V \geq \bar{R}_{H\!I}.\]
%Although $R_r$ represents in practice a reasonably useful upper bound for $R_V$, it requires knowledge of the generation time distribution. 
Although $R_r$ may sometimes represent a practically useful upper bound for $R_V$, which is usually the quantity of interest for public health purposes, it can be excessively large at times (e.g.\ see Figure \ref{HSIRRr}) and it requires knowledge of the generation-time distribution $\gtpdf$. 
In general, $R_{V\!L}$ cannot be computed easily from the model parameters, so this leaves $R_\ast$ and $\bar{R}_{H\!I}$ as the only generally valid, time-independent and easy-to-calculate (from the basic model parameters) bounds for $R_V$, but $R_\ast$ is often excessively large and $\bar{R}_{H\!I}$ is not a threshold parameter when households are not all of the same size. 

Although a proof is still to be found, we conjecture that $R_0 \ge R_2$ in a growing epidemic, with the opposite inequality holding in a declining epidemic. Assuming this to be true, we have that, in a growing epidemic
\[ R_\ast \geq R_I \geq R_V \geq R_0 \geq R_2 \geq R_{H\!I} > 1 \]
and, in a declining epidemic,
\[ R_\ast \leq R_I \leq R_0 \leq R_2 \leq R_{H\!I} < 1. \]

Note that, even if the conjecture about $R_2$ does not hold, $R_I$ and $R_0$ provide sharper bounds for $R_V$ than $R_\ast$ and $R_{H\!I}$. Moreover, the numerical illustrations in Section \ref{sec:numerical} demonstrate that the improvement can be appreciable. This provides useful information for bracketing the critical vaccination coverage within an interval which does not depend on the fine details of the person-to-person contact process and is therefore robust to poor estimates of complex model components, such as the generation-time distribution. %{\color{blue}It is worth stressing, however, that this result is based on our original assumption of homogeneous mixing within households. [[[Remove?]]]}

Turning to the spread in real-time, $R_r$ cannot always be related with $R_0$, although for virtually all models considered in the literature (including the standard SIR model and models with a deterministic time-varying infectivity profile) we have $R_r \ge R_0$ in a growing epidemic. Further, we have shown that $R_r$ and $R_{V\!L}$ cannot be ordered in general. 
A further reproduction number $\widetilde{R}_r$ has also been introduced, which in the case of the standard SIR model (and extensions to non-constant infection rates),  approximates $R_0$ better than $R_r$.

Other models with a different social structure have also been studied. These models all share the same qualitative construction of $R_0$ as presented in our previous paper \cite{PelBalTra2012}. As far as the network-households model is concerned, the relationships between $R_\ast, R_I, R_V$ and $R_0$ are the same as in the households model, although inequalities involving $R_0, R_2$ and $R_{H\!I}$ are more complex. Also, for the model with households and workplaces, the relationships between $R_\ast, R_I, R_V$ and $R_0$ are the same, with the additional presence of the household and workplace reproduction numbers as in Theorem \ref{Theorem 5.2}. 

Although our results stress how $R_V$ is bracketed between $R_I$ and $R_0$, we still believe that each of the reproduction numbers discussed have their own merit: $R_\ast$ carries a simple interpretation, is easiest to calculate, and in the households model gives the critical vaccination coverage when whole households are vaccinated uniformly at random; $R_V$ and $R_{V\!L}$ are important for practical reasons; $R_r$ is useful as $r$ can generally be estimated in new epidemics; $R_0$ represents a fundamental concept in epidemic models; $R_2$ is usually very close to $R_0$, but requires less knowledge about the epidemic model to be computed (in addition to $\mu_G$, only $\mu_1$ and the mean size of within-household epidemics) and might be easier to estimate from households studies, especially when within-household generations quickly overlap; and $R_{H\!I}$ requires even less knowledge than $R_2$, but is a coarser bound for $R_V$. For the households-workplaces model, in addition to the simple construction of $R_\ast$ and the bounds that $R_I$ and $R_0$ provide for $R_V$, $R_H$ still carries a simple interpretation, unlike $R_0$ is always finite and is informative about the control effort required when vaccination target entire households \cite{PelFerFra2009}.

Finally, much effort has been placed in studying the properties of $R_r$, in particular in relationship with $R_0$. 
%Although we show this is not possible in general, they can be ordered in most of the models commonly considered in the literature (see Theorem \ref{propos}). In addition, 
As already mentioned above, it is not possible to order them in general, although in most of the models commonly considered in the literature $R_r \ge R_0$ in a growing epidemic (see Theorem \ref{propos}). However, 
on a more speculative but practically relevant note, consider the case of a non-random infectivity profile $\gtpdf(t)$ and assume that $\gtpdf$ is unimodal, with small variance and mean significantly larger than 0. Then, if instead of the true generations we deal with the computationally much more tractable rank generations (%PT
approximating process $T_k$ by $\hat{T}_k$ in Section \ref{subsec:HRrproof}, c.f.\ equation \eqref{Tgbound})
the errors involved are small because generations do not easily overlap, in particular for realistically small household sizes. Furthermore, if we approximate the relative time at which real infections are made with that at which infections contacts are made (approximating $\hat{T}_k$ by $\sum_{m=0}^{k-1} \gtrvz{m}$, c.f.\ equation \eqref{casecThat}), the errors are also minor, because repeated infectious contacts between the same pair of individuals are likely to be all gathered around the mode of $\gtpdf$. Therefore, the quantitative values of $R_r$ and $R_0$ are very similar to each other, thus suggesting that $R_0$, the individual generation time distribution $\gtpdf$ and the real-time growth rate $r$ are approximately related as in the case of simple homogeneously mixing models. Given that many infections lead to infectivity profiles of the type described above (e.g.~influenza and SARS), this intuitive argument increases confidence in the estimates of $R_0$ obtained in the literature using models  that ignore the household structure.

%%%%%%%%%%%%%%%%%%%%%%%%%%%%%%%%%%%%%%%%%%%%%%%%%%%%%%%%%%%%%%%
\section*{Acknowledgement}
%\subsection*{Acknowledgment}
%\paragraph{Acknowledgment}
F.B. was supported in part by the UK Engineering and Physical Sciences Research Council (EPSRC: Grant No. EP/ E038670/1), L.P. by the Medical Research Council Methodology Program and the EPSRC and P.T. from the Swedish Vetenskapsrådet (Grant nr: 20105873). We also gratefully acknowledge support from the Isaac Newton Institute for Mathematical Sciences, Cambridge, where we held Visiting Fellowships under the Infectious Disease Dynamics programme and its follow-up meeting, during which part of the work was conducted.
We thank the reviewers for their constructive comments which have improved the presentation of the paper.

%% The Appendices part is started with the command \appendix;
%% appendix sections are then done as normal sections
%%%%%%%%%%%%%%%%%%%%%%%%%%%%%%%%%%%%%%%%%%%%%%%%%%%%%%%%%%%%%%%
\appendix
%%%%%%%%%%%%%%%%%%%%%%%%%%%%%%%%%%%%%%%%%%%%%%%%%%%%%%%%%%%%%%%

%%%%%%%%%%%%%%%%%%%%%%%%%%%%%%%%%%%%%%%%%%%%%%%%%%%%%%%%%%%%%%%
\section{Comparison of $R_0$ and $R_2$}
%\refstepcounter{section}
\label{app:R0R2comp}

In this appendix we discuss the conjecture concerning the comparison of $R_0$ and $R_2$.  Recall that $R_0$ is given by the unique root in $(0,\infty)$ of the function $g_0$ defined at~\eqref{g0H} and $R_2$ is given by the unique root in $(b,\infty)$ of the function $g_2$ defined at~\eqref{g2unequal}.  Recall also that
$g_0(\lambda)=\sum_{n=1}^{n_H}\pi_n g_0^{(n)} (\lambda )$ and $g_2(\lambda)=\sum_{n=1}^{n_H}\pi_n g_2^{(n)} (\lambda )$, where $g_0^{(n)}$ and
$g_2^{(n)}$ are defined at~\eqref{gnH} and~\eqref{gn2}, respectively.

Observe that
$g_0^{(n)}=g_2^{(n)}$ for $n=1,2$, so $R_0=R_2$ if $n_H \le 2$.  We aim to show that if $n \ge 3$ then $\sign(g_2^{(n)}(\lambda)-g_0^{(n)}(\lambda))=\sign(\lambda-1)$ for all $\lambda>b^{(n)}$.  This implies that
if $n_H \ge 3$ then $\sign(g_2(\lambda)-g_0(\lambda))=\sign(\lambda-1)$ for all $\lambda>b$.  It would then follow, as in the proof of the comparison between $R_0$ and $R_{H\!I}$ in Theorem~\ref{Hcomp}, that if $R_0>1$ then $R_0>R_2$ and if $R_0<1$ then $R_0<R_2$.

We now fix $n\ge3$ and suppose that all households have size $n$, so $g_0^{(n)}=g_0$ and $g_2^{(n)}=g_2$.  Then, for any
$\lambda>b=1-\mu_1/\mu_H$,
\begin{align}
\label{g0-g2}
g_2(\lambda)-g_0(\lambda) &=\mu_G\left[\sum_{k=1}^{n-1}\frac{\mu_k}{\lambda^{k+1}}-\frac{\mu_1}{\lambda(\lambda-b)}\right]\nonumber\\
&=\frac{\mu_G}{\lambda-b}\left[\sum_{k=1}^{n-1} \frac{(\lambda-b)\mu_k}{\lambda^{k+1}}
-\frac{\mu_1}{\lambda}\right]\nonumber\\
&=\frac{\mu_G}{\lambda-b}\left[\sum_{k=2}^{n-1}\frac{\mu_k}{\lambda^k}-b\sum_{k=1}^{n-1}\frac{\mu_k}{\lambda^{k+1}}\right]\nonumber\\
&=\frac{\mu_G}{\lambda^2(\lambda-b)}f_{n}(\lambda^{-1}),
\end{align}
where $f_{n}$ is the polynomial of degree $n-2$ given by $f_n(x)=\sum_{j=0}^{n-2}c_j x^j$, with $c_j=\mu_{j+2}-b\mu_{j+1}$ $(j=0,1,\cdots,n-3)$ and $c_{n-2}=-b\mu_{n-1}$.
Note that $f_n(1)=0$.  Thus, as at~\eqref{fnfntilde},
\begin{equation}
f_n(x)=(x-1)\tilde{f}_n(x),
\end{equation}
where
\begin{equation}
\label{fntildepower1}
\tilde{f}_n(x)=\sum_{j=0}^{n-3}\tilde{c}_j x^j,
\end{equation}
with $\tilde{c}_j=\sum_{l=j+1}^{n-2}c_l$ $(j=0,1,\cdots,n-3)$.  Thus, 
$$\tilde{c}_j=\sum_{l=j+3}^{n-1} \mu_l - b \sum_{l=j+2}^{n-1} \mu_l \qquad (j=0,1,\cdots,n-4)$$ and $\tilde{c}_{n-3}=-b\mu_{n-1}$.  It follows from~\eqref{g0-g2} and~\eqref{fntildepower1} that $\sign(g_2(\lambda)-g_0(\lambda))=\sign(\lambda-1)$ if
$\tilde{c}_j \le 0$ ($j=0,1,\cdots,n-3)$ and at least one of these inequalities is strict.\\  Now  $\tilde{c}_{n-3}=-b\mu_{n-1}<0$\footnote{This assumes that $\mu_{n-1} > 0$, which is the case for most models studied in the literature. If $\mu_{n-1} = 0$ then the ensuing discussion may be modified in the obvious fashion.}, so a sufficient condition for $\sign(g_2(\lambda)-g_0(\lambda))=\sign(\lambda-1)$, and hence for the conjectured comparisons between $R_0$ and $R_2$, is that
\begin{equation*}
\frac{\sum_{l=j+3}^{n-1} \mu_l}{\sum_{l=j+2}^{n-1}\mu_l} \le b \quad (j=0,1,\cdots,n-4)
\end{equation*}
or equivalently that
\begin{equation}
\label{R0vR2conj}
\frac{1}{\mu_j} \sum_{l=j+1}^{n-1} \mu_l \le \frac{1}{\mu_1} \sum_{l=2}^{n-1} \mu_l \quad (j=2,3,\cdots,n-2).
\end{equation}

When $n=3$, the condition~\eqref{R0vR2conj} is vacuous, so the conjectured comparison between $R_0$ and $R_2$ holds when $n_H=3$.  We do not have a proof that~\eqref{R0vR2conj} holds in general.

Suppose that household generation sizes are the same as for a Reed-Frost model, as is the case when individuals have a non-random infectivity profile.  Let $p$ be the probability that a given infective infects a given susceptible household member.  Suppose that $n=4$.  The mean generation sizes are obtained easily using probabilities of different chains of infection (see e.g.~Bailey~\cite{Bailey1975}, Table 14.3) and are given by
\[
\mu_1=3p,\quad \mu_2=3p^2(1-p)(2-p^2)\quad \mbox{ and }\quad \mu_3=6p^3(1-p)^3.
\]
Thus,
\[
\frac{\mu_3}{\mu_2}=\frac{2p(1-p)^2}{2-p^2}\quad \mbox{ and }\quad
\frac{\mu_2+\mu_3}{\mu_1}=p(1-p)(2-p^2+2p(1-p)^2),
\]
whence, for $p \in [0,1]$,
\begin{align*}
\frac{\mu_3}{\mu_2}\le\frac{\mu_2+\mu_3}{\mu_1} &\iff \frac{2(1-p)}{2-p^2} \le 2-p^2+2p(1-p)^2\\
&\iff 2+6p(1-p)^2-4p^3+5p^4-2p^5 \ge 0.
\end{align*}
Let $\varphi(p)=4p^3-5p^4+2p^5$.  Then $\varphi'(p)=2p^2(1+5(1-p)^2)\ge0$, so $0\le \varphi(p)\le1$ for $p\in[0,1]$, whence $\frac{\mu_3}{\mu_2}\le\frac{\mu_2+\mu_3}{\mu_1}$ and~\eqref{R0vR2conj} holds for $n=4$, proving the conjectured comparison between
$R_0$ and $R_2$ in this case.  The expressions for the mean generation sizes become increasing unwieldy as $n$ increases.  However, numerical investigation using the recursive method for computing the mean generation sizes described in Appendix A of Pellis et al.~\cite{PelBalTra2012}
did not find any violation of~\eqref{R0vR2conj} for $n=5,6,\cdots,20$ and $p=0.001,0.002,\cdots,0.999$, suggesting that the conjectured comparison between $R_0$ and $R_2$ holds generally for the Reed-Frost model.  A similar investigation for the Markov model in which a typical infective contacts any given household member at rate $\lambda_H$ during an infectious period that has an $\mathrm{Exp}(1)$ distribution did not find any violation of~\eqref{R0vR2conj} for $n=4,5,\cdots,20$ and $\lambda_H=0.01,0.02,\cdots,10.00$, suggesting that the comparison between $R_0$ and $R_2$ holds generally for this model too.

%%%%%%%%%%%%%%%%%%%%%%%%%%%%%%%%%%%%%%%%%%%%%%%%%%%%%%%%%%%%%%%%%
\section{Comparison of $R_I$ and $R_{V\!L}$}
%\refstepcounter{section}
\label{app:RIRVLcomp}

In this appendix we give an example which demonstrates that, for the households model, the reproduction numbers $R_I$ and $R_{V\!L}$ cannot in general be ordered.  We consider an SIR epidemic among a population of households, all of which have size $3$.  The infectious period is assumed to
be constant and equal to one.  The individual to individual local infection rate is $\lambda_H$.
Suppose that the entire population is vaccinated with a leaky vaccine having efficacy $\mathcal{E}$ and let
$R_I(\mathcal{E})$ denote the post-vaccination version of $R_I$.  Then $R_I(\mathcal{E})$ is the largest eigenvalue of
\[
M_I(\mathcal{E}) = \left[ \begin{array}{cc}\mu_G(\mathcal{E})&\mu_H(\mathcal{E})\\\mu_G(\mathcal{E})&0\end{array}\right] ,
\]
where $\mu_G(\mathcal{E})=(1-\mathcal{E})\mu_G$ and $\mu_H(\mathcal{E})$ is the mean size of a single-household epidemic, with one initial infective and two initial susceptibles, both of whom are vaccinated. We assume throughout this appendix that $R_I>1$. By considering the
characteristic polynomial of $M_I(\mathcal{E})$ and arguing as for the comparison of $R_I$ and $R_V$ in the proof of Theorem~\ref{Hcomp}, it is readily seen that $\sign(R_{V\!L}-R_I)=\sign(\mu_H(\mathcal{E})-(1-\mathcal{E})\mu_H)$, with $\mathcal{E}=1-R_I^{-1}$.

The definition of the leaky vaccine action implies that $\mu_H(\mathcal{E})$ is given by mean size of the
single-household epidemic with $\lambda_H$ replaced by $\lambda_H(1-\mathcal{E})$.  Direct calculation shows that for the present population
\[
\mu_H(\mathcal{E})=2-4{\rm e}^{-2\lambda_H(1-\mathcal{E})}+2{\rm e}^{-3\lambda_H(1-\mathcal{E})}.
\]
Thus, if we let $x=1-\mathcal{E}$, then $$\sign(\mu_H(\mathcal{E})-(1-\mathcal{E})\mu_H)=\sign(u_1(x)-u_2(x)),$$
where $$u_1(x)=2-4{\rm e}^{-2\lambda_Hx}+2{\rm e}^{-3\lambda_Hx} \mbox{ and } u_2(x)=x(2-4{\rm e}^{-2\lambda_H}+2{\rm e}^{-3\lambda_H}).$$
Now $$u_1'(x)=8\lambda_H{\rm e}^{-2\lambda_Hx}-6\lambda_H{\rm e}^{-3\lambda_Hx}\mbox{ and }u_2'(x)=2-4{\rm e}^{-2\lambda_H}+2{\rm e}^{-3\lambda_H},$$ so, since $u_1(0)=u_2(0)=0$, $u_1(x)-u_2(x)>0$ (respectively $<0$) for all sufficiently small $x>0$ if $v(\lambda_H)>0$ (respectively $<0$), where $$v(\lambda_H)=2\lambda_H-2+4{\rm e}^{-2\lambda_H}-2{\rm e}^{-3\lambda_H}.$$    Further, $$v'(\lambda_H)=2-8{\rm e}^{-2\lambda_H}+6{\rm e}^{-3\lambda_H} \mbox{ and }v''(\lambda_H)=16{\rm e}^{-2\lambda_H}-18{\rm e}^{-3\lambda_H}.$$  Thus $\sign(v''(\lambda_H))=\sign(\lambda_H-\log(9/8))$  and elementary calculus shows that $v(\lambda_H)=0$ has a unique solution, $\lambda_H^*$ say, in $(0,\infty)$ and that $$\sign(v(\lambda_H))=\sign(\lambda_H-\lambda_H^*) \qquad \mbox{for $\lambda_H \in (0,\infty)$}.$$  (Numerical calculation shows that $\lambda_H^*\!\approx\!0.4219.$)  It follows that if $\lambda_H\!\in\!(0,\lambda_H^*)$ then $u_1(x)<u_2(x)$ for all sufficiently small $x>0$, whilst if $\lambda_H \in (\lambda_H^*, \infty)$ then $u_1(x)>u_2(x)$ for all sufficiently small $x>0$.
Recall that $x=1-\mathcal{E}$, where $\mathcal{E} = 1-R_I^{-1}$, so $x=R_I^{-1}$. Also, note that $R_I$ increases with $\mu_G$ for fixed $\lambda_H$. Hence, if $\lambda_H\in (0,\lambda_H^*)$, then there exists $\mu_G^*>0$ such that $R_I>R_{V\!L}$ for all $\mu_G>\mu_G^*$, whilst if $\lambda_H\in (\lambda_H^*,\infty)$, then there exists $\mu_G^\dagger>0$ such that $R_I<R_{V\!L}$ for all $\mu_G>\mu_G^\dagger$.
Thus the reproduction numbers $R_I$ and $R_{V\!L}$ cannot in general be ordered.

%%%%%%%%%%%%%%%%%%%%%%%%%%%%%%%%%%%%%%%%%%%%%%%%%%%%%%%%%%%%%%%%
\section{Comparisons between $R_0$, $R_{H\!I}$, $\hat{R}_{H\!I}$ and $\bar{R}_{H\!I}$}
%\refstepcounter{section}
\label{app:R0RHIcomp}

In this appendix, we discuss the orderings of the individual reproduction numbers $R_0, R_{H\!I},\hat{R}_{H\!I}$ and $\bar{R}_{H\!I}$.

Recall from Section~\ref{subsec:RHI} that $a^{(n)} = \mu_H^{(n)} / (1 +\mu_H^{(n)})$ and that $R_{H\!I}$ is the unique positive solution of $g_{H\!I}$ (defined at \eqref{gHI}) in $(a,\infty)$, with $a=\max\left(a^{(n)},n=1,2,\cdots n_H\right)$.
Further, with $\hat{a} = \mu_H / (1+\mu_H)$, it follows from~\eqref{rhiunequal} that
$\hat{R}_{H\!I}$ is the unique root in $(\hat{a},\infty)$ of
\begin{equation*}
%\label{ghatHI}
\hat{g}_{H\!I} (\lambda) = 1-\frac{\mu_G}{\lambda-\hat{a}} \qquad (\lambda > \hat{a}),
\end{equation*}
and, with $\bar{a}=\sum_{n=1}^{n_H}{\pi_n a^{(n)}}$, it follows from~\eqref{rhihatnew} that $\bar{R}_{H\!I}$
is the unique root in $(\bar{a},\infty)$ of
\begin{equation*}
\label{gbarHI}
\bar{g}_{H\!I}(\lambda) = 1 - \frac{\mu_G}{\lambda-\bar{a}} \qquad (\lambda > \bar{a}).
\end{equation*}
First, we compare $R_{H\!I}$ and $\bar{R}_{H\!I}$. For
fixed $\lambda>0$, define the function $\bar{f}_{\lambda}$ by
\begin{equation*}
\bar{f}_{\lambda}(x)=1-\frac{\mu_G}{\lambda - x} \qquad (x<\lambda).
\end{equation*}
Now $\bar{f}_{\lambda}''(x)=-2\mu_G(\lambda-x)^{-3}<0$ for $x < \lambda$, so $\bar{f}_{\lambda}$ is concave and by Jensen's inequality,
\begin{equation*}
g_{H\!I}(\lambda)=\sum_{n=1}^{n_H} \pi_n \bar{f}_{\lambda}(a^{(n)})
\le \bar{f}_{\lambda}(\bar{a})=\bar{g}_{H\!I}(\lambda)
\qquad (\lambda>a),
\end{equation*}
whence $\bar{R}_{H\!I} \le R_{H\!I}$ always, with equality only if $\mu_H^{(n)}$ is constant for all $n$ with $\pi_n > 0$. (Note that, as $\mu_H^{(1)}=0$, if $\pi_1 > 0$ this condition is satisfied only in the trivial cases where $n_H=1$ or there is no transmission within the household.) In particular, in a growing epidemic, Theorem~\ref{Hcomp}\thp{b} leads to
\begin{equation*}
R_0  > R_{H\!I} \ge \bar{R}_{H\!I},
\end{equation*}
but note that $\bar{R}_{H\!I}$, which is not a threshold parameter, might be smaller than 1 while the other two are larger than 1. Exploiting this fact, we now prove that the inequality between $R_0$ and $\bar{R}_{H\!I}$ is not necessarily reversed in a declining epidemic. Suppose that $n_H = 2$ and further that  $\pi_1 > 0, \pi_2 > 0$ and $\mu_H^{(2)} > \mu_H^{(1)}=0$. Then, there exists $\mu_G>0$ such that $\bar{R}_{H\!I} < R_{H\!I} = R_0 = 1$. Now, $\bar{R}_{H\!I}, R_{H\!I}$ and $R_0$ all depend continuously on $\mu_G$, so
reducing $\mu_G$ slightly gives a declining epidemic for which $\bar{R}_{H\!I}<R_0$. However, for a common household size, $\bar{R}_{H\!I}=R_{H\!I}$ and Theorem~\ref{Hcomp}\thp{b} implies that $\bar{R}_{H\!I}>R_0$ in a declining epidemic.
Thus, in general,  $\bar{R}_{H\!I}$ and $R_0$ cannot be ordered in a declining epidemic.

{
Now we compare $R_{H\!I}$ and $\hat{R}_{H\!I}$. First, note that
\begin{align*}
g_{H\!I}(\lambda) &= 1-\mu_G \sum_{n=1}^{n_H} \frac{\pi_n}{\lambda-a^{(n)}} \\
&= 1-\mu_G \sum_{n=1}^{n_H} \pi_n \frac{1}{\lambda-\frac{\mu_H^{(n)}}{1+\mu_H^{(n)}}} \\
&= 1-\mu_G \sum_{n=1}^{n_H} \pi_n \frac{1+\mu_H^{(n)}}{\lambda + (\lambda-1)\mu_H^{(n)}}.
\end{align*}
For fixed $\lambda>0$, define the function $\hat{f}_\lambda$ by
\begin{equation*}
\hat{f}_{\lambda}(x)=\frac{1+x}{\lambda +(\lambda-1) x} \qquad (x\neq \lambda/(1-\lambda)),
\end{equation*}
so $\hat{f}_\lambda''(x) = -2(\lambda-1) / (\lambda+(\lambda-1)x)^3$. Thus, if $\lambda>1$ then $\hat{f}_\lambda$ is strictly concave on $[0,\infty)$ so, using Jensen's inequality,
\begin{align*}
g_{H\!I}(\lambda) &= 1-\mu_G \sum_{n=1}^{n_H} \pi_n \hat{f}_\lambda(\mu_H^{(n)}) \\
&\ge 1-\mu_G \hat{f}_\lambda(\mu_H) = \hat{g}_{H\!I}(\lambda),
\end{align*}
and it follows that $R_{H\!I}\le\hat{R}_{H\!I}$ in a growing epidemic.

Suppose that $\lambda<1$. Then $\hat{f}_\lambda$ is strictly convex on $[\,0,\lambda/(1-\lambda)\,)$. Recall from \eqref{gHI} that $g_{H\!I}(\lambda)$ is defined for $\lambda> a =\max\pp{a^{(n)}:n=1,2,\cdots,n_H}$, where $a^{(n)} = \mu_H^{(n)} / \pp{1 + \mu_H^{(n)}}$. Thus, for each $n$, $\mu_H^{(n)} <\lambda/(1-\lambda)$ and applying Jensen's inequality as above yields that $g_{H\!I}(\lambda) \le \hat{g}_{H\!I}(\lambda)$ $(\lambda>a)$. It follows that $R_{H\!I}\ge\hat{R}_{H\!I}$ in a declining epidemic. 
These inequalities are strict except again in the case when $\mu_H^{(n)}$ is constant for all $n$ with $\pi_n > 0$. 
%
%For fixed $\lambda>0$, define the function $\hat{f}_\lambda$ by
%\begin{equation*}
%\hat{f}_{\lambda}(x)=\frac{1+x}{\lambda +(\lambda-1) x} \qquad (x> \lambda/(1-\lambda)).
%\end{equation*}
%%(Recall that $g_{H\!I}(\lambda)$ is defined for $\lambda> a =\max\pp{\mu_H^{(n)} / \pp{1 + \mu_H^{(n)}}:n=1,2,\cdots,n_H}$, therefore $x> \lambda/(1-\lambda)$ is the relevant domain.) 
%(Recall that $g_{H\!I}(\lambda)$ is defined for $\lambda> a =\max\pp{a^{(n)}:n=1,2,\cdots,n_H}$, hence the restriction on the domain of $\hat{f}_\lambda(x)$.) 
%Then $\hat{f}_\lambda''(x) = -2(\lambda-1) / (\lambda+(\lambda-1)x)^3$, so that $\hat{f}_\lambda$ is strictly concave for $\lambda>1$ and strictly convex for $\lambda<1$. Therefore, for $\lambda>1$,
%\begin{align*}
%g_{H\!I}(\lambda) &= 1-\mu_G \sum_{n=1}^{n_H} \pi_n f_\lambda(\mu_H^{(n)}) \\
%&\ge 1-\mu_G f_\lambda(\mu_H) = \hat{g}_{H\!I}(\lambda),
%\end{align*}
%while the inequality is reversed for $\lambda<1$. Hence, 
%$R_{H\!I}\le\hat{R}_{H\!I}$ in a growing and $R_{H\!I}\ge\hat{R}_{H\!I}$ in a declining epidemic. 
%These inequalities are strict except in the trivial case when $\mu_H^{(k)}$ is constant for all $k$ with $\pi_k > 0$ (as noted above, if $\pi_1>0$, this implies $n_H=1$, i.e.\ homogeneous mixing).
%%These inequalities are strict except in the trivial cases noted after the proof showing $\bar{R}_{HI} \le R_{HI}$ above.

Finally, we give an example which demonstrates that the
reproduction numbers $R_0$ and $\hat{R}_{H\!I}$ cannot in general be ordered if the population contains households of different sizes.
We use notation analogous to that in the comparison between $R_0$ and $R_{H\!I}$ in the proof of Theorem~\ref{Hcomp}. Suppose that the population contains only households of sizes $1$ and $3$, so $\pi_3=1-\pi_1$.
Then $\mu_0=1, \mu_1=\pi_3 \mu_1^{(3)}$ and $\mu_2=\pi_3 \mu_2^{(3)}$, so, using \eqref{rhihatnew}, $\hat{R}_{H\!I}=\mu_G+\hat{a}$, with $\hat{a}=\pi_3(\mu_1^{(3)}+\mu_2^{(3)})/(1+\pi_3(\mu_1^{(3)}+\mu_2^{(3)}))$.
Let $\hat{h}_0(\lambda)=\hat{g}_{H\!I}(\lambda)-g_0(\lambda)$. Then, arguing as in (\ref{hlambda}) to (\ref{ctilde}), shows that
\begin{align}
\label{hbar}
\hat{h}_0(\lambda)&=-\frac{\mu_G}{\lambda(\lambda-\hat{a})}(\lambda^{-1}-1)\tilde{f}(\lambda^{-1})\nonumber\\
&=\frac{\mu_G(\lambda-1)}{\lambda^2(\lambda-\hat{a})}\tilde{f}(\lambda^{-1}),
\end{align}
where
\begin{equation}
\label{f3tilde}
\tilde{f}(x)=\hat{a}(\mu_1+\mu_2)-\mu_2+\hat{a}\mu_2 x.
\end{equation}
(In the notation of (\ref{ctilde}), it is easily shown that
$\tilde{c}_0=\hat{a}(\mu_1+\mu_2)-\mu_2$ and $\tilde{c}_1=\hat{a}\mu_2$.)
Substituting the above expressions for $\mu_1, \mu_2$ and $\hat{a}$ into (\ref{f3tilde}) yields that, for $x>0$,
\begin{equation}
\label{ftildegt0}
\tilde{f}(x)<0 \iff \mu_2^{(3)}>\frac{\pi_3(\mu_1^{(3)}+\mu_2^{(3)})}{1+\pi_3(\mu_1^{(3))}+\mu_2^{(3)})}
\left[\mu_1^{(3)}+\mu_2^{(3)}(1+x)\right].
\end{equation}
Thus, for any $x>0$, $\tilde{f}(x)<0$ for all sufficiently small $\pi_3 \in (0,1)$.

Recall that $R_0$ is the unique root in $(0,\infty)$ of $g_0$.  Suppose that $R_0>1$.  Then, since
$g_0(R_0)=0$, it follows from (\ref{hbar}) and (\ref{ftildegt0}) that, if $\pi_3 \in (0,1)$ is sufficiently small, then $\hat{g}_{H\!I}(R_0)<0$, whence $\hat{R}_{H\!I}>R_0$.
A similar argument shows that, if $R_0<1$ and $\pi_3 \in (0,1)$ is sufficiently small, then $\hat{R}_{H\!I}<R_0$.
Note that these inequalities are again the reverse of those proved for $R_{H\!I}$, and hence of those for the situation when all households have the same size, in which
$R_{H\!I}, \hat{R}_{H\!I}$ and $R_{H\!I}$ coincide.  Thus, $R_0$ and $\hat{R}_{H\!I}$ cannot in general be ordered.

%%%%%%%%%%%%%%%%%%%%%%%%%%%%%%%%%%%%%%%%%%%%%%%%%%%%%%%%%%%%%%%%%%%%%%%%%%%%%%%%%%%%%%%%%%%%%%%%%%%%%%%%%%%%%%%%%%%%%%%%%%%

\section{Random infectivity profile}
%\refstepcounter{section}
\label{app:RandomTVI}

The proof of Theorem~\ref{propos} in Section~\ref{subsec:HRrproof} already reveals that in order to show  that 
%$R_r > R_0$ (resp.\ $R_r < R_0$)
$R_r \ge R_0$ (respectively $R_r \le R_0$) 
does not generally hold in growing (respectively declining) epidemics, we should look for a model in which $\gtrvzt{m}$ is not stochastically smaller than $\gtrvt$. In particular this is the case if an individual with a large total infectivity makes its contacts relatively early after infection, while a an individual with a small total infectivity makes its contacts, long after infection.
Here we provide  a simple example in a household of size $n=3$. As before the individuals are denoted by $i=0$ (the initial infective in the household), $i=1$ and $i=2$ (the initial susceptibles).

Consider a random infectivity profile which either has its complete mass $\kappa_a$ at time $t_a$ or it has its complete mass $\kappa_b$ at time $t_b$, both with probability $1/2$. Thus, for $x>0$,
\begin{equation}
\int_0^x \mathcal{I}(t) {\rm d}t = \begin{cases}
\kappa_a \indic(x \geq t_a) & \text{with probability } 1/2 \\
\kappa_b \indic(x \geq t_b) & \text{with probability } 1/2 \\
%%\kappa_a \delta_{t_a}(t) & \text{with probability } \varepsilon \\
%%\kappa_b \delta_{t_b}(t) & \text{with probability } 1-\varepsilon,
\end{cases}.
\end{equation}
Here all parameters are non-negative. We assume further that
\begin{equation}\label{negcorH}
(\kappa_a-\kappa_b)(t_a-t_b)<0.
\end{equation}

Note that $\kappa_a + \kappa_b=2$, since the infectivity profile necessarily satisfies $\int_0^{\infty} \mathbb{E}[\mathcal{I}(t)] dt=1$. Furthermore,
from the definition of $\gtrv$, we have that
\begin{equation}
\label{goodmgf}
\mathcal{M}_\gtrv(r) = \frac{\kappa_a  \text{e}^{-r t_a} + \kappa_b \text{e}^{-r t_b}}{2}= \frac{\kappa_a  \text{e}^{-r t_a} + \kappa_b \text{e}^{-r t_b}}{\kappa_a + \kappa_b}.
\end{equation}

Define $p_a = 1-{\rm e}^{-\lambda_H \kappa_a}$  and $p_b = 1-{\rm e}^{-\lambda_H \kappa_b}$,
and note that an infective individual that has total infectivity $\kappa_a$ (respectively $\kappa_b$) infects each susceptible independently with probability $p_a$  (respectively $p_b$) and all infections occur at time $t_a$  (respectively $t_b$). 
%Observe that $p_a$ is increasing in $\kappa_a$.
Hence,
\begin{equation}
\label{MW_2ind}
\mathcal{M}_\gtrvt(r) = \frac{p_a  \text{e}^{-r t_a} + p_b \text{e}^{-r t_b}}{ p_a + p_b  }.
\end{equation}
Straightforward algebra gives
\begin{equation}
\label{differenceMGF}
\mathcal{M}_\gtrv(r) - \mathcal{M}_\gtrvt(r) = \frac{(p_b \kappa_a -p_a \kappa_b)  (\text{e}^{-r t_a} - \text{e}^{-r t_b})}{2 (p_a + p_b)}.
\end{equation}
Furthermore, since for $\lambda_H,r>0$, the function $(1-{\rm e}^{-r\lambda_H x})/x$ is decreasing for $x>0$, and $p_b \kappa_a -p_a \kappa_b > 0$ if and only if $p_b/\kappa_b > p_a/\kappa_a$, we obtain that $p_b \kappa_a -p_a \kappa_b > 0$ if and only if $\kappa_a > \kappa_b$. Combining this with \eqref{negcorH} we obtain that
$\mathcal{M}_\gtrv(r) - \mathcal{M}_\gtrvt(r) >0$ for $r>0$. If $R_*>1$, this implies that $R_0>\widetilde{R}_r$ (cf.~the end of the proof of Theorem~\ref{propos}\thp{b}), while if $R_*<1$ (and therefore $r<0$), then $R_0<\widetilde{R}_r$.

If we prove further that under the given conditions $\widetilde{R}_r >R_r$ in growing epidemics and $\widetilde{R}_r <R_r$ in declining epidemics, then we have constructed the desired counter example.
Therefore, in what follows, we only compare the constructions from which $\widetilde{R}_r$  and $R_r$ are deduced. In particular, we have to compare
$\mathcal{L}_{\beta_H}(\theta)$ and $\widetilde{\mathcal{L}}_{\beta_H}(\theta)$, through which
$\tilde{r}$ and $r$ are defined by~\eqref{Lbetardef}. In turn, $\tilde{r}$ and $r$ are used to define $\widetilde{R}_r$ and $R_r$ by \eqref{RrsfromM}.

Recall that $R_r=1/\mathcal{M}_\gtrv(r)$, where $r$ is the unique real value of $\theta$
which solves the first equation in~\eqref{Lbetardef}, and %PT, from Lemma~\ref{lem:r3R0} and the discussion that precedes it,
$\widetilde{R}_r=1/\mathcal{M}_\gtrv(\tilde{r})$, where $\tilde{r}$ is the unique real value of $\theta$ which solves the second equation in~\eqref{Lbetardef}.  Thus, cf.~\eqref{Lbetatheta} and~\eqref{rtgrowth3}, we need to compare the pair of times $(T_1, T_2)$ with the pair of times $(\sum_{l=0}^{k_1-1}\gtrvzt{l,1},\sum_{l=0}^{k_2-1}\gtrvzt{l,2})$, where $k_1$ and $k_2$ are the generation numbers of individual $1$ and $2$, while the $\gtrvzt{l,1}$s and $\gtrvzt{l,2}$s are independent random variables all distributed as $\gtrvt$. Note that we assume that individual $0$ is infected at time 0. For ease of reference, define
$(\widetilde{T}_0,\widetilde{T}_1,\widetilde{T}_2) = (0,\sum_{l=0}^{k_1-1}\gtrvzt{l,1},\sum_{l=0}^{k_2-1}\gtrvzt{l,2})$.
We now  compare
\begin{equation}
\label{comp13a}
\sum\limits_{i=0}^{2} \bbE\b{\text{e}^{-r T_i}} = \bbE\b{\text{e}^{-r T_{0}} +\text{e}^{-r T_{1}} + \text{e}^{-r T_{2}}}
\end{equation}
and
\begin{equation}
\label{comp13b}
\sum\limits_{i=0}^{2} \bbE\b{\text{e}^{-r \widetilde{T}_i}} = \bbE\b{\text{e}^{-r \widetilde{T}_{0}} +\text{e}^{-r \widetilde{T}_{1}} + \text{e}^{-r \widetilde{T}_{2}}}
\end{equation}
Since $T_0 = \widetilde{T}_0$ we only have to compare
\begin{equation}
\label{comp13}
\zeta(r) = \bbE\b{\text{e}^{-r T_{1}} + \text{e}^{-r T_{2}}} \qquad \mbox{and} \qquad
\widetilde{\zeta}(r) = \bbE\b{\text{e}^{-r \widetilde{T}_{1}} + \text{e}^{-r \widetilde{T}_{2}}}.
\end{equation}
%These functions we denote by $\zeta(r)$ and $\widetilde{\zeta}(r)$ respectively\note{I suggest putting directly $\zeta(r) = \bbE\b{\text{e}^{-r T_{1}} + \text{e}^{-r T_{2}}}$ (and the other) in the line before}.

After some algebra, we obtain
\begin{equation}
\label{zeta1}
\zeta(r) =
% p_a  \text{e}^{-r t_a} + p_b \text{e}^{-r t_b} \\+ \frac{ p_a^2 (1-p_a) \text{e}^{-2rt_a} + p_a p_b \pp{(1-p_a) + (1-p_b)}\text{e}^{-r (t_a+t_b)}  +  p_b^2 (1-p_b) \text{e}^{-2rt_b}}{2}.
\left( p_a  \text{e}^{-r t_a} + p_b \text{e}^{-r t_b}\right) \left(1+ \frac{ p_a (1-p_a) \text{e}^{-rt_a} +  p_b (1-p_b) \text{e}^{-rt_b}}{2}\right).
\end{equation}

In order to compute $\widetilde{\zeta}(r)$, we compute explicitly the average number of cases in each generation, viz.
\begin{align}
\label{mus_randomTVIexample}
\mu_1 & = p_a +p_b,\\
\mu_2 & = (p_a +  p_b)(p_a (1-p_a) +  p_b (1-p_b))/2.
\end{align}
%%the moment-generating function of $\gtrv$, which is given by
%%\begin{equation}
%%\label{MW_2ind}
%%\mathcal{M}_\gtrv(r) = \frac{p_a  \text{e}^{-r t_a} + p_b \text{e}^{-r t_b}}{ p_a + p_b  }.
%%\end{equation}
From the definition of $\widetilde{\zeta}(r)$ and from \eqref{MW_2ind} we deduce
\begin{equation}
\label{zeta3}
\widetilde{\zeta}(r) = \mu_1 \mathcal{M}_\gtrvt(r) + \mu_2 \left( \mathcal{M}_\gtrvt(r) \right)^2.
\end{equation}
Straightforward but lengthy algebra shows that
\begin{equation}
\label{analytic_case}
 \zeta(r) - \widetilde{\zeta}(r) \; =
\frac{p_a p_b \left( p_a \mathrm{e}^{-r t_a} +  p_b \mathrm{e}^{-r t_b} \right)}{2( p_a +  p_b)} \left( p_b - p_a \right) \left( \mathrm{e}^{-r t_a} - \mathrm{e}^{-r t_b} \right),
\end{equation}
from which we conclude that $\zeta(r)<\widetilde{\zeta}(r)$, and therefore $R_r<\widetilde{R}_r$, when $$ \left( p_b - p_a \right) \left( \mathrm{e}^{-r t_a} - \mathrm{e}^{-r t_b} \right)<0,$$ which we assumed for growing epidemics in \eqref{negcorH} since $p_a$ is increasing in $\kappa_a$.
This completes the counter example for growing epidemics.

If $R_\ast<1$ (i.e.\ $r<0$), the same counter example in this case leads to $R_r\ge \widetilde{R}_r> R_0$, showing that also in a declining epidemic $R_0$ and $R_r$ cannot be ordered in general.

%%%%%%%%%%%%%%%%%%%%%%%%%%%%%%%%%%%%%%%%%%%%%%%%%%%%%%%%%%%%%%%
\section{Comparison of $R_r$ and $R_{V\!L}$}
%\refstepcounter{section}
\label{app:RrRVLcomp}

In this appendix we give examples which demonstrate that, for the households model, $R_r$ and $R_{V\!L}$ cannot in general be ordered.
%This settles a point left open in Goldstein et al.~\cite{GoldsteinEtal2009}, who noted that the inequality $R_{V\!L} \ge R_r$ held in all of their numerical simulations.
We consider an SIR epidemic among a population of households, all of which have size $2$.  The infectious periods of infectives are independent, each distributed according to a random variable $T_I$ having mean $1$ and moment-generating function $\mathcal{M}_{T_I}(\theta)=\mathbb{E}[{\rm e}^{-\theta T_I}]$.  Whilst infectious, the initial infective in a household contacts locally his/her other household member at the points of a Poisson process having rate $\lambda_H$.

Note that, since $\int_0^\infty \mathbb{P}(T_I \ge t) {\rm d}t=\mathbb{E}[T_I]=1$, the mean infectivity profile of an infective is $\gtpdf(t)=\mathbb{P}(T_I \ge t)$ $(t \ge 0)$, whence $$\mathcal{M}_\gtrv(\theta)=(1-\mathcal{M}_{T_I}(\theta))/\theta.$$ Consider a single household epidemic, label the initial infective $0$ and the other household member $1$.  Let $T_I$ denote individual $0$'s infectious period and $X$ denote the time of the first local infectious contact of individual $1$ by individual $0$.  Thus, $X \sim \mathrm{Exp}(\lambda_H)$ and $0$ infects $1$ locally if and only if $X < T_I$.  Hence, the probability that $0$ infects $1$ locally is $$\mathbb{E}_{T_I}[\mathbb{P}(X<I)]=\mathbb{E}_{T_I}[1-{\rm e}^{-\lambda_H T_I}]=1-\mathcal{M}_\gtrv(\lambda_H),$$ which is also the household mean generation size $\mu_1$.  Note that, in the notation of Section~\ref{subsec:Rr}, $\gtrvt$ is distributed as $(X|X<T_I)$, whence $$\mathcal{M}_{\gtrvt}(\theta)=\frac{\lambda_H}{\lambda_H+\theta} \frac{1-\mathcal{M}_\gtrv(\lambda_H+\theta)}{1-\mathcal{M}_\gtrv(\lambda_H)}.$$  Further, since all households have size $2$, $\mathcal{L}_{\beta_H}(r) = \widetilde{\mathcal{L}}_{\beta_H}(r)$ (see the observation at the end of Section~\ref{subsec:Rr}) and, using~\eqref{rtgrowth4}, the real-time growth rate $r$ is the unique solution of $F(r)=1$, where
\begin{equation*}
F(r)=\mu_G \mathcal{M}_\gtrv(r)\left(1+\mu_1 \mathcal{M}_\gtrvt(r)\right),
\end{equation*}
which, using the above expressions for $\mu_1$ and $\mathcal{M}_\gtrvt(\theta)$, may be written as
\begin{equation}
\label{FRr}
F(r)=\mu_G \mathcal{M}_\gtrv(r)\left(1+\lambda_H \mathcal{M}_\gtrv(\lambda_H+r)\right).
\end{equation}

Suppose that $r>0$, so $R_r>1$, and that the entire population is vaccinated with a leaky vaccine having efficacy $\mathcal{E}=1-R_r^{-1}$. Then, recalling~\eqref{Rrfromr}, $1-\mathcal{E}=\mathcal{M}_\gtrv(r)$.  Hence, after vaccination,
$\mu_G$ becomes $\mathcal{M}_\gtrv(r)\mu_G$ and $\lambda_H$ becomes $\mathcal{M}_\gtrv(r)\lambda_H$, so, using~\eqref{FRr}, the post-vaccination real-time growth rate, $r_\mathcal{E}$ say, is given by the unique real solution of $F_\mathcal{E}(r_\mathcal{E})=1$,
where
\begin{equation*}
F_\mathcal{E}(r_\mathcal{E})=\mu_G \mathcal{M}_\gtrv(r) \mathcal{M}_\gtrv(r_\mathcal{E})\left[1+\mathcal{M}_\gtrv(r)\lambda_H \mathcal{M}_\gtrv\left(\mathcal{M}_\gtrv(r)\lambda_H+r_\mathcal{E}\right)\right].
\end{equation*}
Now,
\begin{equation*}
F_\mathcal{E}(0)=\mu_G \mathcal{M}_\gtrv(r)\left[1+\mathcal{M}_\gtrv(r)\lambda_H \mathcal{M}_\gtrv\left(\mathcal{M}_\gtrv(r)\lambda_H\right)\right],
\end{equation*}
so, since $F(r)=1$, it follows from~\eqref{FRr} that 
$$\sign(F_\mathcal{E}(0)-1)=\sign(G(\lambda_H,r)),$$ where
\begin{equation*}
G(\lambda_H,r)=\mathcal{M}_\gtrv(r)\mathcal{M}_\gtrv\left(\mathcal{M}_\gtrv(r)\lambda_H\right)-\mathcal{M}_\gtrv(\lambda_H+r).
\end{equation*}

Note that (i) if $G(\lambda_H,r)=0$ then $r_\mathcal{E}=0$, so the post-vaccination epidemic is critical and $R_{V\!L}=R_r$; (ii) if $G(\lambda_H,r)>0$ then $r_\mathcal{E}>0$, so the post-vaccination epidemic is supercritical and $R_{V\!L}>R_r$; and (iii) if $G(\lambda_H,r)<0$ then $r_\mathcal{E}<0$, so the post-vaccination epidemic is subcritical and $R_{V\!L}<R_r$.

Suppose that $T_I \sim \mathrm{Exp}(1)$.  Then $\mathcal{M}_\gtrv(\theta)=(1+\theta)^{-1}$ and $G(\lambda_H,r)=0$, whence $R_{V\!L}=R_r$, as noted in Section~\ref{MarkovSIRHmod}.

Suppose that $T_I$ has probability density function
$
f_{T_I}(t)=2 t {\rm e}^{-2 t}$ $(t \ge 0),
$
i.e.~$T_I$ follows a gamma distribution with parameters $\alpha = \gamma = 2$ (see \eqref{gammapdf}). Then $\mathcal{M}_{T_I}(r)=\left(\frac{2}{2+r}\right)^2$, whence $\mathcal{M}_\gtrv(r)=\frac{4+r}{(2+r)^2}$.  Lengthy algebra then yields that
\begin{align*}
G(\lambda_H,r)&=\frac{\lambda_H r\left[(4+r)^2 \lambda_H+(2+r)^2(8+r)\right]}
{\left[(2+\lambda_H+r)\left(2(2+r)^2+\lambda_H(4+r)\right)\right]^2} >0,
\end{align*}
since $r>0$ and $\lambda_H>0$.  Thus, $R_{V\!L}>R_r$.

Suppose instead that $T_I$ has probability density function
\begin{equation*}
f_{T_I}(t)=\frac{1}{3}{\rm e}^{-\frac{2}{3}t}+{\rm e}^{-2t} \quad (t \ge 0),
\end{equation*}
so $T_I$ is an equally weighted mixture of $\mathrm{Exp}(\frac{2}{3})$ and $\mathrm{Exp}(2)$.  Then
\begin{equation*}
\mathcal{M}_{T_I}(r)=\frac{1}{2+3r}+\frac{1}{2+r}
\qquad \mbox{and} \qquad
\mathcal{M}_\gtrv(r)=\frac{1}{2}\left(\frac{3}{2+3r}+\frac{1}{2+r}\right).
\end{equation*}
Lengthy algebra now yields that
\begin{equation*}
G(\lambda_H,r)=-\frac{\lambda_H r (3r+8)\left[3(4+3r)\lambda_H+(2+3r)(6+3r)\right]}
{H(\lambda_H,r)
\left(2+\lambda_H+r\right)\left(2+3(\lambda_H+r)\right)},
\end{equation*}
where
\[
H(\lambda_H,r)=
\left[2(2+r)(2+3r)+\lambda_H(4+3r)\right]\left[2(2+r)(2+3r)+3\lambda_H(4+3r)\right].
\]
Hence $G(\lambda_H,r)<0$, since
$r>0$ and $\lambda_H>0$, so now $R_{V\!L}<R_r$.  Thus $R_r$ and $R_{V\!L}$ cannot in general be ordered.

It is difficult to make general statements since there is no simple expression for $G(\lambda_H,r)$. However, note that in the above examples, $R_{V\!L}>R_r$ when the infectious period distribution is less variable than an exponential distribution and $R_{V\!L}<R_r$ when it is more variable.

%%%%%%%%%%%%%%%%%%%%%%%%%%%%%%%%%%%%%%%%%%%%%%%%%%%%%%%%%%%%%%%
\section{Infinitely long latent periods}
%\refstepcounter{section}
\label{app:inflonglat}

We consider first a Markov SEIR households epidemic model, in which the latent and infectious periods follow exponential distributions with rates $\delta$ and $1$, respectively, and whilst infectious a typical infective makes global contacts at overall rate $\mu_G$ and contacts any given susceptible in his/her household at rate $\lambda_H$.  We study the limit of $R_r$ as $\delta^{-1} \to \infty$, so the latent periods become infinitely long with all other parameters held fixed, and compare that limit with $R_0$ and $R_V$, which are both independent of $\delta$. We restrict attention to a growing epidemic, i.e.~when $R_0>1$.

For fixed $\delta$, we may linearly rescale time by setting $t'=\delta t$ so that in the rescaled process the latent period is exponentially distributed with mean one.  In the limit as $\delta^{-1} \to \infty$, in the rescaled process the infectious period of an infective is reduced to a single point in time.  Note that the exponential-growth associated reproduction number $R_r$ and the mean generation sizes $\mu_1,\mu_2,\cdots, \mu_{n_H-1}$ are each invariant to this rescaling ($\mu_1,\mu_2,\cdots, \mu_{n_H-1}$ are also invariant to $\delta$).  A similar rescaling is used in the proof of Lemma B.3.1~in Goldstein et al.~\cite{GoldsteinEtal2009} but the argument presented there assumes a constant latent period and hence does not apply to the Markov SEIR model.

Consider the limit of the rescaled process as $\delta^{-1} \to \infty$.  In this process any infective makes all of his/her infectious contacts at the same time, i.e.~at the end of his/her latent period, and the latent periods of distinct infectives are independent $\mathrm{Exp}(1)$ random variables. It follows that the infectious contact interval $\gtrv \sim \mathrm{Exp}(1)$, whence $\mathcal{M}_\gtrv(r)=(1+r)^{-1}$.
Suppose that all households have size $n$ and that $n \le 3$.  Then in the epidemic graph $\mathcal{G}^{(n)}$ (see Section~\ref{Hmodel}), for $k=1,2$, if an individual, $i$ say, belongs to rank generation $k$, there is precisely one chain of directed edges from the initial infective to individual $i$ that has length $k$.  It follows that
the real-time growth rate $r$ satisfies $\mathcal{L}_{\beta_H}^{(0)}(r)=1$, where $\mathcal{L}_{\beta_H}^{(0)}(r)$ is defined at~\eqref{rtgrowth2}.  Note also that for this limiting process $\gtrvt\overset{D}{=}\gtrv$, where $\overset{D}{=}$ denotes equal in distribution.
It then follows that, for the limiting process, $R_r=\widetilde{R}_r=R_0$.  This conclusion holds also in the case of unequal household sizes, provided the maximum household size $n_H$ is at most $3$.

Suppose now that all households have size $n=4$.  Then in the epidemic graph $\mathcal{G}^{(n)}$, when $k=1,3$, it is still true that for any individual, $i$ say, belonging to rank generation $k$, there is precisely one chain of directed edges from the initial infective to individual $i$ that has length $k$, but when $k=2$ that is no longer the case.  Recall that the individuals in $\mathcal{G}^{(4)}$ are labelled $0,1,2,3$, where $0$ is the initial infective.  Suppose that individual $0$ contacts individuals $1$ and $2$, but not individual $3$, and that both individuals $1$ and $2$ contact individual $3$.  Thus there are two distinct paths of length $2$ from individual $0$ to individual $3$.
For $i=0,1,2,3$, let $T_E^{(i)}$ be the latent period of individual $i$ (in the limiting rescaled process), so $T_E^{(i)} \sim \mathrm{Exp}(1)$.  Then individuals $1$ and $2$ are both infected at time $T_E^{(0)}$ and, for $i=1,2$, individual $i$ attempts to infect individual $3$ at time $T_E^{(0)}+T_E^{(i)}$, so individual $3$ is infected at time $T_E^{(0)}+T'_E$, where $T'_E=\min(T_E^{(1)},T_E^{(2)})$.
Note that $T_E^{(0)}$ and $T'_E$ are independent, and $T'_E \sim \mathrm{Exp}(2)$, so $\mathcal{M}_{T'_E}(r)=2/(2+r)$.

Let the mean generation sizes $\mu_0,\mu_1,\mu_2$ and $\mu_3$ be defined as previously but now write $\mu_2=\mu_{21}+\mu_{22}$, where, $\mu_{2j}$ is the mean number of generation $2$ infectives that have precisely $j$ paths of length $2$ from the initial infective to them. Then,
\begin{eqnarray}
\label{lbetahinflat}
\mathcal{L}_{\beta_H}(r)=\mu_G \mathcal{M}_\gtrv(r) \left[ \phantom{_{_|}} 1+\mu_1 \mathcal{M}_\gtrv(r)+\mu_{21} (\mathcal{M}_\gtrv(r))^2 \right. & \nonumber \\
+ \;\left. \mu_{22} \mathcal{M}_\gtrv(r)\mathcal{M}_{T'_E}(r)+\mu_3(\mathcal{M}_\gtrv(r))^3 \phantom{_{_|}} \right].
\end{eqnarray}
Now $\mathcal{M}_{T'_E}(\theta)\ge \mathcal{M}_\gtrv(\theta)$, for $\theta>0$, whence $r>r^{(0)}$, where $r^{(0)}$ solves $\mathcal{L}_{\beta_H}^{(0)}(r)=1$, and it follows that $R_r>R_0$.  Again, $\widetilde{R}_r=R_0$, since $\gtrvt\overset{D}{=}\gtrv$.  Similar arguments show that $R_r>\widetilde{R}_r=R_0$ for any population with $n_H \ge 4$.

We now compare $R_r$ with $R_V$.  Recall from Theorem~\ref{Hcomp} that $R_0=R_V$ when $n_H \le 3$, whence $R_r=R_V$.  Thus suppose that all households have size $4$ and that a fraction $1-R_r^{-1}$ of the population is vaccinated with a perfect vaccine.  Then, using~\eqref{Rrfromr}, the probability that a given individual is not vaccinated is $R_r^{-1}=\mathcal{M}_\gtrv(r)$.  Hence, after vaccination, $\mu_G$ is reduced to $\mu_G^V=\mathcal{M}_\gtrv(r)\mu_G$ and, for $k=1,3$, $\mu_k$ is reduced to $\mu_k^V=(\mathcal{M}_\gtrv(r))^k\mu_k$, as
prior to vaccination any individual in generation $1$ or $3$ has precisely one chain of the appropriate length linking them to the initial infective.  Consider the situation described above, in which, prior to vaccination, individual $0$ contacts individuals $1$ and $2$, but not individual $3$, and that both individuals $1$ and $2$ contact individual $3$.  Individual $3$ still has two chains linking them to the initial infective after vaccination if and only if individuals $1,2$ and $3$ are not vaccinated, which happens with probability $(\mathcal{M}_\gtrv(r))^3$.  (Note that individual $0$ is assumed to be unvaccinated, as $\mu_G$ is reduced to $\mu_G^V$.)  Thus, $\mu_{22}^V=(\mathcal{M}_\gtrv(r))^3\mu_{22}$.  In the above situation, individual $3$ has precisely one chain linking them to individual $0$ after vaccination if and only if individual $3$ and exactly one of individuals $1$ and $2$ are not vaccinated, which occurs with probability $2(\mathcal{M}_\gtrv(r))^2(1-\mathcal{M}_\gtrv(r))$.  It follows that $\mu_{21}^V= (\mathcal{M}_\gtrv(r))^2\mu_{21}+2(\mathcal{M}_\gtrv(r))^2(1-\mathcal{M}_\gtrv(r))\mu_{22}$.  Hence, after vaccination, the growth rate, $r_V$ say, satisfies $\mathcal{L}_{\beta_H}^V(r_V)=1$, where
\begin{eqnarray*}
\mathcal{L}_{\beta_H}^V(r_V) & = & \mu_G \mathcal{M}_\gtrv(r_V) \, \left[\phantom{_{_|}} 1+\mu_1^V \mathcal{M}_\gtrv(r_V)+\mu_{21}^V(\mathcal{M}_\gtrv(r_V))^2 \right. \\
& &\left. +\;\mu_{22}^V \mathcal{M}_\gtrv(r_V)\mathcal{M}_{T'_E}(r_V)+\mu_3(\mathcal{M}_\gtrv(r_V))^3\;\right].
\end{eqnarray*}

Now $\mathcal{M}_\gtrv(0)=\mathcal{M}_{T'_E}(0)=1$ and $\mathcal{L}_{\beta_H}(r)=1$, so
\begin{align}
\mathcal{L}_{\beta_H}^V(0)&=\mu_G^V(1+\mu_1^V+\mu_{21}^V+\mu_{22}^V+\mu_3^V)\nonumber\\
&=\mu_G \mathcal{M}_\gtrv(r)\,\left[ \phantom{_{_|}} 1+\mu_1 \mathcal{M}_\gtrv(r)+\mu_{21}(\mathcal{M}_\gtrv(r))^2\; \right. \nonumber\\
&\quad+\left. \mu_{22}\left(2(\mathcal{M}_\gtrv(r))^2-(\mathcal{M}_\gtrv(r))^3\right)+\mu_3(\mathcal{M}_\gtrv(r))^3 \phantom{_{_|}}\right]\nonumber\\
&=\mathcal{L}_{\beta_H^{(1)}}(r)+\mu_G (\mathcal{M}_\gtrv(r))^2\mu_{22}\left[2\mathcal{M}_\gtrv(r)-(\mathcal{M}_\gtrv(r))^2-\mathcal{M}_{T'_E}(r)\right]\nonumber\\
&=1+\mu_G (\mathcal{M}_\gtrv(r))^2\mu_{22}\frac{r}{(1+r)^2(2+r)}\nonumber\\
&>1,
\label{LVbetaH1(0)}
\end{align}
since $r>0$. Hence $r_V>0$, since $\mathcal{L}_{\beta_H}^V$ is a decreasing function and $\mathcal{L}_{\beta_H}^V(r_V)=1$.  Thus vaccinating a fraction $1-R_r^{-1}$ of the population is insufficient to prevent a major outbreak, so $R_r<R_V$.  Numerical evidence suggests that the same conclusion holds whenever $n_H \ge 4$.  However, analytical progress is more difficult when $n_H>4$ since then in the limiting rescaled process it is no longer the case that an individual's rank and true generations necessarily coincide.

Consider now the households-workplaces version of the above Markov SEIR model.  Thus, whilst infectious, a typical infective makes global contacts at overall rate $\mu_G$, contacts any given susceptible in his/her household at rate $\lambda_H$ and any given susceptible in his/her workplace at rate $\lambda_W$.  We use the same rescaling as in the households model and study the limit of the rescaled process as $\delta^{-1} \to \infty$. As previously, we restrict attention to a growing epidemic. Then, as at~\eqref{HWrdefeq}, the real-time growth rate $r$ is given by the unique real solution of $F(r)=1$, where
\begin{equation*}
F(r)=\mu_G \mathcal{M}_\gtrv(r) (\mathcal{L}_{\beta_H}(r)+1)(\mathcal{L}_{\beta_W}(r)+1)+\mathcal{L}_{\beta_H}(r)\mathcal{L}_{\beta_W}(r),
\end{equation*}
and $\mathcal{M}_\gtrv(r)=(1+r)^{-1}$.  The same arguments as used for the households model yield that, for $r \ge 0$,
\begin{equation*}
\mathcal{L}_{\beta_H}(r)\overset{3}{\geq} \mathcal{L}_{\beta_H^{(2)}}(r) = \mathcal{L}_{\beta_H^{(3)}}(r) \quad \mbox{and} \quad \mathcal{L}_{\beta_W}(r)\overset{3}{\geq} \mathcal{L}_{\beta_W^{(2)}}(r) = \mathcal{L}_{\beta_W^{(3)}}(r),
\end{equation*}
whence
\begin{equation}
\label{RrR0HWcomplong}
R_r \overset{3}{\geq} \widetilde{R}_r = R_0.
\end{equation}

Turn now to the comparison of $R_r$ and $R_V$. From Theorem~\ref{Theorem 5.2} and~\eqref{RrR0HWcomplong}, $R_r=R_V$ when $n_H \le 3$ and $n_W \le 3$.  Suppose that
$n_H=n_W=4$ and that a fraction $1-R_r^{-1}$ ($=1-\mathcal{M}_\gtrv(r)$, where $r$ is the real-time growth rate without vaccination) of the population is vaccinated with a perfect vaccine.
Prior to vaccination,
\begin{equation*}
\mathcal{L}_{\beta_H}(r)=\mu_1^H \mathcal{M}_\gtrv(r)+\mu_{21}^H (\mathcal{M}_\gtrv(r))^2+\mu_{22}^H \mathcal{M}_\gtrv(r) \mathcal{M}_{T'_E}(r)+\mu_3^H (\mathcal{M}_\gtrv(r))^3,
\end{equation*}
where $\mu_1^H, \mu_{21}^H, \mu_{22}^H$ and $\mu_3^H$ are the mean generation sizes for a typical single-household epidemic ($\mu_2^H$ is decomposed into $\mu_{21}^H+\mu_{22}^H$ as above~\eqref{lbetahinflat}), and $\mathcal{L}_{\beta_W}(r)$ is given by the same formula with $\mu_1^H, \mu_{21}^H, \mu_{22}^H$ and $\mu_3^H$ replaced by $\mu_1^W, \mu_{21}^W, \mu_{22}^W$ and $\mu_3^W$, respectively.
After vaccination, the real-time growth rate, $r_V$ say, is given by the unique real solution of $F^V(r_V)=1$, where
\begin{equation*}
F^V(r_V)=\mu_G^V \mathcal{M}_\gtrv(r_V) (\mathcal{L}_{\beta_H}^V(r_V)+1)(\mathcal{L}_{\beta_W}^V(r_V)+1)+\mathcal{L}_{\beta_H}^V(r_V)\mathcal{L}_{\beta_W}^V(r_V),
\end{equation*}
where $\mu_G^V=\mathcal{M}_\gtrv(r) \mu_G$ and, for example,
\begin{align*}
\mathcal{L}_{\beta_H}^V(r_V)=&\mu_1^{HV} \mathcal{M}_\gtrv(r_V)+\mu_{21}^{HV} (\mathcal{M}_\gtrv(r_V))^2+\mu_{22}^{HV} \mathcal{M}_\gtrv(r_V) \mathcal{M}_{T'_E}(r_V)\\
&+\mu_3^{HV} (\mathcal{M}_\gtrv(r_V))^3,
\end{align*}
with
\begin{eqnarray*}
\mu_1^{HV} & = & \mathcal{M}_\gtrv(r)\mu_1^H,\\
\mu_{21}^{HV}& = & (\mathcal{M}_\gtrv(r))^2\left(\mu_{21}^H+2(1-\mathcal{M}_\gtrv(r))\mu_{21}^H\right), \\
\mu_{22}^{HV} & = & (\mathcal{M}_\gtrv(r))^3\mu_{22}^{H},\\
\mu_3^{HV} & = & (\mathcal{M}_\gtrv(r))^3\mu_3^H.
\end{eqnarray*}
Now,
\begin{align*}
\mathcal{L}_{\beta_H}^V(0)&=\mu_1^{HV}+\mu_{21}^{HV}+\mu_{22}^{HV}+\mu_3^{HV}\\
&=\mathcal{M}_\gtrv(r)\mu_1^H+(\mathcal{M}_\gtrv(r))^2\mu_{21}^H+(\mathcal{M}_\gtrv(r))^2(2-\mathcal{M}_\gtrv(r))\mu_{22}^H\\
&\quad+(\mathcal{M}_\gtrv(r))^3\mu_{3}^H\\
&=\mathcal{L}_{\beta_H}(r)+\mathcal{M}_\gtrv(r)\left[2\mathcal{M}_\gtrv(r)-(\mathcal{M}_\gtrv(r))^2-\mathcal{M}_{T'_E}(r)\right]\mu_{22}^H\\
&>\mathcal{L}_{\beta_H}(r)
\end{align*}
(cf.~\eqref{LVbetaH1(0)}) and, similarly, $\mathcal{L}_{\beta_W}^V(0)>\mathcal{L}_{\beta_W}^V(r)$.  It follows that $F^V(0)>F(r)=1$. Thus $r_V>0$, since $F^V$ is a decreasing function, whence $R_r<R_V$.  It is easily seen that the same conclusion holds if $n_H=4$ and $n_W \le 3$ or $n_W=4$ and $n_H \le 3$.  We conjecture that it also holds whenever $\max(n_H,n_W)\ge 4$.

%%%%%%%%%%%%%%%%%%%%%%%%%%%%%%%%%%%%%%%%%%%%%%%%%%%%%%%%%%%%%%%
\section{Estimating $r$ for households model with non-random infectivity profile}
%\refstepcounter{section}
\label{app:sellke}

In this appendix we describe the simulation-based method used in Section \ref{nonrandHM} for determining the real-time growth rate for a households model with a non-random infectivity profile $\mathcal{I}(t)=\gtpdf(t)$ $(t \ge 0)$, where $\int_0^\infty \gtpdf(t) {\rm d}t=1$.  For ease of exposition, we assume that all households have the same size $n$.  The extension to unequal household sizes is straightforward.  Recall that $t$ time units after he/she was infected, an infectious individual makes global contacts at overall rate $\mu_G \gtpdf(t)$ and, additionally, he/she contacts any given susceptible in his/her household at rate $\lambda_H \gtpdf(t)$, where we have suppressed the dependence of $\lambda_H$ on $n$.  For ease of exposition, we assume that all households have the same size $n$.  Our aim is to estimate the Laplace transform $\mathcal{L}_{\beta_H}(\theta)=\int_0^\infty \beta_H(t) {\rm e}^{-\theta t} {\rm d}t$ of the global infectivity profile of a household.

Consider $n_{sim}$ independent simulations of a single-household epidemic, with initially one infective and $n-1$ susceptibles, under the above disease dynamics.  For $s=1,2,\cdots,n_{sim}$, let $T_{s,0}=0$, $Z_s$ denote the size of the $s$th simulated epidemic, not counting the initial infective, and $T_{s,1},T_{s,2},\cdots,T_{s,Z_s}$ denote the corresponding infection times, assuming that the epidemic starts at time $t=0$.  Then the average global infectivity profile of the $n_{sim}$ epidemics is
\begin{equation*}
\hat{\beta}_H^{n_{sim}}(t)=\frac{1}{n_{sim}}\mu_G \sum_{s=1}^{n_{sim}}\sum_{i=0}^{Z_s} \gtpdf(t-T_{s,i}) \quad(t\ge 0),
\end{equation*}
where $\gtpdf(t)=0$ if $t<0$, whence an unbiased estimator of $\mathcal{L}_{\beta_H}(\theta)$ is
\begin{equation*}
\hat{\mathcal{L}}^{n_{sim}}_{\beta_H}(\theta)=\frac{1}{n_{sim}}\mu_G \sum_{s=1}^{n_{sim}}\sum_{i=0}^{Z_s} {\rm e}^{-\theta T_{s,i}} \mathcal{M}_\gtrv(\theta) \quad (\theta> \theta_0),
\end{equation*}
where $\theta_0=\inf\{\theta: \mathcal{M}_\gtrv(\theta) < \infty\}$.  The real-time growth rate is
estimated by solving $\hat{\mathcal{L}}^{n_{sim}}_{\beta_H}(r)=1$ numerically, yielding $\hat{r}_{n_{sim}}$ say.  Application of the strong law of large numbers yields that, for any $\theta > \theta_0$, $\hat{\mathcal{L}}^{n_{sim}}_{\beta_H}(\theta) \to \mathcal{L}_{\beta_H}(\theta)$ almost surely as $n_{sim} \to \infty$.  This may be strengthened to (c.f.~the Glivenko-Cantelli theorem)
$\max_{\theta>\theta_1}|\hat{\mathcal{L}}^{n_{sim}}_{\beta_H}(\theta)-\mathcal{L}_{\beta_H}(\theta)| \to 0$ almost surely as $n_{sim} \to \infty$, for any $\theta_1>\theta_0$. It follows that $\hat{r}_{n_{sim}} \to r$ almost surely as $n_{sim} \to \infty$.

Suppressing the suffix $s$, to simulate the size $Z$ and the corresponding infection times $T_1,T_2,\cdots,T_{n-1}$ of a single-household epidemic we use the following generalisation of the construction of Sellke~\cite{Sellke1983}.  Label the individuals in the household $0,1,\cdots,n-1$, where individual $0$ is the initial infective.  Let $Q_1,Q_2,\cdots,Q_{n-1}$ be independent and identically distributed $\mathrm{Exp}(\lambda_H)$ random variables.  The random variable $Q_i$ denotes individual $i$'s critical exposure to infection.  Let $Q_{(1)} \le Q_{(2)} \le \cdots \le Q_{(n-1)}$ be the random variables $Q_1,Q_2,\cdots,Q_{n-1}$ arranged in increasing order, i.e.~the order statistics of $Q_1,Q_2,\cdots,Q_{n-1}$. Note that, exploiting the lack-of-memory property of the exponential distribution, the random variables $Q_{(1)}, Q_{(2)}-Q_{(1)}, Q_{(3)}-Q_{(2)},\cdots, Q_{(n-1)}-Q_{(n-2)}$ are mutually independent, $Q_{(1)}\sim {\rm Exp}\left( (n-1)\lambda_H \right)$ and $Q_{(i)}-Q_{(i-1)}\sim {\rm Exp}\left( (n-i)\lambda_H \right) (i=2,3,\cdots,n-1)$.

The epidemic is constructed as follows.  The initial infective becomes infected at time $T_0=0$.  For $t \ge 0$, at time $t$, each individual accumulates exposure to infection from the initial infective at rate $\gtpdf(t)$.  For $i=1,2,\cdots,n-1$, individual $i$ becomes infected if and when his/her accumulated exposure to infection reaches $Q_i$.  Thus if $Q_{(1)}>1$ then no susceptible is infected in the epidemic (recall that $\int_0^\infty \gtpdf(t) {\rm d}t=1$).  Suppose that
$Q_{(1)}<1$ (note that $\mathbb{P}(Q_{(1)}=1)=0$ since $Q_{(1)}$ is a continuous random variable).  Then the first infection takes place at time $T_1$ given by $\int_0^{T_1} \gtpdf(t){\rm d}t= Q_{(1)}$.  For $t > T_1$, at time $t$, each remaining susceptible accumulates exposure to infection at rate $\gtpdf(t)$ from the initial infective and at rate $\gtpdf(t-T_1)$ from the individual who was infected at time $T_1$.  Thus, if $Q_{(2)}>2$, there is no further spread of infection, whilst if  $Q_{(2)}<2$ the next infection occurs at time $T_2$ satisfying  $\int_0^{T_2} \gtpdf(t)+ \gtpdf(t-T_1){\rm d}t= Q_{(2)}$.  The construction of the epidemic continues in the obvious fashion.  It is readily seen that $Z=\min(z:Q_{(z+1)} > z+1)$ and, for $i=1,2,\cdots,Z$, the $i$th infection time $T_i$ is given implicitly by
\begin{equation*}
\sum_{j=0}^{i-1} \int_{T_j}^{T_i} \gtpdf(t-T_j) {\rm d}t= Q_{(i)}.
\end{equation*}
Note that for the example in Section~\ref{nonrandHM} the infections times are easily simulated using MATLAB since the substitution $t'=\gamma t$ converts $\int_{T_j}^{T_i} \gtpdf(t-T_j) {\rm d}t$ into an incomplete gamma function.

%%%%%%%%%%%%%%%%%%%%%%%%%%%%%%%%%%%%%%%%%%%%%%%%%%%%%%%%%%%%%%%%%%%%%%%%%%%%%%%%%%%%%%%%%%%%%%%%%%%%%%%%%%%%%%%%%%%%%%%%%%%

%%%%%%%%%%%%%%%%%%%%%%%%%%%%%%%%%%%%%%%%%%%%%%%%%%%%%%%%%%%%%%%%%%%%%%%%%%%%%%%

%\section*{References}

%% If you have bibdatabase file and want bibtex to generate the
%% bibitems, please use
%%
%\bibliographystyle{elsarticle-num}
%\bibliography{R0H2.bib}

%% else use the following coding to input the bibitems directly in the
%% TeX file.

%\begin{thebibliography}{00}
%

%
%\end{thebibliography}

\end{document}